\newcommand{\lowerBoundWs}{\eta}
\newcommand{\bHat}{\widehat{b}}
\newcommand{\gHat}{\widehat{g}}
\newcommand{\resid}{e}
\newcommand{\prederror}{\tilde{\resid}_{i(i)}}
\newcommand{\prederrorAtj}{\tilde{\resid}_{j(i)}}
\newcommand{\stresid}{r}
\newcommand{\stprederror}{\tilde{\stresid}_{i(i)}}
\newcommand{\stprederrorAtj}{\tilde{\stresid}_{j(i)}}
\newcommand{\rTilde}{\widetilde{r}}
\newcommand{\varJack}{\textrm{varJACK}}
\newcommand{\epsdist}{G}
\newcommand{\SM}{Supplementary Text, }
\newcommand{\multGauss}{\Gamma}
\newcommand{\gammaHat}{\hat{\gamma}}
\date{First submitted: February 4, 2015\\ This version: October 14, 2015}
\title{Can we trust the bootstrap in high-dimension?}
\begin{document}

\def\spacingset#1{\renewcommand{\baselinestretch}
{#1}\small\normalsize} \spacingset{1}
\title{\bf Can we trust the bootstrap in high-dimension?}
  \author{Noureddine El Karoui\thanks{
    The authors gratefully acknowledge grants NSF DMS-1026441, NSF DMS-0847647 (CAREER) and NSF DMS-1510172. They would also like to thank Peter Bickel and Jorge Banuelos for discussions.}\hspace{.2cm}\\
    and \\
    Elizabeth Purdom \\
    Department of Statistics, University of California, Berkeley}
\maketitle

\begin{abstract}
We consider the performance of the bootstrap in high-dimensions for the setting of linear regression, where $p<n$ but $p/n$ is not close to zero. We consider ordinary least-squares as well as robust regression methods and adopt a minimalist performance requirement: can the bootstrap give us good confidence intervals for a single coordinate of $\beta$? (where $\beta$ is the true regression vector).

We show through a mix of numerical and theoretical work that the bootstrap is fraught with problems. Both of the most commonly used methods of bootstrapping for regression -- residual bootstrap and pairs bootstrap -- give very poor inference on $\beta$ as the ratio $p/n$ grows. We find that the residuals bootstrap tend to give anti-conservative estimates (inflated Type I error), while the pairs bootstrap gives very conservative estimates (severe loss of power) as the ratio $p/n$ grows. We also show that the jackknife resampling technique for estimating the variance of $\hat{\beta}$ severely overestimates the variance in high dimensions.  

We contribute alternative bootstrap procedures based on our theoretical results that mitigate these problems. However, the corrections depend on assumptions regarding the underlying data-generation model, suggesting that in high-dimensions it may be difficult to have universal, robust bootstrapping techniques.

\end{abstract}

\noindent
{\it Keywords:}  Resampling, high-dimensional inference, bootstrap, random matrices
\vfill

\newpage
\spacingset{1.45} 

\section{Introduction}

The bootstrap \citep{EfronBootstrap1979AoS} is a ubiquitous tool in applied statistics, allowing for inference when very little is known about the properties of the data-generating distribution. The bootstrap is a powerful tool in applied settings because it does not make the strong assumptions common to classical statistical theory regarding this data-generating distribution. Instead, the bootstrap resamples the observed data to create an estimate, $\hat{F}$, of the unknown data-generating distribution, $F$. $\hat{F}$ then forms the basis of further inference.

Since its introduction, a large amount of research has explored the theoretical properties of the bootstrap, improvements for estimating $F$ under different scenarios, and how to most effectively estimate different  quantities from $\hat{F}$ (see the pioneering \cite{BickelFreedmanTheoryBootAoS81} for instance and many many more references in the book-length review of \cite{DavisonHinkley97}, as well as \cite{vandervaart} for a short  summary of the modern point of view on these questions). Other resampling techniques exist of course, such as subsampling, m-out-of-n bootstrap, and jackknifing, and have been studied and much discussed (see \cite{EfronBook82}, \cite{HallBootstrapAndEdgeworthExpansion92}, \cite{PolitisRomanoWolfSubsampling99}, \cite{BickelGoetzevanZwetMoutOfN1997}, and \cite{EfronTibshirani} for a practical introduction). 

An important limitation for the bootstrap is the quality of $\hat{F}$. The standard bootstrap estimate of $F$ based on the empirical distribution of the data may be a poor estimate when the data has a non-trivial dependency structure,  when the quantity being estimated, such as a quantile, is sensitive to the discreteness of $\hat{F}$, or when the functionals of interest are not smooth (see e.g \cite{BickelFreedmanTheoryBootAoS81} for a classic reference, as well as \cite{BeranSrivastava85} or \cite{EatonTyler91} in the context of multivariate statistics). 

An area that has received less attention is the performance of the bootstrap in high dimensions and this is the focus of our work -- in particular in the setting of standard linear models where data $y_i$ are drawn from the linear model
$$
 \forall i, y_i=\beta'X_i + \epsilon_i\;, 1\leq i \leq n\;, \; \text{ where } X_i \in \mathbb{R}^p\;.
$$
We are interested in the bootstrap or resampling properties of the estimator defined as 
$$
\betaHat_\rho=\argmin_{b \in \mathbb{R}^p} \sum_{i=1}^n \rho(y_i-X_i\trsp b)\;, \text{ where } \rho \text{ is a convex function.}
$$

We consider the two standard methods for resampling to create a bootstrap distribution in this setting. The first is \emph{pairs resampling}, where bootstrap samples are drawn from the empirical distribution of the pairs $(y_i,X_i)$. The second resampling method is \emph{residual resampling}, where the bootstrapped data consists of $y_i^*=\betaHat'X_i + \hat{\epsilon}^*_i$, where $\hat{\epsilon}^*_i$ is drawn from the empirical distribution of the estimated residuals, $\resid_i$. We also consider the jackknife,  a resampling method focused specifically on estimating the variance of functionals of $\betaHat$. These three  methods are extremely flexible for linear models regardless of the method of fitting $\beta$ or the error distribution of the $\epsilon_i.$

\paragraph{The high dimensional setting: \bm{$p/n\tendsto \kappa \in (0,1)$}} In this work we call a high-dimensional setting one  where the number of predictors, $p$, is of the same order of magnitude as the number of observations, $n$, formalized mathematically by assuming that $p/n\tendsto \kappa \in (0,1)$. 
Several reasons motivate our theoretical study in this regime. The asymptotic behavior of the estimate $\betaHat_\rho$ is known to depend heavily on whether one makes the classical theoretical assumption that $p/n\tendsto 0$ or instead assumes $p/n\tendsto \kappa \in (0,1)$ (see Section \ref{sec:background} and \SM \ref{supp:Reminders} and references therein). But from the standpoint of practical usage on moderate-sized datasets  (i.e $n$ and $p$ both moderately sized with $p<n$), it is not always obvious which assumption is justified.  We think that working in the high-dimensional regime of $p/n\tendsto \kappa \in (0,1)$ captures better the complexity encountered even in reasonably low-dimensional practice than using the classical assumption $p/n\tendsto 0$. In fact, asymptotic predictions based on the high-dimensional assumption can work surprisingly well in very low-dimension (see \cite{imj}). Furthermore, in these high-dimensional settings -- where much is still unknown theoretically -- the bootstrap is a natural and compelling alternative to asymptotic analysis. 

\paragraph{Defining success: accurate inference on $\beta_1$} \label{review:WhyBeta1} The common theoretical definition of whether the bootstrap ``works" is that the bootstrap distribution of the entire bootstrap estimate $\betaHat^*$ converges conditionally almost surely to the sampling distribution of the estimator $\betaHat$ (see \cite{vandervaart} for instance). The work of \cite{BickelFreedmanHighDBoot83} on the residual bootstrap for least squares regression, which we discuss in the background section \ref{sec:background}, shows that this theoretical requirement is not fulfilled even for the simple problem of least squares regression. 

In this paper, we choose to focus only on accurate inference for the projection of our parameter on a pre-specified direction $\upsilon$. More specifically, we concentrate only on whether the bootstrap  gives accurate confidence intervals for $\upsilon'\beta$. We think that this is the absolute minimal requirement we can ask of a bootstrap inferential method, as well as one that is meaningful from an applied statistics standpoint. This is of course a much less stringent requirement than doing well on complicated functionals of the whole parameter vector, which is the implicit demand of standard definitions of bootstrap success. For this reason, we focus throughout the exposition on inference for $\beta_1$ (the first element of $\beta$) as an example of a pre-defined direction of interest (where $\beta_1$ corresponds to choosing $\upsilon=e_1$, the first canonical basis vector). 

We note that considering the asymptotic behavior of $\upsilon'\beta$ as $p/n\tendsto \kappa\in (0,1)$ implies that $\upsilon=\upsilon(p)$ changes with $p$. By ``pre-defined'' we will mean simply a deterministic sequence of directions $\upsilon(p)$. We will continue to suppress the dependence on $p$ in writing $\upsilon$ in what follows for the sake of clarity. \label{review:predefinedContrast}

\subsection{Organization and main results of the paper}

In Section \ref{sec:ResidBoot} we demonstrate that in high dimensions residual-bootstrap resampling results in extremely poor inference on the coordinates of $\beta_\rho$ with error rates much higher than the reported Type I error. We show that the error in inference based on residual bootstrap resampling is due to the fact that the distribution of the residuals $\resid_i$ are a poor estimate of the distribution of $\epsilon_i$; we further illustrate that common methods of standardizing the $\resid_i$ do not resolve the problem for general $\rho$. We propose two new methods of residual resampling, including one based on scaled leave-one-out predicted errors that seems to perform better than the other one in our simulations. We also provide some theoretical results for the behavior of this method as $p/n \tendsto 1$. 

In Section \ref{sec:PairsBoot} we examine \textit{pairs-bootstrap resampling} and show that confidence intervals based on bootstrapping the pairs also perform very poorly. Unlike in the residual-bootstrap case discussed in Section \ref{sec:ResidBoot}, the confidence intervals obtained from the pairs-bootstrap are instead conservative to the point of being non-informative. This results in a dramatic loss of power. We prove in the case of $L_2$ loss, i.e $\rho(x)=x^2$, that the variance of the bootstrapped $v\trsp \betaHat^*$ is greater than that of $v\trsp \betaHat$, leading to the overly conservative performance we see in simulations. We demonstrate that a different resampling scheme we propose can alleviate the problems to a certain extent, but we also highlight the practical limitations in such a strategy, since it relies heavily on having strong knowledge about the data-generating model. 

In Section \ref{sec:jackknife}, we discuss another resampling scheme, the jackknife. We focus on the jackknife estimate of variance and show that it has similarly poor behavior in high dimensions. In the case of $L_2$ loss with Gaussian design matrices, we further prove  that the jackknife estimator over estimates the variance of our estimator by a factor of $1/(1-p/n)$; we briefly mention corrections for other losses.

We rely on simulation results to demonstrate the practical impact of the failure of the bootstrap. The settings for our simulations and corresponding theoretical analyses are idealized, without many of the common problems of heteroskedasticity, dependency, outliers and so forth that are known to be a problem for robust bootstrapping. This is intentional, since even these idealized settings are sufficient to demonstrate that the standard bootstrap methods have poor performance. For brevity, we give only brief descriptions of the simulations in what follows; detailed descriptions can be found in Supplementary Text, Section \ref{supp:Simulations}.

Similarly, we focus on the basic implementations of the bootstrap for linear models. While there are many proposed alternatives -- often for specific loss functions or types of data -- the standard methods we study are most commonly used  and recommended  in practice. Furthermore, to our knowledge none of the alternative bootstrap methods we have seen specifically address the underlying theoretical problems that appear in high dimensions and therefore are likely to suffer from the same fate as standard methods. We have also tried more complicated ways to build confidence intervals (e.g. bias correction methods), but have found their performance  to be erratic in high-dimension.

We first give some background regarding the bootstrap and estimation of linear models in high dimensions before presenting our new results. 

\subsection{Background: Inference using the Bootstrap}\label{sec:background}
We consider the setting 
$
y_i=\beta'X_i + \epsilon_i,
$
where $E(\epsilon_i)=0$ and $\var{\epsilon_i}=\sigma^2_\eps$. $\beta$ is estimated as minimizing the average loss,
\begin{equation}\label{eq:loss}
\betaHat_\rho=\argmin_{b \in \mathbb{R}^p} \sum_{i=1}^n \rho(y_i-X_i'b), 
	\end{equation} 
where $\rho$ defines the loss function for a single observation. $\rho$ is assumed to be convex in all the paper. Common choices are $\rho(x)=x^2$, i.e least-squares, $\rho(x)=|x|$, which defines $L_1$ regression, or $\text{Huber}_k$ loss where $\rho(x)=(x^2/2) 1_{|x|<k}+(k|x|-k^2/2) 1_{|x|\geq k}$. 

 Bootstrap methods are used in order to estimate the distribution of the estimate $\betaHat_\rho$ under the true data-generating distribution, $F$. The bootstrap estimates this distribution with the distribution obtained when  the data is drawn from an estimate $\hat{F}$ of $F$. Following standard convention, we designate this bootstrapped estimator  $\betaHat^*_\rho$ to note that this is an estimate of $\beta$ using loss function $\rho$ when the data-generating distribution is known to be exactly equal to $\hat{F}$. Since $\hat{F}$ is completely specified, we can in principle exactly calculate the distribution of $\betaHat^*_\rho$ and use it as an approximation of the distribution of $\betaHat_\rho$ under $F$. In practice, we simulate $B$ independent draws of size $n$ from the distribution $\hat{F}$ and perform inference based on the empirical distribution of  $\betaHat^{*b}_\rho$, $b=1,\ldots, B$.

In bootstrap inference for the linear model, there are two common methods for resampling, which results in different estimates $\hat{F}$. In the first method, called the residual bootstrap,  $\hat{F}$ is an estimate of the conditional distribution of $y_i$ given $\beta$ and $X_i$. In this case, the corresponding resampling method consists of resampling $\epsilon_i^*$ from an estimate of the distribution of $\epsilon$ and forming data $y_i^*=X_i'\betaHat+\epsilon_i^*$, from which $\betaHat_\rho^*$ is computed. This method of bootstrapping assumes that the linear model is correct for the mean of $y$ (i.e. that $\Exp{y_i}=X_i\trsp \beta$); it is also assuming fixed $X_i$ design vectors because the sampling is conditional on the $X_i$. In the second method, called pairs bootstrap, $\hat{F}$ is an estimate of the joint distribution of the vector $(y_i,X_i)\in R^{p+1}$ given by the empirical joint distribution of $\{(y_i,X_i)\}_{i=1}^n$; the corresponding resampling method resamples the pairs $(y_i,X_i)$. This method makes no assumption about the mean structure of $y$ and, by resampling the $X_i$, also does not condition on the values of $X_i$. For this reason, pairs resampling is often considered to be more generally applicable than  residuals resampling - see e.g \cite{DavisonHinkley97}.

\subsection{Background: High-dimensional inference of linear models}
Recent research shows that $\betaHat_\rho$ has very different asymptotic properties when  $p/n$ has a limit $\kappa$ that is  bounded away from zero than it does in the classical setting where $p/n\tendsto 0$ (see e.g \cite{HuberRobustRegressionAsymptoticsETCAoS73,HuberRonchettiRobustStatistics09,PortnoyMestLargishPNConsistencyAoS84,PortnoyMestLargishPNCLTAoS85,portnoy1986,PortnoyCLTRobustRegressionJMVA87,MammenRobustRegressionAos89} for $\kappa=0$; \cite{NEKRobustPaperPNAS2013Published} for $\kappa\in (0,1)$). A simple example is that the vector $\betaHat_\rho$ is no longer consistent in Euclidean norm when $\kappa>0$. 
We should be clear, however, that projections on fixed non-random directions such as we consider, i.e $\upsilon'\betaHat_\rho$, are $\sqrt{n}$ consistent for $\upsilon\trsp \beta$, even when $\kappa>0$. In particular, the coordinates of $\betaHat_\rho$ are $\sqrt{n}-$consistent for the coordinates of $\beta$. Hence, in practice the estimator $\betaHat_\rho$ is still a reasonable quantity to consider (see \SM \ref{supp:Reminders} for much more detail).

\paragraph{Bootstrap in high-dimensional linear models} Very interesting work exists already in the literature about bootstrapping regression estimators when $p$ is allowed to grow with $n$ (\cite{ShorackBootstrappingRobustRegression82,WuAos86Resampling,MammenRobustRegressionAos89,MammenBootWildBootAndAsympNormalityPTRF92,MammenHighDBootAoS93,ParzenEtAlBoot94},  Section 3.9 of \cite{KoenkerQuantileRegressionBook05}).  With a few exceptions, this work has been in the classical, low-dimensional setting where either $p$ is held fixed or $p$ grows slowly relative to $n$ (i.e $\kappa=0$ in our notation). For instance, in \cite{MammenHighDBootAoS93}, it is shown that under mild technical conditions and assuming that $p^{1+\delta}/n\tendsto 0$, $\delta>0$, the pairs bootstrap distribution of linear contrasts $v\trsp (\betaHat^*-\betaHat)$ is in fact very close to the sampling distribution of $v\trsp(\betaHat-\beta)$ with high-probability, when using least-squares.  Other results such as \cite{ShorackBootstrappingRobustRegression82} and \cite{MammenRobustRegressionAos89}, also allow for increasing dimensions, for example in the case of linear contrasts in robust regression, by making assumptions on the diagonal entries of the hat matrix. In our context, these assumptions would be satisfied only if $p/n\tendsto 0$. Hence those interesting results do not apply to the present study. We also note that \cite{HallBootstrapAndEdgeworthExpansion92} contains on p. 167 cautionary notes about using the bootstrap in high-dimension.

While there has not been much theoretical work on the bootstrap in the setting where $p/n\tendsto \kappa\in (0,1)$, one early work of \cite{BickelFreedmanHighDBoot83} considered bootstrapping scaled residuals for least-squares regression when $\kappa>0$. They show (Theorem 3.1 p.39 in \mbox{\cite{BickelFreedmanHighDBoot83}}) that when $p/n\tendsto \kappa \in (0,1)$, there exists a  data-dependent direction $c$, such that $c\trsp \betaHat^*$ does not have the correct asymptotic distribution, i.e its distribution is not conditionally in probability close to the sampling distribution of $ c\trsp \betaHat$. Furthermore, they show that when the errors in the model are Gaussian, under the assumption that the diagonal entries of the hat matrix are not all close to a constant, the empirical distribution of the residuals is a scaled-mixture of Gaussian, which is not close to the original error distribution.  

As we previously explained, in this work we instead only consider inference for \emph{predefined} contrasts $\upsilon'\beta$. The important and interesting problems pointed out in \cite{BickelFreedmanHighDBoot83} disappear if we focus on fixed, non-data-dependent projection directions. Hence, our work complements the work of \cite{BickelFreedmanHighDBoot83} and is not redundant with it.

\paragraph{The role of the distribution of $X$} An important consideration in interpreting theoretical work on linear models in high dimensions is the role of the design matrix $X$. In classical asymptotic theory, the analysis is conditional on $X$ so that the assumptions in most theoretical results are stated in terms of conditions that can be evaluated on a specific design  matrix $X$. In the high dimensional setting, the available theoretical tools do not yet allow for an asymptotic analysis  conditional on $X$; instead the results make assumptions about the distribution of $X$. Theoretical work in the nascent literature for the high dimensional setting usually allows for a fairly general class of distributions for the individual elements of $X_i$ and can handle covariance between the predictor variables. However, the $X_i$'s are generally considered i.i.d., which limits the ability of any $X_i$ to be too influential in the fit of the model (see \SM \ref{supp:Reminders} for more detail). For discussion of limitations of the corresponding models for statistical purposes, see \cite{DiaconisFreedmanProjPursuit84,HallMarronNeemanJRSSb05,nekCorrEllipD}.

\subsection{Notations and default conventions} When referring to the Huber loss in a numerical context, we refer (unless otherwise noted) to the default implementation in the \texttt{rlm} package in R, where the transition from quadratic to linear behavior is at $k=1.345$. We call $X$ the design matrix and $\{X_i\}_{i=1}^n$ its rows. We have $X_i \in \mathbb{R}^p$. $\beta$ denotes the true regression vector, i.e the population parameter. $\betaHat_\rho$ refers to the estimate of $\beta$ using loss $\rho$; from this point on, however, we will often drop the $\rho$ and refer to simply $\betaHat$. $\resid_i$ denotes the $i$-th residual, i.e $\resid_i=y_i-X_i\trsp \betaHat$.
Throughout the paper, we assume that the linear model holds, i.e $y_i=X_i\trsp \beta+\eps_i$ for some fixed $\beta\in\mathbb{R}^p$ and that $\eps_i$'s are i.i.d with mean 0 and $\var{\eps_i}=\sigma^2_\eps$. We call $\epsdist$ the distribution of $\eps$. When we need to stress the impact of the error distribution on the distribution of $\betaHat_\rho$, we will write $\betaHat_\rho(\epsdist)$ or $\betaHat_\rho(\eps)$ to denote our estimate of $\beta$ obtained assuming that $\eps_i$'s are i.i.d $\epsdist$. 

We denote generically by $\kappa=\lim_{n\tendsto \infty} p/n$. We restrict ourselves to $\kappa\in (0,1)$.  The standard notation $\betaHat_{(i)}$ refers to the leave-one-out estimate of $\betaHat$ where the $i$-th pair $(y_i,X_i)$ is excluded from the regression.   $\prederror\triangleq y_i-X_i\trsp \betaHat_{(i)}$ is the $i$-th predicted error (based on the leave-one-out estimate of $\betaHat$). We also use the notation $\prederrorAtj\triangleq y_j-X_j\trsp \betaHat_{(i)}$. The hat matrix is of course $H=X(X\trsp X)^{-1}X\trsp$. $\lo_P$ denotes a ``little-oh'' in probability, a standard notation (see \cite{vandervaart}). When we say that we work with a Gaussian design with covariance $\Sigma$, we mean that $X_i\iid {\cal N}(0,\Sigma)$. Throughout the paper, the loss function $\rho$ is assumed to be convex, $\mathbb{R}\mapsto \mathbb{R}^+$. We use the standard notation $\psi=\rho'$. We finally assume that $\rho$ is such that there is a unique solution to the robust regression problem - an assumption that applies to all classical losses in the context of our paper.

\section{Residual Bootstrap}\label{sec:ResidBoot}

We first focus on the method of bootstrap resampling where $\hat{F}$ is the conditional distribution $y | \betaHat, X.$ In this case the distribution of $\betaHat^{*}$ under $\hat{F}$ is formed by independent resampling of $\epsilon_i^*$ from an estimate $\hat{\epsdist}$ of the distribution $\epsdist$ that generated $\epsilon_i$. Then new data $y_i^*$ are formed as $y_i^*=X_i'\betaHat+\epsilon_i^*$ and the model is fitted to this new data to get $\betaHat^{*}$. Generally the estimate of the error distribution, $\hat{\epsdist}$, is taken to be empirical distribution of the observed residuals, so that the $\epsilon_i^*$ are found by sampling with replacement from the $\resid_i$.

Yet, even a cursory evaluation of $\resid_i$ in the simple case of least-squares regression ($\rho(x)=x^2$) reveals that the empirical distribution of the $\resid_i$ may be a poor approximation to the error distribution of $\epsilon_i$; in particular, it is well known that $\resid_i$ has variance equal to $\sigma^2_\eps(1-h_i)$ where $h_i$ is the $i$th diagonal element of the hat matrix. This problem becomes particularly pronounced in high dimensions. For instance, if $X_i\iid{\cal N}(0,\Sigma)$, $h_i=p/n+\lo_P(1)$ so that $\resid_i$ has variance approximately $\sigma^2_\eps (1-p/n)$, i.e. generally much smaller than the true variance of $\epsilon$ for $\lim p/n>0$. This fact is also true in much greater generality for the distribution of the design matrix $X$ (see e.g \cite{wachter78}, \cite{Haff79IdentityWishartDWithApps}, \cite{silverstein95}, \cite{PajorPasturPub09}, \cite{NEKHolgerShrinkage11}, where the main results of some of these papers require minor adjustments to get the approximation of $h_i$ we just mentioned). 

In Figure \ref{fig:basicCIError}, we plot the error rate of 95\% bootstrap confidence intervals based on resampling from the residuals for different loss functions, based on a simulation when the entries of $X$ are i.i.d ${\cal N}(0,1)$ and $\epsilon\sim N(0,1)$. Even in this idealized situation, as the ratio of $p/n$ increases the error rate of the confidence intervals in least squares regression increases well beyond the expected 5\%: we observe error rates of 10-15\% for $p/n=0.3$ and approximately $20\%$ for $p/n=0.5$ (Table \ref{tab:basicCIError}). We see similar error rates for other robust-regression methods, such as $L_1$ and Huber loss, and also for different error distributions and distributions of $X$ (Supplementary Figures \ref{fig:basicCIErrorLap} and \ref{fig:basicCIErrorDesign}). 
We explain some of the reasons for these problems in Subsection \ref{subsec:ResidTheory} below.

\begin{figure}[t]
	\centering
	\subfloat[][$L_1$ loss]{\includegraphics[width=.3\textwidth]{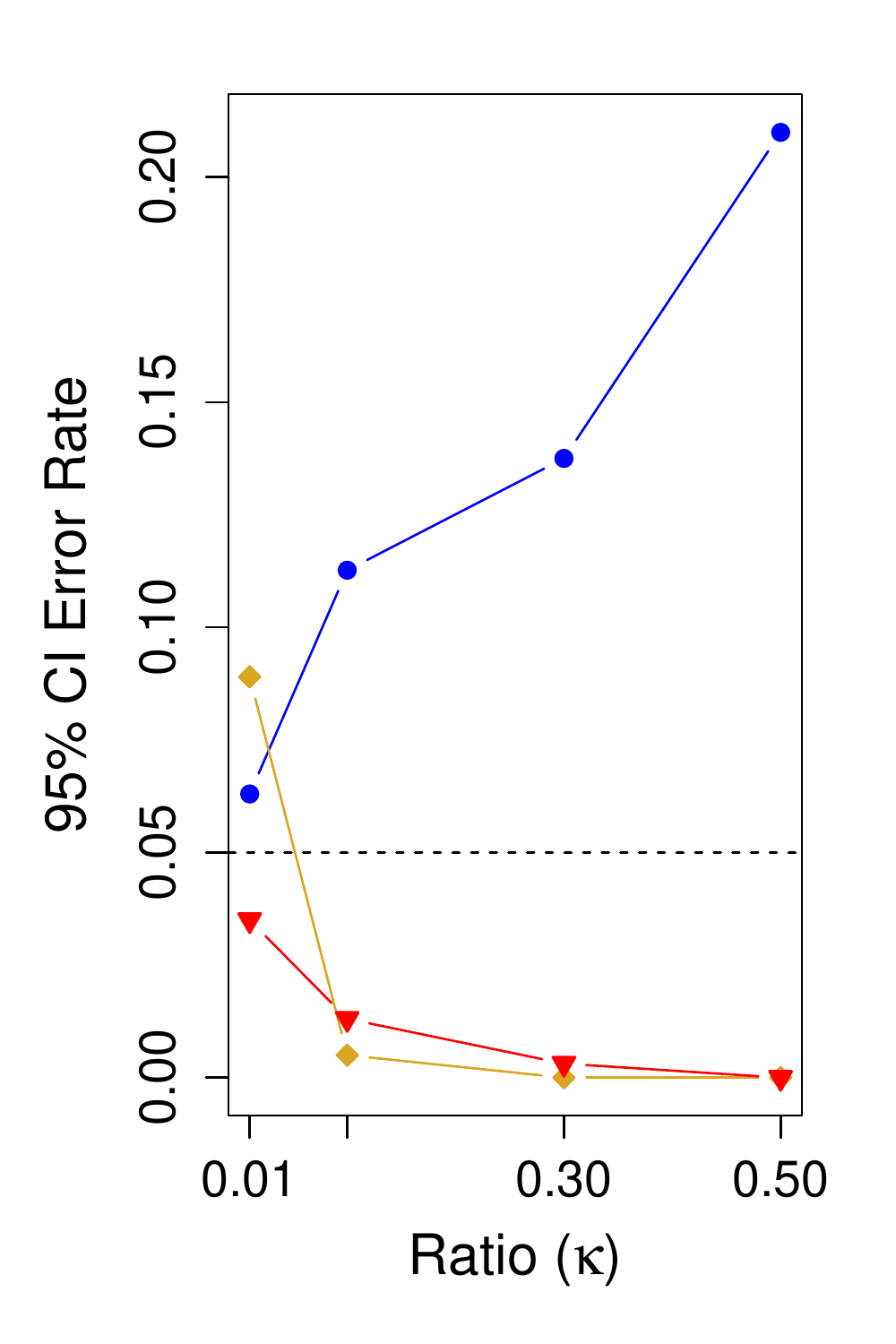} \label{subfig:basicCIError:L1} }
	\subfloat[][Huber loss]{\includegraphics[width=.3\textwidth]{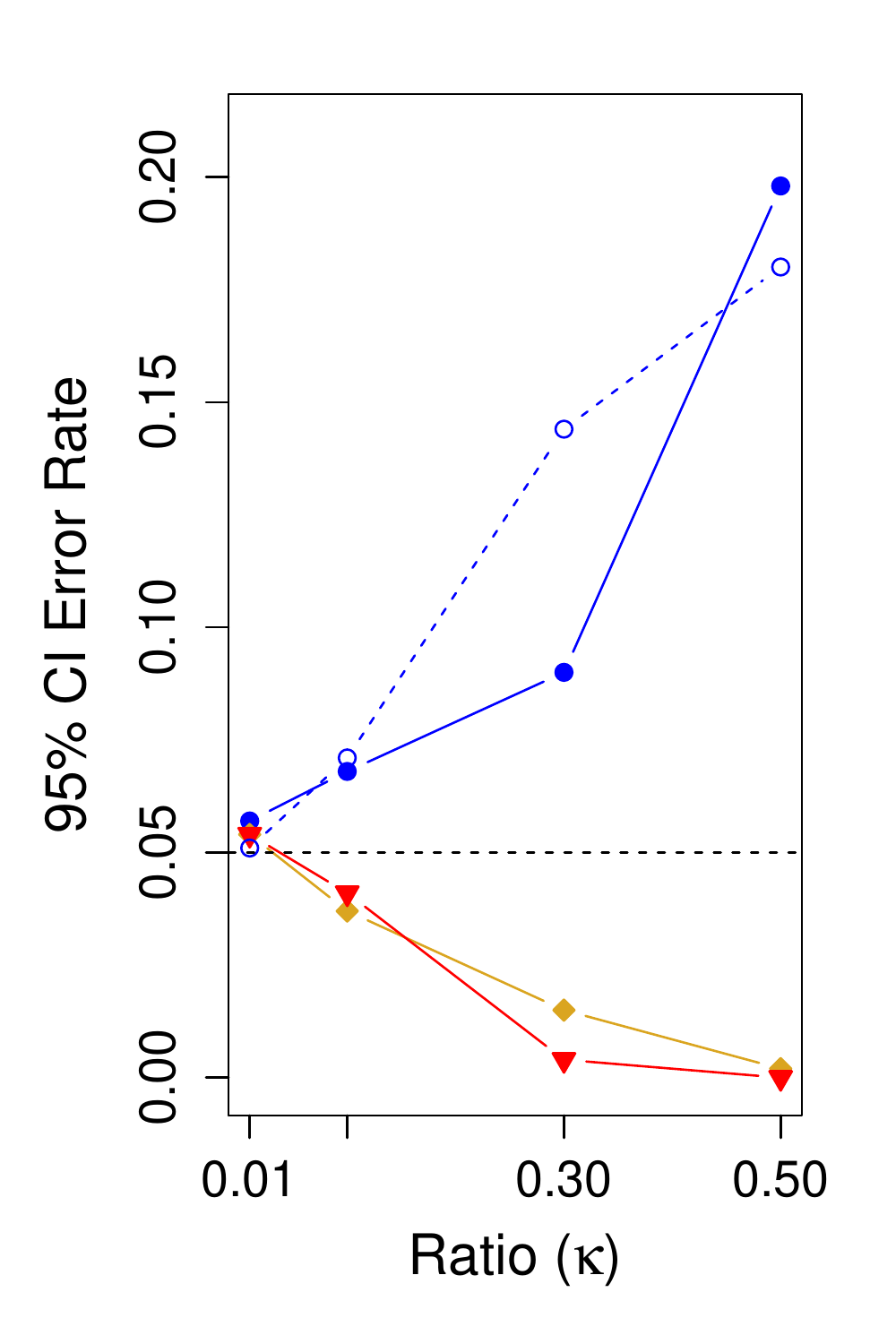} \label{subfig:basicCIError:Huber} }
	\subfloat[][$L_2$ loss]{\includegraphics[width=.3\textwidth]{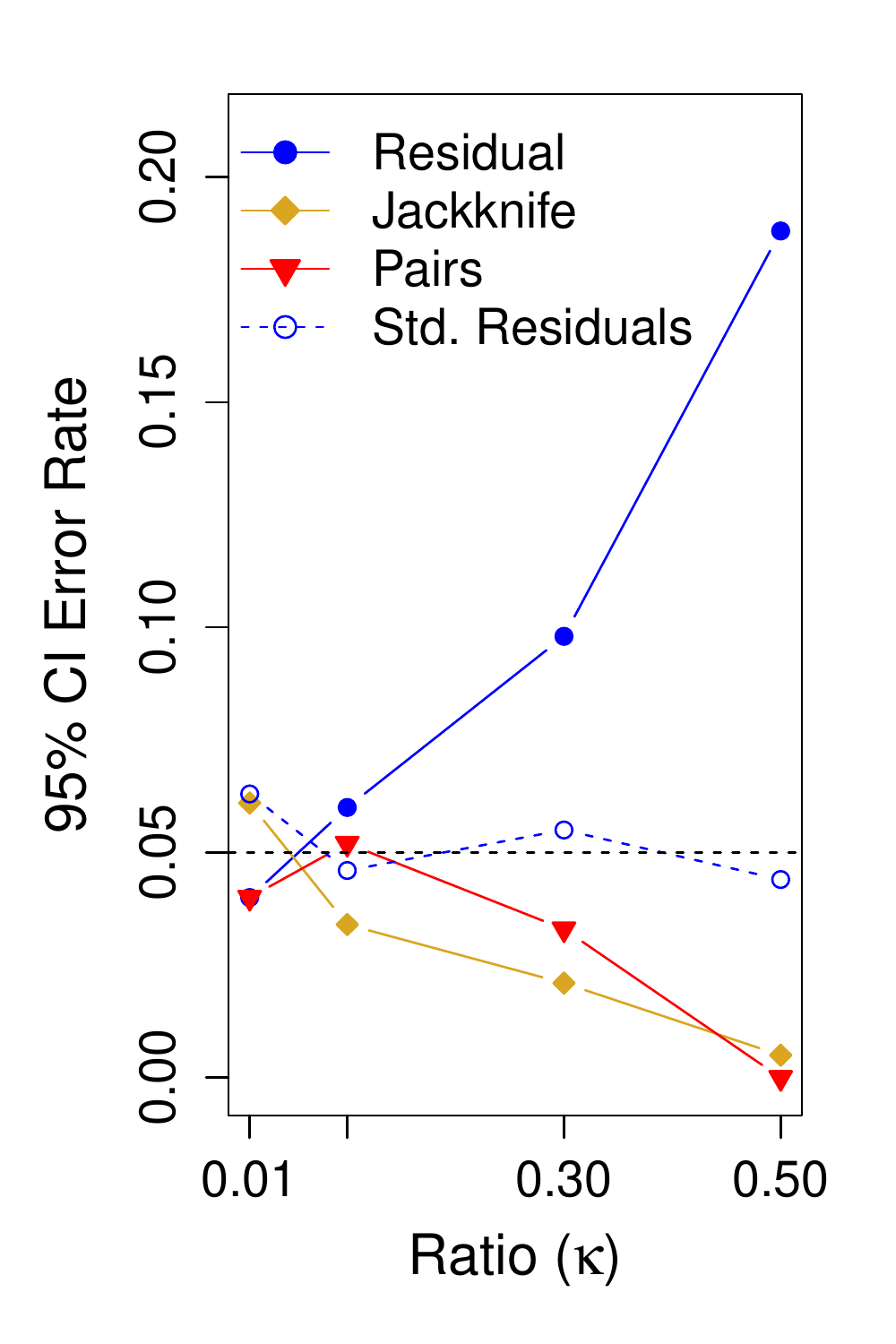} \label{subfig:basicCIError:L2} }
\caption{
 \textbf{Performance of 95\% confidence intervals of $\beta_1$ : }  Here we show the coverage error rates for 95\% confidence intervals for $n=500$ based on applying common resampling-based methods to simulated data: pairs bootstrap (red), residual bootstrap (blue), and jackknife estimates of variance (yellow). These bootstrap methods are applied with three different loss functions shown in the three plots above: \protect\subref{subfig:basicCIError:L1} $L_1$, 
\protect\subref{subfig:basicCIError:Huber} Huber, and \protect\subref{subfig:basicCIError:L2} $L_2$. 
For $L_2$ and Huber loss, we also show the performance of methods for standardizing the residuals before bootstrapping described in the text (blue, dashed line). If accurate, all of these methods should have an error rate of 0.05 (shown as a horizontal black line). The error rates are based on 1,000 simulations, see the description in Supplementary Text, Section \ref{supp:Simulations} for more details; exact values are given in Table \ref{tab:basicCIError}. Error rates above 5\% correspond to anti-conservative methods. Error rates below 5\% correspond to conservative methods.
}\label{fig:basicCIError}
\end{figure}
\subsection{Bootstrapping from Corrected Residuals}\label{subsec:bootCorrectedResiduals}
While resampling directly from the uncorrected residuals is widespread and often given as a standard bootstrap procedure (e.g. \cite{KoenkerQuantileRegressionBook05, Chernick}), the discrepancy between the distribution of $\epsilon_i$ and $\resid_i$ has spurred more refined recommendations in the case of least-squares: form corrected residuals $\stresid_i=\resid_i/\sqrt{1-h_i}$  and sample the $\epsilon_i^*$ from the empirical distribution of the $\stresid_i-\bar{r}$ (see e.g \cite{DavisonHinkley97}). 

This correction is known to exactly align the variance of $\stresid_i$ with that of $\epsilon_i$ regardless of the design vectors $X_i$ or the true error distribution, using simply the fact that the hat matrix is a rank $min(n,p)$ orthogonal projection matrix. We see that for $L_2$ loss it corrects the error in bootstrap inference in our simulations (Figure \ref{fig:basicCIError}). This is not so surprising, given that with $L_2$ loss, the error distribution $G$ impacts the inference on $\beta$  only through $\sigma^2_\eps$, in the case of homoskedastic errors (see Section \ref{subsec:PerfMethod2} for much more detail).

However, this adjustment of the residuals is a correction specific to the least-squares problem. Similar corrections for robust estimation procedures using a loss function $\rho$ are given by \cite{McKean:1993gz} with standardized residuals $r_i$ given by,
\begin{equation}\label{eq:robustCorrect}
	\stresid_i=\frac{\resid_i}{\sqrt{1-d h_i}},\text{ where } d= \frac{2 \sum e'_j \psi(e'_j)}{\sum  \psi(e'_j)} - \frac{\sum \psi(e'_j)^2}{(\sum  \psi(e'_j))^2},
\end{equation}
where $h_i$ is the $i$-th diagonal entry of the hat matrix, $e'_j=e_j/s$, $s$ is a estimate of $\sigma$, and $\psi$ is the derivative of $\rho$, assuming $\psi$ is a bounded and odd function (see \cite{DavisonHinkley97} for a complete description of its implementation for the bootstrap and  \cite{McKean:1993gz} for a full description of regularity conditions). 

Unlike the correction for $L_2$ loss mentioned earlier, however, the scaling described in Equation \eqref{eq:robustCorrect} for the residuals is an approximate variance correction and the approximation depends on assumptions that do not hold true in higher dimensions. The error rate of confidence intervals in our simulations based on this rescaling show no improvement in high dimensions over that of simple bootstrapping of the residuals. This could be explained by the fact that standard perturbation analytic methods used for the analysis of M-estimators in low-dimension - which are at the heart of the correction in Equation \eqref{eq:robustCorrect} - fail in high-dimension.

\subsection{Understanding the behavior of residual bootstrap}\label{subsec:ResidTheory}

At a high-level, this misbehavior of the residual bootstrap can be explained by the fact that in high-dimension, the residuals tend to have a very different distribution from that of the true errors. This is in general true both in terms of simple properties such as variance and in terms of more general aspects, such as the whole marginal distribution. To make these statements precise, we make use of the previous work of \citep{NEKRobustPaperPNAS2013Published,NEKRobustRegressionRigorous2013}. These papers do not discuss bootstrap or resampling issues, but rather are entirely focused on providing asymptotic theory for the behavior of $\betaHat_\rho$ as $p/n\tendsto \kappa \in (0,1)$; in the course of doing so, they characterize the asymptotic relationship of $e_i$ to $\eps_i$ in high-dimensions. We make use of this relationship to characterize the behavior of the residual bootstrap and to suggest an alternative estimates of $\hat{\epsdist}$ for bootstrap resampling.   

\paragraph{Behavior of residuals in high-dimensional regression} We now summarize the asymptotic relationship between $\resid_i$ and $\eps_i$ in high-dimensions given in the above cited work (see \SM Section \ref{supp:Reminders} for a more detailed and technical summary). Let $\betaHat_{(i)}$ be the estimate of $\beta$ based on fitting  the linear model of Equation \eqref{eq:loss} without using observation $i$, and $\prederrorAtj$ be the error of observation $j$ from this model (the leave-one-out or predicted error), i.e 
$
\prederrorAtj=y_j-X_j\trsp \betaHat_{(i)}
$
For simplicity of exposition, $X_i$ is assumed to have an elliptical distribution, i.e $X_i=\lambda_i \multGauss_i,$ where $\multGauss_i\sim N(0,\Sigma)$, and $\lambda_i$ is a scalar random variable independent of $\multGauss_i$ with $\Exp{\lambda_i^2}=1$. 
For simplicity in restating their results, we will assume $\Sigma=\id_p$, but equivalent statements can be made for arbitrary $\Sigma$; similar results also apply when $\multGauss_i=\Sigma^{1/2}\xi_i$, with $\xi_i$ having i.i.d non-Gaussian entries, satisfying a few technical requirements (see \SM Section \ref{supp:Reminders}).

With this assumption on $X_i$, for any sufficiently smooth loss function $\rho$ and any size dimension where $p/n\rightarrow \kappa<1$, the relationship between the $i$-th residual $\resid_i$ and the true error $\epsilon_i$ can be summarized as,
\begin{gather}
	\prederror = \epsilon_i + |\lambda_i|\norm{\betaHat_{\rho(i)}-\beta}_2 Z_i +\lo_P(u_n) \label{eq:residual}\\
	\resid_i + c_i\lambda_i^2 \psi(\resid_i)= \prederror +\lo_P(u_n) \label{eq:residual2ndPart}
\end{gather}
where $Z_i$ is a random variable distributed $N(0,1)$ and independent of $\epsilon_i$. $u_n$ is a sequence of numbers tending to 0. $c_i$, $\lambda_i$ and $\norm{\betaHat_{\rho(i)}-\beta}_2$ are all of order 1, i.e they are not close to 0 in general in the high-dimensional setting. \label{review:orderMagnitude} 
The scalar $c_i$ is given as $\frac{1}{n}\trace{S_i^{-1}},$ where $S_i=\frac{1}{n}\sum_{j\neq i}\psi'(\prederrorAtj)X_jX_j'$. For $p, n$ large the $c_i$'s are approximately equal and $\norm{\betaHat_{\rho(i)}-\beta}_2\simeq \norm{\betaHat_{\rho}-\beta}_2\simeq\Exp{\norm{\betaHat_{\rho}-\beta}_2}$; furthermore $c_i\lambda_i^2$ can be approximated by $X_i\trsp S_i^{-1}X_i/n$. Note that when $\rho$ is either non-differentiable at all points ($L_1$) or not twice differentiable (Huber), arguments can be made that make these expressions valid, using  for instance the notion of sub-differential for $\psi$ \citep{HiriartLemarechalConvexAnalysisAbridged2001}.

\paragraph{Interpretation of Equations \eqref{eq:residual} and \eqref{eq:residual2ndPart}} 
Equation \eqref{eq:residual} means that the marginal distribution of the leave-$i$-th-out predicted error, $\prederror$, is asymptotically a convolution of the true error, $\epsilon_i$, and an independent scale mixture of Normals. Furthermore, Equation \eqref{eq:residual2ndPart} means that the $i$-th residual $\resid_i$ can be understood as a non-linear transformation of $\prederror$. As we discuss below, these relationships are  qualitatively very different  from the classical case $p/n\tendsto 0$.

\subsubsection{Consequence for the residual bootstrap} We apply these results to the question of the residual bootstrap to give an understanding of why bootstrap resampling of the residuals can perform so badly in high-dimension. The distribution of the $\resid_i$ is far removed from that of the $\epsilon_i$, and hence bootstrapping from the residuals effectively amounts to sampling errors from a distribution that is very different from the original error distribution, $\eps$.

The impact of these discrepancies for bootstrapping is not equivalent for all dimensions, error distributions, or loss functions. It depends on the constant $c_i$ and the risk, $\norm{\betaHat_{\rho (i)}-\beta}_2$, both of which are highly dependent on the dimensions of the problem, the distribution of the errors and the choice of loss function. We now discuss some of these issues.

\paragraph{Least Squares regression} In the case of least squares regression, the relationships given in Equation \eqref{eq:residual} are exact, i.e $u_n=0$. Further, $\psi(x)=x$, and $c_i= h_i/(1-h_i)$, giving the well known linear relationship $\resid_i=(1-h_i)\prederror$ \citep{WeisbergLinearRegressionBook14}. This linear relationship is exact regardless of dimension, though the dimensionality aspects are captured by $h_i$. This expression 
can be used to show that asymptotically $\Exp{\sum_{i=1}^n \resid_i^2}=\sigma^2_\eps (n-p)$, when $\eps_i$'s have the same variance. Hence, sampling at random from the residuals results in a distribution that underestimates the variance of the errors by a factor $1-p/n$. The corresponding bootstrap confidence intervals are then naturally too small, and hence the error rate increases far from the nominal 5\% - as we observed in Figure \ref{subfig:basicCIError:L2}.

\paragraph{More general robust regression} The situation is much more complicated for general robust regression estimators.  One clear implication of Equations \eqref{eq:residual} and \eqref{eq:residual2ndPart} is that simply rescaling the residuals $\resid_i$ should not in general result in an estimated error distribution $\hat{\epsdist}$ that will have similar properties to those of $\epsdist$. 
The relationship between the residuals and the errors is very non-linear in high-dimensions. This is why in what follows we will propose to work with leave-one-out predicted errors $\prederror$ instead of the residuals $\resid_i$.

\paragraph{The classical case of $\bm{p/n\rightarrow 0}$:} In this setting, $c_i\rightarrow 0$ and therefore Equation \eqref{eq:residual} shows that the residuals $\resid_i$ are approximately equal in distribution to the predicted errors, $\prederror$. Similarly, $\betaHat_\rho$ is $L_2$ consistent when $p/n\rightarrow 0$, so $\norm{\betaHat_{\rho(i)}-\beta}_2^2\rightarrow 0$ and Equation \eqref{eq:residual2ndPart} gives $\prederror\simeq \eps_i$. Hence, the residuals should be fairly close to the true errors in the model when $p/n$ is small. This dimensionality assumption is key to many theoretical analyses of robust regression, and underlies the derivation of corrected residuals $\stresid_i$ of \cite{McKean:1993gz} given in Equation \eqref{eq:robustCorrect} above. 

\subsection{Alternative residual bootstrap procedures}\label{subsec:LeaveOut}
We propose two methods for improving the performance of confidence intervals obtained through the residual bootstrap. Both do so by providing alternative estimates of $\hat{\epsdist}$ from which bootstrap errors $\epsilon^*_i$ can be drawn. They estimate a $\hat{\epsdist}$ appropriate for the setting of high-dimensional data by accounting for relationship of the distribution of $\epsilon$ and $\prederror$. 

\paragraph{Method 1: Deconvolution} The relationship in Equation \eqref{eq:residual} says that the distribution of  $\prederror$ is a convolution of the correct $\epsdist$ distribution and a Normal distribution. This suggests applying techniques for deconvolving a signal from gaussian noise. Specifically, we propose the following bootstrap procedure: \begin{enumerate*}[label=\textbf{\arabic*)}]
\item calculate the predicted errors, $\prederror$; 
\item estimate the variance of the normal (i.e. $|\lambda_i|\norm{\betaHat_{\rho(i)}-\beta}_2^2$);
\item deconvolve in $\prederror$ the error term $\epsilon_i$ from the normal term; 
\item Use the resulting estimate $\hat{\epsdist}$ to draw errors $\eps^*_i$  for residual bootstrapping.
\end{enumerate*}

Deconvolution problems are known to be very difficult (see \cite{FanDeconvoluionAoS91}, Theorem 1 p. 1260, that gives $1/\log(n)^{\alpha}$ rates of convergence when convolving with a  Gaussian distribution). The resulting deconvolved errors are likely to be quite noisy estimates of $\epsilon_i$. However, it is possible that while individual estimates are poor, the distribution of the deconvolved errors is estimated well enough to form a reasonable $\hat{\epsdist}$ for the bootstrap procedure. 

We used the deconvolution algorithm in the \texttt{decon} package in R \citep{Wang2011Decon} to estimate the distribution of $\epsilon_i$.  The deconvolution algorithm requires knowledge of the variance of the Gaussian that is convolved with the $\epsilon_i$, i.e. estimation of $|\lambda_i|\norm{\betaHat_{\rho(i)}-\beta}_2$ term. In what follows, we assume a Gaussian design, i.e. $\lambda_i=1$, so that we need to estimate only the term $\norm{\betaHat_{\rho(i)}-\beta}^2_2.$ An estimation strategy for the more general setting of $|\lambda_i|\neq 1$ is presented in \SM Section \ref{supp:sec:estimationEllipParam}. We use the fact that $\norm{\betaHat_{\rho(i)}-\beta}^2_2\simeq \norm{\betaHat_{\rho}-\beta}^2_2$ for all $i$ 
and estimate $\norm{\betaHat_{\rho(i)}-\beta}_2$ as $\widehat{var}(\prederror)-\hat{\sigma}^2_{\eps},$ where $\widehat{var}(\prederror)$ is the empirical variance of the $\prederror$ and $\hat{\sigma}^2_{\eps}$ is an estimate of the variance of $\epsdist,$ which we discuss below. We note that the deconvolution strategy we employ makes assumptions of homoskedastic errors $\eps_i$'s, which is true in our simulations but may not be true in practice.  See \SM Section \ref{supp:deconvolution} for details regarding the implementation of Method 1.

\paragraph{Method 2: Bootstrapping from standardized $\prederror$ } 
A simpler alternative is bootstrapping from the predicted error terms, $\prederror$, without deconvolution. Specifically, we propose to bootstrap from a scaled version of $\prederror$,
\begin{equation}\label{eq:DefScaledConvoErrorDist}
\stprederror=\frac{\hat{\sigma}_{\epsilon}}{\sqrt{\widehat{var}(\prederror)}}\prederror,
\end{equation}
where $\widehat{var}(\prederror)$ is the standard estimate of the variance of $\prederror$ and $\hat{\sigma}_\eps$ is an estimate of $\sigma_\eps$. This scaling aligns the first two moments of $\prederror$ with those of $\epsilon_i$.
On the face of it, resampling from $\stprederror$ seems problematic, since Equation \eqref{eq:residual} demonstrates that $\prederror$ does not have the same distribution as $\eps_i$, even if the first two moments are the same. However, as we demonstrate in simulations, this distributional mismatch appears to have limited practical effect on our bootstrap confidence intervals.

\paragraph{Estimation of $\sigma_\eps^2$}  Both methods described above require an estimator of $\sigma_\eps$ that is consistent regardless of dimension and error distribution. As we have explained earlier, for general $\rho$ we cannot rely on the observed residuals $\resid_i$ nor on  $\prederror$ for estimating $\sigma_\eps$ (see Equations \eqref{eq:residual} and \eqref{eq:residual2ndPart}). The exception is  the standard estimate of $\sigma^2_\eps$ from least-squares regression, i.e $\rho(x)=x^2$,
$$\widehat{\sigma}_{\eps,LS}^2=\frac{1}{n-p}\sum_i \resid_{i,L_2}^2.$$
 $\widehat{\sigma}_{\eps,LS}^2$ is a consistent estimator of $\sigma^2$, assuming i.i.d errors and mild moment requirements. In implementing the two alternative residual-bootstrap methods described above, we use $\widehat{\sigma}_{\eps,LS}$ as our estimate of $\sigma_\eps$.

\paragraph{Performance in bootstrap inference} \label{review:performanceBootstrap}In Figure \ref{fig:bootErrorLeaveOut} we show the error rate of confidence intervals based on the two residual-bootstrap methods we proposed above.  We see that both methods control the Type I error, unlike bootstrapping directly from the residuals, and that both methods are conservative. There is little difference between the two methods with this sample size ($n=500$), though with $n=100$, we observe the deconvolution performance to be worse in $L_1$ (data not shown). 

The deconvolution strategy, however, depends on the distribution of the design matrix, which in these simulations we assumed was Gaussian (so we did not have to estimate $\lambda_i$'s). For elliptical designs ($\lambda_i\neq 1$), the error rate of the deconvolution method described above, with no adaptation for the design, was similar to that of uncorrected residuals in high dimensions (i.e. $>0.25$ for $p/n=0.5$).  Individual estimates of $\lambda_i$ (see \SM Section \ref{supp:sec:estimationEllipParam}) might improve the deconvolution strategy, but this problem points to the general reliance of the deconvolution method on precise knowledge about the design matrix. The bootstrap using standardized predicted errors, on the other hand, had a Type I error for an elliptical design only slightly higher than the target 0.05 (around $0.07$, data not shown), suggesting that it might be less sensitive to the properties of the design matrix.

\begin{figure}[t]
\centering
\centering
\subfloat[][$L_1$ loss]{\includegraphics[width=.3\textwidth]{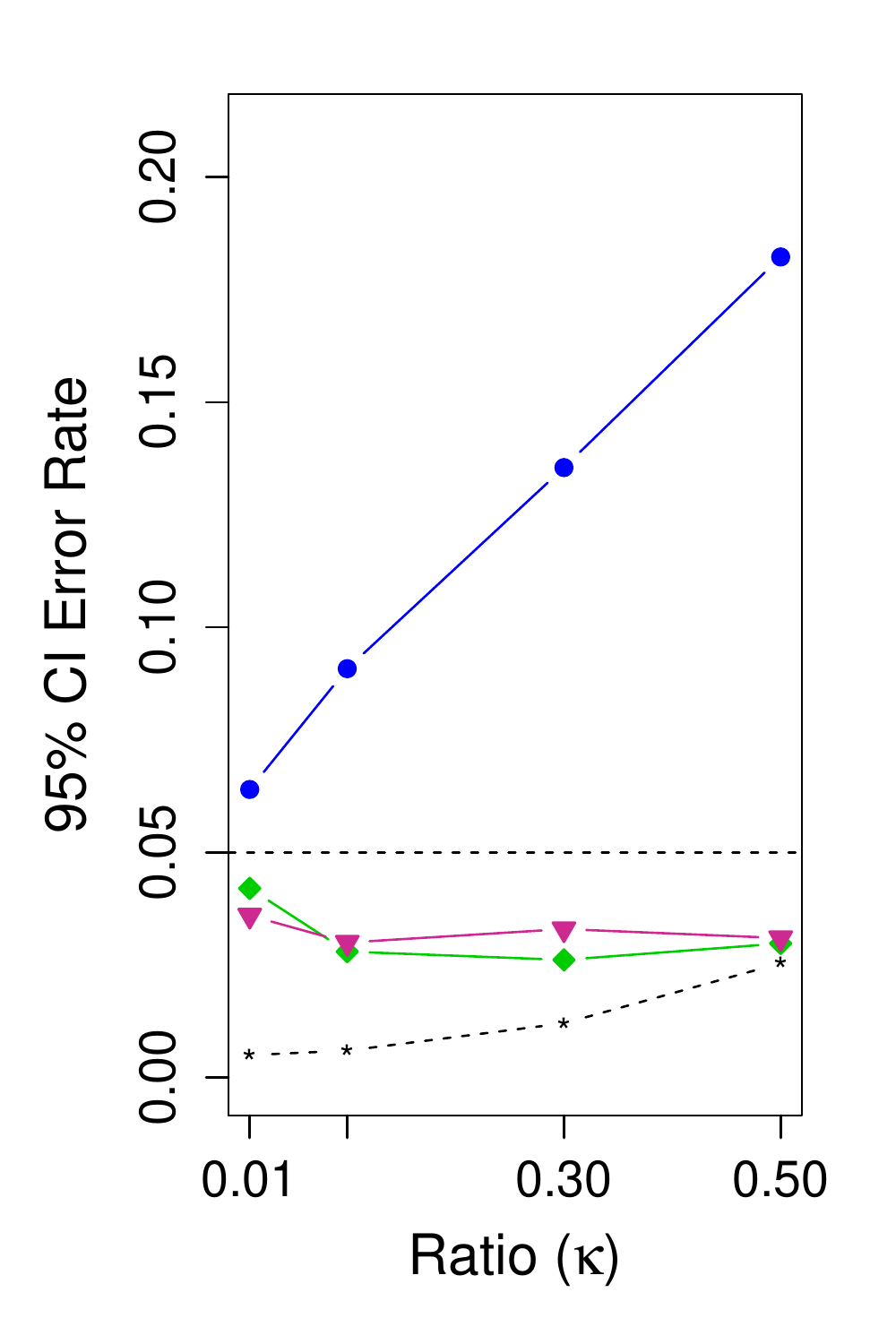}\label{subfig:bootErrorLeaveOut:L1}}
\subfloat[][Huber loss]{\includegraphics[width=.3\textwidth]{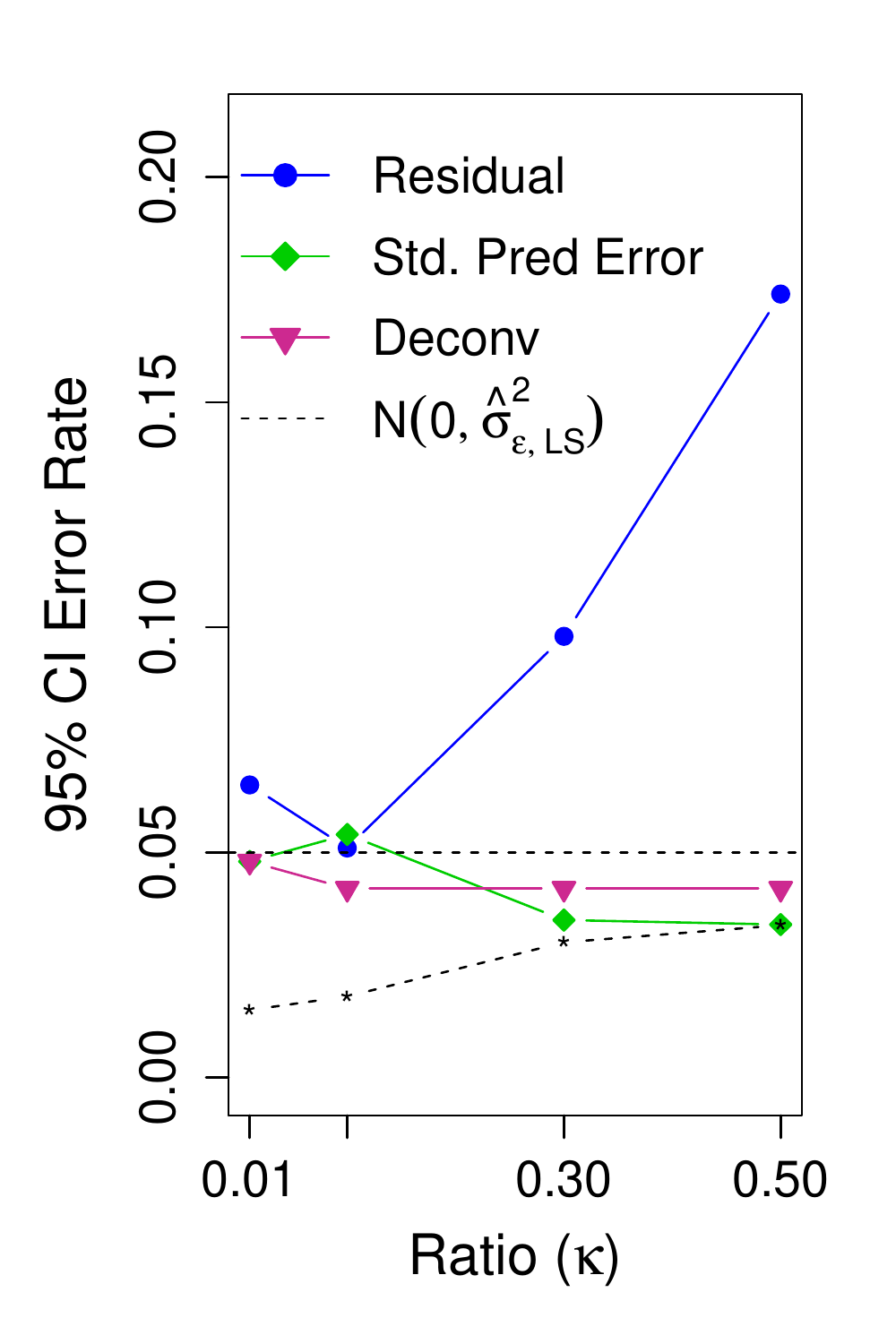} \label{subfig:bootErrorLeaveOut:Huber}}

\caption{\textbf{Bootstrap based on predicted errors:} We plotted the error rate of 95\% confidence intervals for the alternative bootstrap methods described in Section \ref{subsec:LeaveOut}: bootstrapping from standardized predicted errors (green) and from deconvolution of predicted error (magenta). We demonstrate its improvement over the standard residual bootstrap (blue) for \protect\subref{subfig:bootErrorLeaveOut:L1} $L_1$ loss and \protect\subref{subfig:bootErrorLeaveOut:Huber} Huber loss. The error distribution is double exponential and the design matrix $X$ is Gaussian, but otherwise the simulations parameters are as in Figure \ref{fig:basicCIError}. The error rates on confidence intervals based on bootstrapping from a $N(0,\widehat{\sigma}_{\eps,LS}^2)$ (dashed curve) are as a lower bound on the problem. For the precise error rates see Table \ref{tab:bootErrorLeaveOut}.}
\label{fig:bootErrorLeaveOut}
\end{figure}

Given our previous discussion of the behavior of $\prederror$, it is somewhat surprising that resampling from the distribution of $\stprederror$ performed well in our simulations. Clearly a few cases exist where $\stprederror$ should work well as an approximation of $\eps_i$. We have already noted that as $p/n\rightarrow 0$, the effect of the convolution with the Gaussian disappears since $\norm{\betaHat_\rho-\beta}\tendsto 0$; in this case both $\resid_i$ and $\stprederror$ should be good estimates of $\epsilon_i$. Similarly, in the case $\eps_i\sim N(0,\sigma^2)$, Equation \eqref{eq:residual} tells us that $\prederror$ are also asymptotically marginally normally distributed, so that correcting the variance should result in $\stprederror$ having the same distribution as $\epsilon_i$, at least when $X_{i,j}$ are i.i.d.

Surprisingly, for larger $p/n$ we do not see a deterioration of the performance of bootstrapping from $\stprederror$. This is unexpected, since as $p/n\rightarrow 1$ the risk $\norm{\betaHat_\rho-\beta}_2^2$ grows to be much larger than $\sigma_\eps^2$ (a claim we will make more precise in the next section); together with Equation \eqref{eq:residual}, this implies that $\stprederror$ is essentially distributed $N(0,\widehat{\sigma}_{\eps,LS}^2)$ as  $p/n\rightarrow 1$ regardless of the original distribution of $\epsilon_i$. This is confirmed in Figure  \ref{fig:bootErrorLeaveOut} where we  superimpose the results of bootstrap confidence intervals from when we simply estimate $\hat{\epsdist}$ with $N(0,\hat{\sigma}^2_{\eps,LS})$; we see the Type I error rate of the confidence intervals based on bootstrapping from $\stprederror$ do indeed approach that of $N(0,\hat{\sigma}_{\eps,LS}^2)$. Putting these two pieces of information together leads to the conclusion that as $p/n\rightarrow 1$ we can estimate $\hat{\epsdist}$ simply as $N(0,\hat{\sigma}_{\eps,LS})$ regardless of the actual distribution of $\epsilon$. 

In the next section we give some theoretical results that seek to understand this phenomenon.

\subsection{Behavior of the risk of $\betaHat$ when $\kappa\tendsto 1$}\label{subsec:PerfMethod2}

In the previous section we saw even if the distribution of the bootstrap errors $\eps^*_i$ given by $\hat{\epsdist}$, is not close to that of $\epsdist$, we can sometime get accurate bootstrap confidence intervals. For example, in least squares Equation \eqref{eq:residual} makes clear that even the standardized residuals, $\stresid_i$, do not have the same marginal distribution as $\epsilon_i$, yet they still provide accurate bootstrap confidence intervals in our simulations. We  would like to understand for what choice of distributions $\hat{\epsdist}$ will we see the same performance in our bootstrap confidence intervals of $\betaHat_1$?

When working conditional on $X$ as in residual resampling, the statistical properties of  $(\betaHat^*-\betaHat)$ differ from that of $(\betaHat-\beta)$ only because the errors are drawn from a different distribution -- $\hat{\epsdist}$ rather than $\epsdist$. Then to understand whether the distribution of  $\betaHat_1^*$ matches that of $\betaHat_1$ we can ask: what are the distributions of errors, $\epsdist$, that yield the same distribution for the resulting $\betaHat_1(\epsdist)$? In this section, we narrow our focus on understanding not the entire distribution of $\betaHat_1$, but only its variance. We do so because under assumptions on the design matrix $X$, $\betaHat_1$ is asymptotically normally distributed. This is true for both the classical setting of $\kappa=0$ and the high-dimensional setting of $\kappa\in(0,1)$ (see \SM Section \ref{supp:Reminders} for a review of these results and a more technical discussion). Our previous question is then reduced to understanding which distributions $\epsdist$ give the same  $\var{\betaHat_1(\epsdist)}$.

In the setting of least squares, it is clear that the only property of $\eps_i\iid\epsdist$ that matters for the variance of $\betaHat_{1,L_2}$ is $\sigma_\eps^2$, since $\var{\betaHat_{1,L_2}}=(X'X)^{-1}(1,1)\sigma^2_\eps$. For general $\rho$, if we assume $p/n\tendsto 0$, then $\var{\betaHat_{1,\rho}}$ will depend on features of $\epsdist$ beyond the first two moments (specifically through $\Exp{\psi^2(\eps)}/[\Exp{\psi'(\eps)}]^2$, \cite{HuberRobustRegressionAsymptoticsETCAoS73}). If we assume instead $p/n\tendsto \kappa\in (0,1),$ then $\var{\betaHat_{1,\rho}(\epsdist)}$ depends on $\epsdist$ via its influence on the squared risk of $\betaHat_\rho$, given by $\Exp{\norm{\betaHat_\rho(\epsdist)-\beta}_2^2}$ (see \SM Section \ref{supp:Reminders} for a review of these results). 

For this reason, in the setting of $p/n\tendsto \kappa\in(0,1)$, we need to characterize the risk of $\betaHat_\rho$ to understand when different distributions of $\epsilon$ result in the same variance of $\betaHat$. In what follows, we denote by $r_\rho(\kappa;\epsdist)$ the asymptotic risk of $\betaHat_{\rho}(\epsdist)$ as $p$ and $n$ tend to $\infty$. The dependence of  $r_\rho^2(\kappa;\epsdist)$  on $\epsilon$ is characterized by a system of two non-linear equations (given in \cite{NEKRobustPaperPNAS2013Published}, see \SM \ref{supp:Reminders}), and therefore it is difficult to characterize those distributions $\Gamma$ for which $r_\rho^2(\kappa;\epsdist)=r_\rho^2(\kappa;\Gamma)$. In the following theorem, however, we show that when $\kappa \tendsto 1$, the asymptotic squared risk $r_\rho^2(\kappa;\epsdist)$ converges to a constant that depends only on $\sigma_\eps^2$. This implies that when $\kappa \tendsto 1$, two different error distributions that have the same variances will result in estimators $\betaHat_{1,\rho}$ with the same variance. 

We now state the theorem formally; see Supplementary Text, Section \ref{supp:ResidProof} for the proof of this statement. 

\begin{theorem} \label{thm:asympPerfPNcloseto1}	
Suppose we are working with robust regression estimators, and $p/n\tendsto \kappa$. Assume that $X_{i,j}$ are i.i.d with mean 0 and variance 1, having Gaussian distribution or being bounded. Then, under the assumptions stated in \cite{NEKRobustRegressionRigorous2013} for $\rho$ and $\eps_i$'s, 
$$
r_\rho^2(\kappa;\epsdist)\sim_{\kappa\tendsto 1} \frac{\sigma^2_\eps}{1-\kappa}\;,
$$
provided $\rho$ is differentiable near 0 and $\rho'(x)\sim x$ near 0.
\end{theorem}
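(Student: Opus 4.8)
The plan is to read off from \cite{NEKRobustRegressionRigorous2013,NEKRobustPaperPNAS2013Published} the system of two coupled equations that pins down the asymptotic risk, and then analyse that system as $\kappa\tendsto 1$. Write $r=r_\rho(\kappa;\epsdist)$ for the risk and let $\mathfrak c=\mathfrak c_\kappa$ be the companion scalar (the population limit of $\tfrac1n\trace{S_i^{-1}}$ from Equation~\eqref{eq:residual2ndPart}); with $\eps\sim\epsdist$, $Z\sim\mathcal N(0,1)$ independent and $\mathrm{prox}_{\mathfrak c\rho}(x)=\argmin_u\{\mathfrak c\rho(u)+\tfrac12(u-x)^2\}$, that system can be written
\[
\Exp{(\mathrm{prox}_{\mathfrak c\rho})'(\eps+rZ)}=1-\kappa,\qquad
\Exp{\bigl(\eps+rZ-\mathrm{prox}_{\mathfrak c\rho}(\eps+rZ)\bigr)^2}=\kappa\,r^2,
\]
so that $e:=\mathrm{prox}_{\mathfrak c\rho}(\eps+rZ)$, which solves $e+\mathfrak c\,\psi(e)=\eps+rZ$, is exactly the population version of $\resid_i$ in Equations~\eqref{eq:residual}--\eqref{eq:residual2ndPart}. (When $\rho$ is not everywhere smooth, as for Huber loss, $\mathrm{prox}_{\mathfrak c\rho}$ is still $1$-Lipschitz and differentiable off a finite set, and derivatives of $\psi$ are interpreted through the subdifferential as elsewhere in the paper.) I will regard $(r_\kappa,\mathfrak c_\kappa)$ as a function of $\kappa$, using the regularity in $\kappa$ of the solution of this system established in the cited work so that $\kappa\tendsto 1$ limits make sense.

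The crux of the argument is to collapse these two equations into a single clean identity. Because $e$ is a $1$-Lipschitz function of $\eps+rZ$, conditioning on $\eps$ and applying Gaussian integration by parts in $Z$ gives $\Exp{Ze}=r\,\Exp{(\mathrm{prox}_{\mathfrak c\rho})'(\eps+rZ)}=r(1-\kappa)$ by the first equation. Expanding the square in the second equation, using $\Exp{\eps Z}=0$, and substituting this value of $\Exp{Ze}$, the $r^2$ terms recombine and one obtains
\[
r_\rho^2(\kappa;\epsdist)\,(1-\kappa)=\Exp{(\eps-e)^2}.
\]
(Sanity check: for least squares $\mathrm{prox}_{\mathfrak c\rho}(x)$ is linear, the first equation forces $e=(1-\kappa)(\eps+rZ)$, and the identity returns the exact $r_\rho^2=\kappa\sigma_\eps^2/(1-\kappa)$, which indeed behaves like $\sigma_\eps^2/(1-\kappa)$.) In view of this identity the theorem is equivalent to the statement $\Exp{e_\kappa^2}\tendsto 0$ as $\kappa\tendsto 1$; granting that, $\Exp{\eps\,e_\kappa}\tendsto 0$ by Cauchy--Schwarz and hence $\Exp{(\eps-e_\kappa)^2}\tendsto\sigma_\eps^2$.

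To show $\Exp{e_\kappa^2}\tendsto 0$ I would use three ingredients. (i) $\mathfrak c_\kappa$ grows at least like $(1-\kappa)^{-1}$: since $\psi$ is Lipschitz, $(\mathrm{prox}_{\mathfrak c\rho})'\ge(1+L\mathfrak c)^{-1}$, so the first equation forces $1-\kappa\ge(1+L\mathfrak c_\kappa)^{-1}$, i.e.\ $\mathfrak c_\kappa\ge\kappa/(L(1-\kappa))\tendsto\infty$. (ii) The hypothesis $\rho'(x)\sim x$ near $0$ means $\psi(t)$ has the sign of $t$ and $|\psi(t)|\ge\tfrac12|t|$ on some fixed $[-\eta,\eta]$; hence on $\{|e_\kappa|\le\eta\}$ the relation $e_\kappa+\mathfrak c_\kappa\psi(e_\kappa)=\eps+r_\kappa Z$ gives $|e_\kappa|\le|\eps+r_\kappa Z|/(1+\tfrac12\mathfrak c_\kappa)$, so the prox contracts its argument toward $0$ by the factor $\mathfrak c_\kappa^{-1}$. (iii) An \emph{a priori} rate $r_\rho^2(\kappa)=O\bigl((1-\kappa)^{-1}\bigr)$; with (i) this forces $r_\kappa=o(\mathfrak c_\kappa)$, so the argument $\eps+r_\kappa Z$ lies, with probability tending to $1$, inside the preimage of $[-\eta,\eta]$. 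Given these, split $\Exp{e_\kappa^2}=\Exp{e_\kappa^2;|e_\kappa|\le\eta}+\Exp{e_\kappa^2;|e_\kappa|>\eta}$: the first term is $\le(\sigma_\eps^2+r_\kappa^2)/(1+\tfrac12\mathfrak c_\kappa)^2=O(1-\kappa)$ by (i) and (iii); the second is supported on $\{|\eps+r_\kappa Z|>\tfrac12\eta\mathfrak c_\kappa\}$ and is handled by a standard truncation, using $r_\kappa=o(\mathfrak c_\kappa)$ together with the moment assumptions on $\eps$ from \cite{NEKRobustRegressionRigorous2013}.

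The main obstacle is ingredient (iii) — obtaining the a priori rate without presupposing the conclusion. For $\rho$ strongly convex everywhere (e.g.\ $L_2$) it drops out immediately: the second equation reads $\mathfrak c_\kappa^2\Exp{\psi(e_\kappa)^2}=\kappa r_\kappa^2$, and $|\psi(t)|\ge m|t|$ gives $\Exp{e_\kappa^2}\le r_\kappa^2/(m^2\mathfrak c_\kappa^2)=O\bigl((1-\kappa)^2 r_\kappa^2\bigr)$ by (i), which plugged into the identity yields $r_\kappa^2(1-\kappa)\le 2\sigma_\eps^2+O\bigl((1-\kappa)^2 r_\kappa^2\bigr)$ and hence $r_\kappa^2(1-\kappa)\le 4\sigma_\eps^2$ for $\kappa$ near $1$. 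For a merely strictly-convex-near-$0$ loss such as Huber the tail of $e_\kappa$ is not controlled this way, and I would instead either import the relevant polynomial growth bound on $r_\rho^2(\kappa)$ from the analysis of the system in \cite{NEKRobustRegressionRigorous2013}, or bootstrap on the identity itself (bounding the bulk of $\Exp{e_\kappa^2}$ by (ii) and the tail crudely, getting a self-improving inequality for $r_\rho^2(\kappa)$ and ruling out a runaway branch by continuity of $\kappa\mapsto r_\rho^2(\kappa)$ away from $1$), or approximate Huber by strongly convex losses and pass to the limit. The remaining items — making the prox/subdifferential steps rigorous in the non-smooth case and checking the moment bookkeeping — are routine; note that $L_1$ loss is correctly excluded by the hypothesis $\rho'(x)\sim x$ near $0$, precisely because the contraction mechanism (ii) fails there.
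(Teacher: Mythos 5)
Your route is genuinely different from the paper's and, where complete, arguably cleaner. The paper's proof is a guess-and-verify argument: it posits $c\simeq\kappa/(1-\kappa)$ and $r_\rho^2(\kappa)\simeq\sigma^2_\eps/(1-\kappa)$, substitutes these candidates into the two-equation system, and checks consistency by splitting the integrals over $(-s_\kappa,s_\kappa)$, with $s_\kappa=(1-\kappa)^{-3/4}$, and its complement, using $\prox(c\rho)(x)\simeq x/(1+c)$ on the bulk and coarse bounds on the tails. You instead work with the actual solution throughout and first derive, via Gaussian integration by parts applied to the $1$-Lipschitz map $\prox(c\rho)$ (giving $\Exp{Ze}=r(1-\kappa)$ from the first equation), the exact relation $r_\rho^2(\kappa)(1-\kappa)=\Exp{(\eps-e)^2}$. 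That identity is not in the paper and is a genuinely useful structural observation: it reduces the theorem to $\Exp{e_\kappa^2}\tendsto 0$, and it sidesteps the uniqueness/stability-of-the-solution question that the paper's a-posteriori verification implicitly leans on. Your bulk/tail split via the contraction $|e_\kappa|\le|\eps+r_\kappa Z|/(1+c_\kappa/2)$ is the same mechanism as the paper's approximation $\prox(c\rho)(x)\simeq x/(1+c)$ on $(-s_\kappa,s_\kappa)$, just organized around the solution rather than the guess. Your algebra checks out, including the least-squares sanity check.

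The one genuine gap is the one you flag yourself: the a priori bound $r_\rho^2(\kappa)=O\bigl((1-\kappa)^{-1}\bigr)$, without which the contraction estimate does not close. Your argument supplying it for globally strongly convex $\rho$ is complete and correct, but for losses that are only quadratic near $0$ (the Huber-type case the theorem is meant to cover) you only list candidate strategies rather than executing one. Note that the paper does not really supply this step either -- its verification shows the guessed pair approximately satisfies the system, which only yields the conclusion if one also controls solutions far from the guess -- so you are not worse off, but a complete write-up would need to carry out one of your three routes; the self-improving inequality on your identity, combined with continuity of $\kappa\mapsto r_\rho^2(\kappa)$, looks like the most self-contained option. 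Everything else (sub-differential interpretation for non-smooth $\psi$, the tail truncation using $r_\kappa=o(c_\kappa)$ and the Gaussian tail of $Z$) is routine as you say.
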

\noindent  Note that log-concave densities such as those corresponding to double exponential or Gaussian errors used in the current paper fall within the scope of this theorem. $\rho$ is required to be smooth and not grow too fast at infinity. So the theorem applies to the lest-squares problem, appropriately smoothed version of the $\ell_1$ or Huber losses, as well as the less well-known dimension-adaptive optimal loss functions described in \cite{NEKOptimalMEstimationPNASPublished2013}. We refer the reader to the \SM Section \ref{supp:Reminders} and \cite{NEKRobustRegressionRigorous2013} for details.

\paragraph{Implications for the Bootstrap} \label{review:implicationsBootstrap}
For the purposes of the residual-bootstrap, Theorem \ref{thm:asympPerfPNcloseto1} and our discussion in \SM Section \ref{supp:subsec:csqResidualBoot} imply that different methods of estimating the bootstrap distribution $\hat{\epsdist}$ will result in similar bootstrap confidence intervals as $p/n\tendsto 1$ if $\hat{\epsdist}$ has the same variance.  This agrees with our simulations, where both of our proposed  bootstrap strategies set the variance of $\hat{\epsdist}$ equal to $\widehat{\sigma}^2_{\eps,LS}$ and both had similar performance in our simulations for large $p/n$. Furthermore, as we noted, for $p/n$ closer to 1, they both had similar performance to a bootstrap procedure that simply sets $\hat{\epsdist}={\cal N}(0,\widehat{\sigma}^2_{\eps,LS})$ (Figure \ref{fig:bootErrorLeaveOut}). 

We return specifically to the bootstrap based on $\stprederror$, the standardized predicted errors. Equation \eqref{eq:residual} tells us that the marginal distribution of $\prederror$ is a convolution of the distribution of $\prederror$ and a normal, with the variance of the normal governed by the term $\norm{\betaHat_\rho-\beta}_2$. Theorem \ref{thm:asympPerfPNcloseto1} makes rigorous our previous assertion that as $p/n\tendsto 1,$ the normal term will dominate and the marginal distribution of $\prederror$ will approach normality, regardless of the distribution of $\epsilon$. However,  Theorem \ref{thm:asympPerfPNcloseto1} also implies that as $p/n\tendsto 1,$ inference for the coordinates of $\beta$ will be increasingly less reliant on features of the error distribution beyond the variance, implying that our standardized predited errors, $\stprederror$, will still result in an estimate $\hat{\epsdist}$ that will give accurate confidence intervals. Conversely, as $p/n \rightarrow 0$ classical theory tells us that the inference of $\beta$ relies heavily on the distribution $\epsdist$ beyond the first two moments, but in that case the distribution of $\stprederror$ approaches the correct distribution as we explained earlier. So bootstrapping from the marginal distribution of $\stprederror$ also makes sense when $p/n$ is small. 

For $\kappa$ between these two extremes it is difficult to theoretically predict the risk of $\betaHat_\rho(\hat{\epsdist})$ when the distribution $\hat{\epsdist}$ is given by resampling from the $\stprederror$. We turn to numerical simulations to evaluate this risk. 
 
Specifically, for $\epsilon_i\sim \epsdist$, we simulated data that is a convolution of $\epsdist$ and a normal with variance equal to $r^2_\rho(\kappa;\epsdist)$; we then scale this simulated data to have variance $\sigma_\eps^2$.  The scaled data are the $\eps_i^*$ and we refer to the distribution of $\eps_i^*$ as the convolution distribution, denoted $\epsdist_{conv}$. $\epsdist_{conv}$ is the asymptotic version of the marginal distribution of the standardized predicted errors, $\stprederror$, used in our bootstrap method proposed above. 

In Figure \ref{fig:relRisk} we plot for both Huber loss and $L_1$ loss the average risk $r_\rho(\kappa;\epsdist_{conv})$ (i.e errors given by $\epsdist_{conv}$) relative to the average risk $r_\rho(\kappa;\epsdist)$ (i.e errors distributed according to $\epsdist$), where $\epsdist$ has a double exponential distribution. We also plot the relative average risk
$r_\rho(\kappa;\epsdist_{norm})$, where $\epsdist_{norm}=N(0,\sigma_{\eps}^2)$. As predicted by Theorem \ref{thm:asympPerfPNcloseto1}, for $\kappa$ close to 1,  $r_\rho(\kappa;\epsdist_{conv})/r_\rho(\kappa;\epsdist)$ and $r_\rho(\kappa;\epsdist_{norm})/r_\rho(\kappa;\epsdist)$ converge to 1. Conversely, as $\kappa \rightarrow 0$, $r_\rho(\kappa;\epsdist_{norm})/r_\rho(\kappa;\epsdist)$ diverges dramatically from 1, while $r_\rho(\kappa;\epsdist_{conv})/r_\rho(\kappa;\epsdist)$ approaches 1, as expected. For Huber, the divergence of $r_\rho(\kappa;\epsdist_{conv})/r_\rho(\kappa;\epsdist)$ from 1 is at most 8\%, but the difference is larger for $L_1$ (12\%), probably due to the fact that the convolution with a normal error has a larger effect on the risk for $L_1$.

\begin{figure}[h]
\centering
\subfloat[][Relative risk of $\epsdist_{conv}$ to $\epsdist$ ]{\includegraphics[width=.4\textwidth]{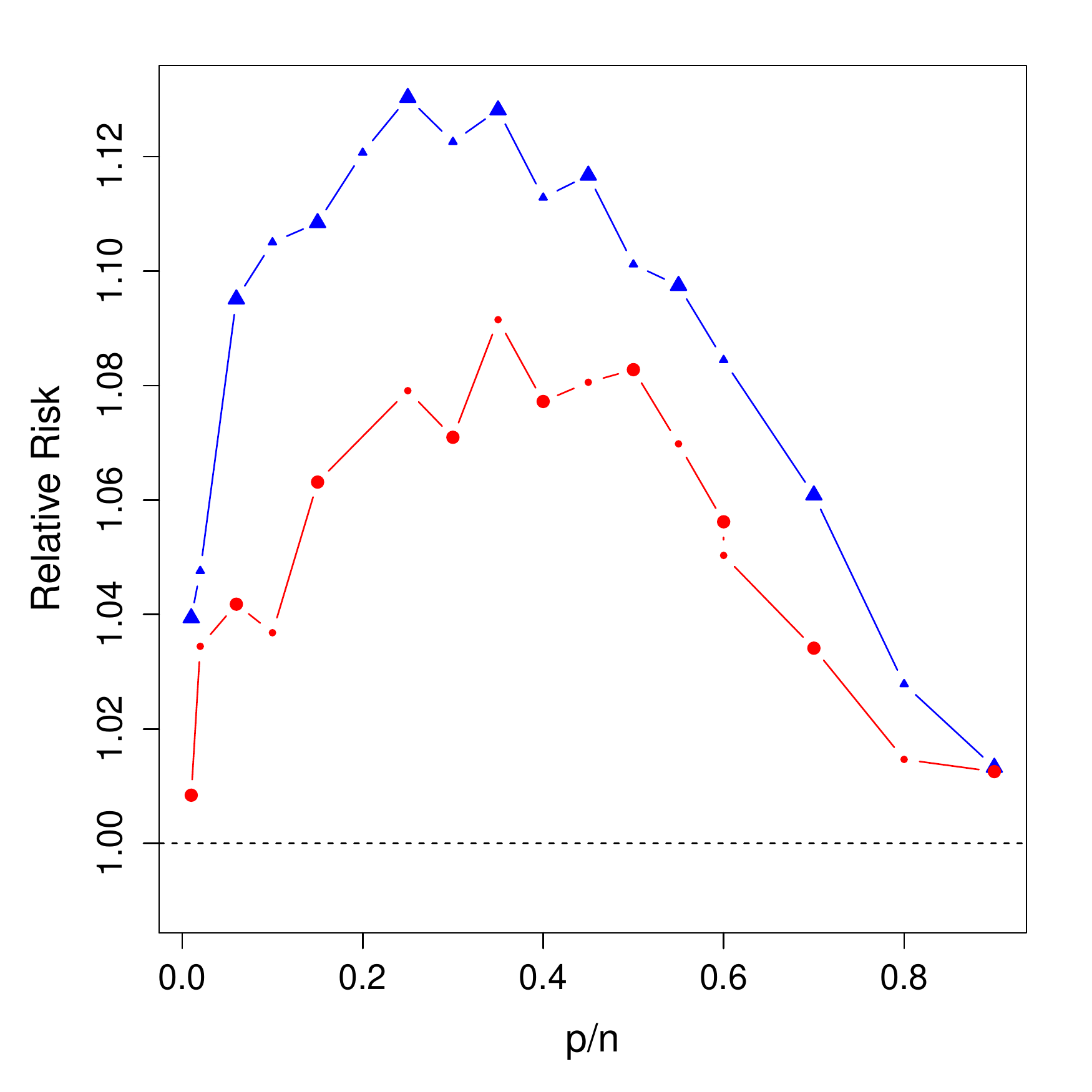}\label{subfig:relRisk:NoN}}
\subfloat[][Relative risk of $\epsdist_{conv}$ and $\epsdist_{norm}$ to $\epsdist$]{\includegraphics[width=.4\textwidth]{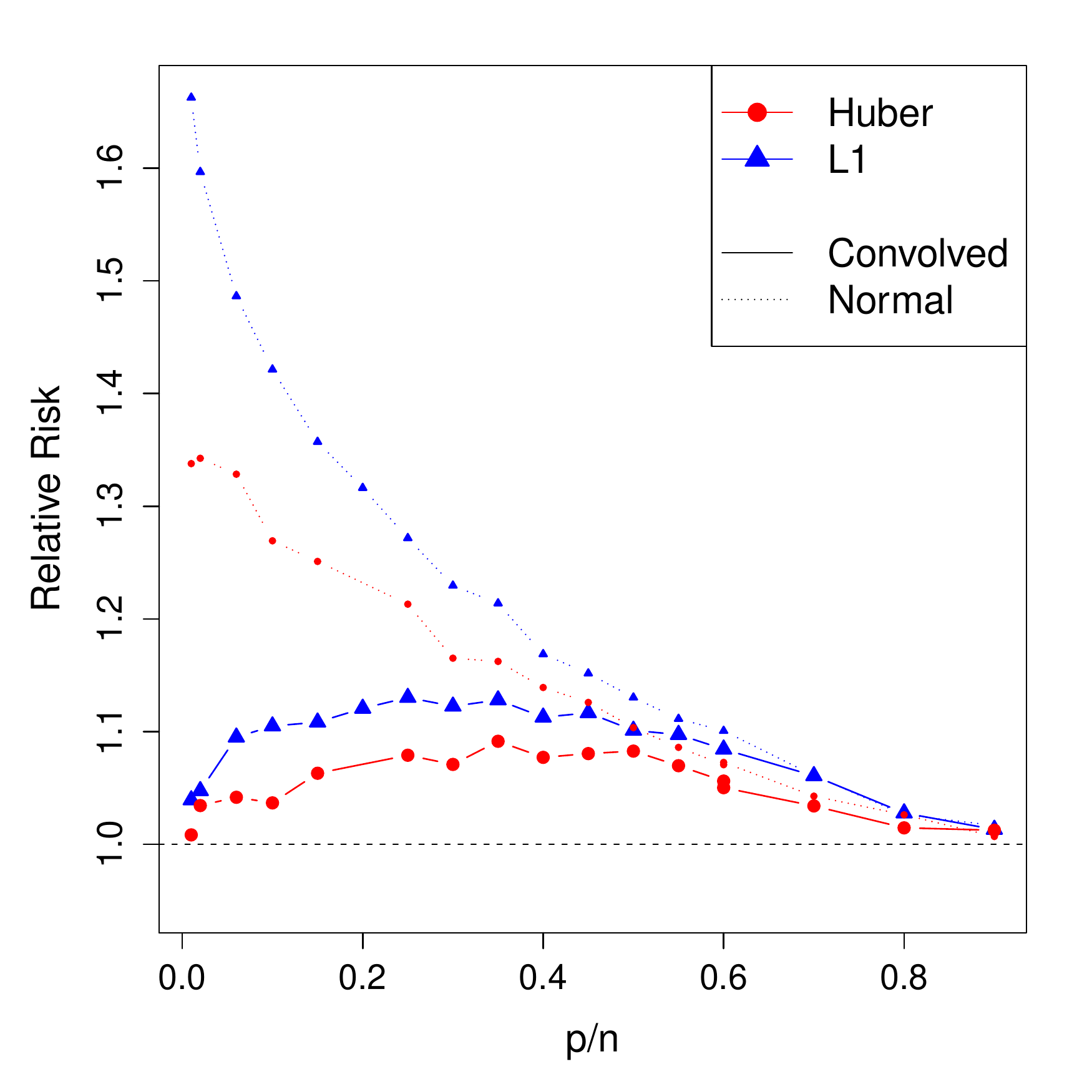}\label{subfig:relRisk:N}}
\caption{\textbf{Relative Risk of $\betaHat$ for scaled predicted errors vs original errors - population version:} \protect\subref{subfig:relRisk:NoN} Plotted with a solid lines are the ratios of the average risk of $\betaHat(\epsdist_{conv})$ to the average risk of $\betaHat(\epsdist)$ for Huber and $L_1$ loss.
\protect\subref{subfig:relRisk:N} shows the same plot, but with  the relative risk of $\betaHat(\rho)$ when the errors are distributed $\epsdist_{norm}={\cal N}(0,\sigma_\eps^2)$ added to the plot  (dotted lines). For both figures, the y-axis gives the relative risk, and the x-axis is the ratio $p/n$, with $n$ fixed at 500. Blue/triangle plotting symbols indicate $L_1$ loss; red/circle plotting symbols indicate Huber loss. The average risk is calculated over 500 simulations. The ``true" error distribution $\epsdist$ is the standard Laplacian distribution with  $\sigma^2_\eps=2$. Each simulation uses the standard estimate of $\sigma^2_\eps$ from the generated $\eps_i$'s. $r_\rho(\kappa;\epsdist)$ was computed using a first run of simulations with $\eps_i\iid \epsdist$. The Huber loss in this plot is $\text{Huber}_1$ and not the default $\text{Huber}_{1.345}$ of the \texttt{rlm} function.
}\label{fig:relRisk} 
\end{figure}

\section{Pairs Bootstrap} \label{sec:PairsBoot}
As described above, estimating the distribution $\hat{F}$ from the empirical distribution of $(y_i,X_i)$ (\emph{pairs bootstrapping}) is generally considered the most general and widely applicable method of bootstrapping, allowing for the linear model to be incorrectly specified (i.e $\Exp{y_i}$ is not a linear function of $X_i$). It is also considered to be slightly more conservative compared to bootstrapping from the residuals. In the case of random design, it makes also a lot of intuitive sense to use the pairs bootstrap, since resampling the predictors might be interpreted as mimicking the data generating process.

However, as in residual bootstrap, it is clear that the pairs bootstrap will have problems, at least in quite high  dimensions. In fact, when resampling the $X_i$'s from $\hat{F}$, the number of times a certain vector $X_{i_0}$ is picked has asymptotically $\textrm{Poisson}(1)$ distribution. So the expected number of different vectors appearing in the  bootstrapped design matrix $X^*$ is $n (1-1/e)$. When $p/n$ is large, with increasingly high probability the bootstrapped design matrix $X^*$ will no longer be of full rank. For example, if $p/n>(1-1/e)\approx 0.63$ then with probability tending to one as $n\tendsto \infty$,  the bootstrapped design matrix $X^*$ is singular, even when the original design matrix $X$ is of rank $p<n$. Bootstrapping the pairs in that situation makes little statistical sense. 

For smaller ratios of $p/n$, we evaluate the performance of pairs bootstrapping on simulated data. We see that the performance of the bootstrap for inference also declines dramatically as the dimension increases, becoming increasingly conservative (Figure \ref{fig:basicCIError}). In pairs bootstrapping, the error rates of 95\%-confidence-intervals drop far below the nominal 5\%, and are essentially zero for the ratio of $p/n=0.5$. Like residual bootstrap, this overall trend is seen for all the settings we simulated under (Supplemental Figures \ref{fig:basicCIErrorLap}, \ref{fig:basicCIErrorDesign}). For $L_1$ loss, even ratios as small as $0.1$ yield incredibly conservative bootstrap confidence intervals for $\betaHat_1$, with the error rate dropping to less than 0.01. For Huber and $L_2$ losses, the severe loss of power in our simulations starts for ratios of $0.3$ (see Tables \ref{tab:basicCIError},\ref{tab:basicCIErrorDesign}, \ref{tab:basicCIErrorLap}).

A minimal requirement for the distribution of the bootstrapped data to give reasonable inferences is that the variance of the bootstrap estimator $\betaHat^*_1$ needs to be a good estimate of the variance of $\betaHat_1$. This is not the case in high-dimensions. In Figure \ref{fig:bootOverEstFactor} we plot the ratio of the variance of $\betaHat^*_1$ to the variance of $\betaHat_1$ evaluated over simulations. We see that for $p/n=0.3$ and design matrices $X$ with i.i.d. ${\cal N}(0,1)$ entries, the  average variance of $\betaHat^*_1$ roughly overestimates the true variance of $\betaHat_1$ by a factor 1.3 in the case of least-squares; for Huber and $L_1$ the bootstrap estimate of variance is roughly twice as large as it should be (Table \ref{tab:bootOverEstFactor}).

In the case of least-squares, we can further quantify this loss in power by comparing the size of the bootstrap confidence intervals to the size of the correct confidence interval based on theoretical results (Figure \ref{fig:CIWidth}).  We see that even for ratios $\kappa$ as small as $0.1$, the confidence intervals for some design matrices $X$ were 15\% larger for pairs bootstrap than the correct size (e.g.  the case of elliptical distributions where $\lambda_i$ is exponential). For much higher dimensions of $\kappa=0.5$, the simple case of i.i.d normal entries for the design matrix gives intervals that are 80\% larger than needed; for the elliptical distributions we simulated, the width of the bootstrap confidence interval was as much as 3.5 times larger than that of the correct confidence interval. Furthermore, as we can see in Figure \ref{fig:basicCIError}, least-squares regression represents the best case scenario; $L_1$ and Huber will have even worse loss of power and at smaller values of $\kappa$.

\begin{SCfigure}
	\centering
	\includegraphics[width=.5\textwidth]{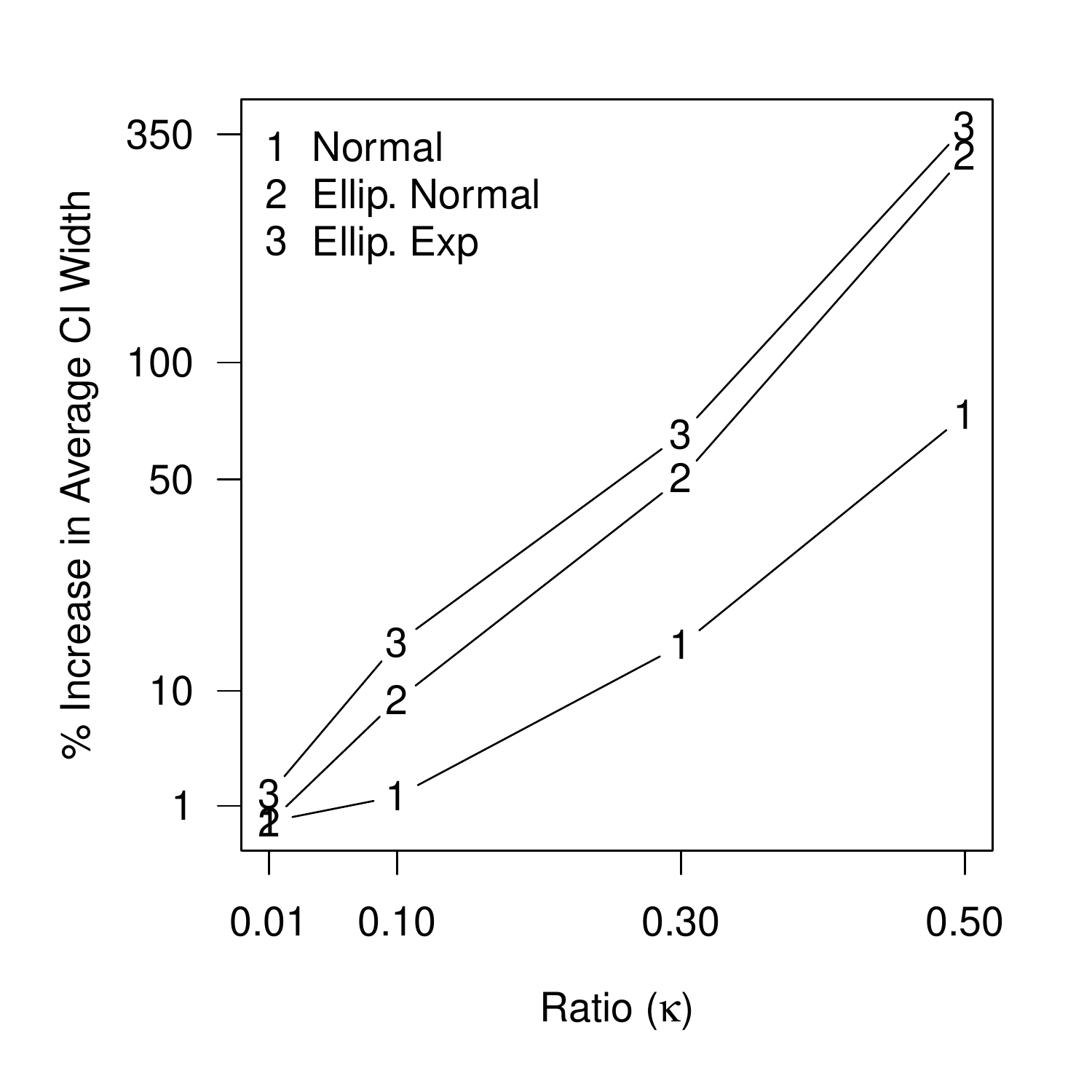} 
\caption{
 \textbf{Comparison of width of 95\% confidence intervals of $\beta_1$ for $L_2$ loss: }  Here we demonstrate the increase in the width of the confidence interval due to pairs bootstrapping. Shown on the y-axis is the percent increase of  the average confidence interval width based on simulation ($n=500$), as compared to the average for the standard confidence interval based on normal theory in $L_2$; the percent increase is plotted against the ratio $\kappa=p/n$ (x-axis).  Shown are three different choices in simulating the entries of the design matrix $X$:  (1) $X_{ij}\sim N(0,1)$ (2) elliptical $X_{ij}$ with $\lambda_i\sim N(0,1)$ and (3) elliptical $X_{ij}$ with $\lambda_i\sim Exp(\sqrt{2})$. The methods of simulation are the same as described in Figure \ref{fig:basicCIError}; exact values are given in Table \ref{tab:CIWidth}.
}\label{fig:CIWidth}
\end{SCfigure}

\subsection{Theoretical analysis for least-squares}\label{subsec:theormPairs}
In the setting of least-squares, we can for some distributions of the design matrix $X$ theoretically determine the asymptotic expectation of the variance of $v\trsp \betaHat^*$ and show that it is a severe over-estimate of the true variance of  $v\trsp \betaHat$.

We first setup some notation for the theorem that follows. Define $\betaHat_w$ as the result of regressing $y$ on $X$ with random weight $w_i$ for each observation $(y_i,X_i)$. In other words, 
$$
\betaHat_w=\argmin_{u\in \mathbb{R}^p} \sum_{i=1}^n w_i (y_i-X_i\trsp u)^2\;.
$$
We assume that the weights are independent of $\{y_i,X_i\}_{i=1}^n$ and define $\betaHat^*_w$ to be the random variable with distribution equal to that of $\betaHat_w$ conditional on the data $\{y_i,X_i\}_{i=1}^n$, i.e. $\betaHat^*_w\overset{\mathcal{L}}{=}\betaHat_w|\{y_i,X_i\}_{i=1}^n$. 
For the standard pairs bootstrap, the distribution of $\betaHat^*$ from resampling from the pairs $(y_i,X_i)$ is equivalent to the distribution of $\betaHat_w^*$, where $w$ is drawn from a multinomial distribution with expectation $1/n$ for each entry. In which case, the variance of $v\trsp \betaHat^*_w$ refers to the standard bootstrap estimate of variance given by the distribution of $v\trsp \betaHat^*$ over repeated resampling from the pairs $(y_i,X_i)$.

We have the following result for the expected value of the bootstrap variance of any contrast $v\trsp \betaHat^*_w$ where $v$ is deterministic, assuming independent weights with a Gaussian design matrix $X$ and some mild conditions on the distribution of the $w$'s.

\begin{theorem} \label{thm:bootVariance}
Let the weights $(w_i)_{i=1}^n$ be i.i.d.  and without loss of generality that $\Exp{w_i}=1$; we suppose that the $w_i$'s have 8 moments and for all $i$, $w_i>\eta>0$. Suppose $X_i$'s are i.i.d ${\cal N}(0,\Sigma)$, $\Sigma$ is positive definite and the vector $v$ is deterministic with $\norm{v}_2=1$.
	
Suppose $\betaHat$ is obtained by solving a least-squares problem and $y_i=X_i\trsp \beta+\eps_i$, $\eps_i$'s being i.i.d mean 0, with $\var{\eps_i}=\sigma^2_\eps$.
		
If $\lim p/n=\kappa<1$ then the expected variance of the bootstrap estimator, asymptotically as $n\tendsto \infty$, is given by
$$
p\frac{\Exp{\var{v\trsp \betaHat^*_w}}}{v\trsp \Sigma^{-1} v}=p\frac{\Exp{\var{v\trsp \betaHat_w|\{y_i,X_i\}_{i=1}^n}}}{v\trsp\Sigma^{-1}v}\tendsto \sigma^2_\eps \left[\kappa\frac{1}{1-\kappa-\Exp{\frac{1}{(1+cw_i)^2}}}-\frac{1}{1-\kappa}\right]\;,
$$
where $c$ is the unique solution of  $\Exp{\frac{1}{1+cw_i}}=1-\kappa$.
\end{theorem}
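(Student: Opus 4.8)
The plan is to convert the bootstrap variance into two expected normalized traces of weighted sample-covariance resolvents, and evaluate those with the leave-one-out technique of \cite{NEKRobustRegressionRigorous2013,NEKRobustPaperPNAS2013Published}. Write $W=\mathrm{diag}(w_1,\dots,w_n)$ and $M_w=X\trsp W X=\sum_i w_iX_iX_i\trsp$, so that $\betaHat_w-\beta=M_w^{-1}X\trsp W\eps$ and $v\trsp(\betaHat_w-\beta)=g_w\trsp\eps$ with $g_w:=WXM_w^{-1}v\in\mathbb{R}^n$. Since $\eps$ is independent of $(X,w)$ with $\mathbb{E}[\eps\eps\trsp]=\sigma^2_\eps\id_n$, integrating out $\eps$ first shows that the data-averaged bootstrap variance is $\mathbb{E}[\mathrm{Var}(v\trsp\betaHat^*_w)]=\sigma^2_\eps\,\mathbb{E}_X[\mathrm{tr}\,\mathrm{Cov}_w(g_w\mid X)]=\sigma^2_\eps\,\mathbb{E}_X[\,\mathbb{E}_w\norm{g_w}_2^2-\norm{\mathbb{E}_w g_w}_2^2\,]$, where $\mathrm{Var}$ and $\mathrm{Cov}_w$ are over the weights with the data held fixed. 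It therefore suffices to compute the first two $w$-moments of the coordinates of $g_w$ and then average over the Gaussian design. Writing $X_i=\Sigma^{1/2}\xi_i$ with $\xi_i\iid\mathcal{N}(0,\id_p)$ and replacing $v$ by $\Sigma^{-1/2}v$ reduces to the isotropic case, in which the matrices in play are orthogonally invariant in law, so quadratic forms $u\trsp(\cdot)u$ concentrate on $\tfrac1p\mathrm{tr}(\cdot)\norm{u}_2^2$; this produces the normalization by $v\trsp\Sigma^{-1}v$.

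Next I would run the leave-one-out reduction. With $M_{w,(i)}=M_w-w_iX_iX_i\trsp$ (independent of $X_i$ and of $w_i$), Sherman--Morrison gives $(g_w)_i=\dfrac{w_i\,X_i\trsp M_{w,(i)}^{-1}v}{1+w_i\,X_i\trsp M_{w,(i)}^{-1}X_i}$, and since $X_i\sim\mathcal{N}(0,\Sigma)$ is independent of $M_{w,(i)}$ the scalar $X_i\trsp M_{w,(i)}^{-1}X_i$ concentrates on $\mathrm{tr}(\Sigma M_{w,(i)}^{-1})$. Inserting this into the identity $\sum_i w_i\,X_i\trsp M_w^{-1}X_i=p$ (again via Sherman--Morrison) yields the self-consistent equation $\mathbb{E}_w[(1+cw)^{-1}]=1-\kappa$ for the common limit $c$ of $\mathrm{tr}(\Sigma M_{w,(i)}^{-1})$. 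The hypotheses --- $w$ with eight moments, $w>\eta>0$ so that $M_w\succeq\eta X\trsp X$ is invertible with high probability when $\kappa<1$ and $X$ is Gaussian, and $v$ deterministic with $\norm{v}_2=1$ --- are exactly what make these concentrations hold uniformly in $i$, as in \cite{NEKRobustRegressionRigorous2013}.

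Then, conditioning on $(w_{-i},X)$ and using $X_i\perp w_i$: the first moment $\mathbb{E}_w[(g_w)_i\mid X]$ collapses --- after substituting $c$ for the inner scalar --- to $\tfrac{\kappa}{c}\,X_i\trsp\,\mathbb{E}_{w_{-i}}[M_{w,(i)}^{-1}]\,v$, using $\mathbb{E}_w[w/(1+cw)]=\kappa/c$; the second moment produces terms of the form $\mathbb{E}_w[w^2/(1+cw)^2]\,(X_i\trsp M_{w,(i)}^{-1}v)^2$. Averaging these quadratic forms over the isotropic $X_i$ turns $\mathbb{E}_X\norm{\mathbb{E}_w g_w}_2^2$ and $\mathbb{E}_X\mathbb{E}_w\norm{g_w}_2^2$ into expected normalized traces of $M_{w,(i)}^{-1}\Sigma M_{w,(i)}^{-1}$ and of $(\mathbb{E}_{w_{-i}}M_{w,(i)}^{-1})\Sigma(\mathbb{E}_{w_{-i}}M_{w,(i)}^{-1})$. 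These limits are obtained not from the first-order equivalent $M_{w,(i)}^{-1}\approx(c/p)\Sigma^{-1}$ (which is correct only at the level of the trace) but by differentiating the self-consistent equation, $\mathbb{E}_w[w/(1+cw)^2]=-\tfrac{d}{dc}\mathbb{E}_w[(1+cw)^{-1}]$, equivalently $1-\kappa-\mathbb{E}_w[(1+cw)^{-2}]=c\,\mathbb{E}_w[w/(1+cw)^2]$ --- and this is precisely how the factor $1-\kappa-\mathbb{E}[(1+cw)^{-2}]$ enters the denominator. Assembling the two pieces and normalizing by $p/(v\trsp\Sigma^{-1}v)$ gives the stated limit; a sanity check is $w\equiv1$, for which the bracket must vanish since then $\betaHat^*_w=\betaHat$ conditionally on the data.

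The hard part will be the second-moment step: deriving the deterministic equivalents for products of the weighted leave-one-out resolvents, controlling the fluctuations of $\mathbb{E}_{w_{-i}}[M_{w,(i)}^{-1}]$ around its deterministic equivalent, and doing so uniformly in $i$ so that $\mathbb{E}_X$ and $\mathbb{E}_w$ can be interchanged with error $\lo_P$ of the $1/n$-scale leading term. Even the unweighted case flags the danger: $\mathbb{E}[v\trsp(X\trsp X)^{-1}v]=v\trsp\Sigma^{-1}v/(n-p-1)$ carries a $1/(1-\kappa)$ that the crude first-order substitution misses, so the general argument must reproduce all such factors through the differentiated self-consistency relation. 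By contrast, the remaining algebra --- expressing $c$, $\mathbb{E}_w[w/(1+cw)]$, $\mathbb{E}_w[w^2/(1+cw)^2]$ and $\mathbb{E}_w[(1+cw)^{-2}]$ in the claimed closed form and simplifying to the displayed bracket --- I expect to be routine.
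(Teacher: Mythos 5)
Your overall architecture is the same as the paper's. The paper also starts from the total-variance decomposition $\Exp{\var{v\trsp \betaHat_w\mid \text{data}}}=\var{v\trsp\betaHat_w}-\var{v\trsp \bHat}$ (your $\myExp_w\left(\norm{g_w}_2^2\right)-\norm{\myExp_w\left(g_w\right)}_2^2$ after integrating out $\eps$ is exactly this, since $v\trsp(\betaHat_w-\beta)=g_w\trsp\eps$ and $\eps$ is independent of $(X,w)$); it reduces to $\Sigma=\id_p$, $\beta=0$, $v=e_p$ by the same invariance arguments; it derives $\Exp{1/(1+cw_i)}=1-\kappa$ from $\sum_i w_iX_i\trsp M_w^{-1}X_i=p$ via Sherman--Morrison; it evaluates the total-variance term as $\kappa\,\trace{\SigmaHat_w^{-2}}/n\big/[\trace{\SigmaHat_w^{-1}}/n]^2-1$ by differentiating the resolvent identity at $z=0$; and it closes with the same algebraic identity $1-\kappa-\Exp{(1+cw)^{-2}}=c\,\Exp{w/(1+cw)^2}$. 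Up to that point your plan tracks the paper step for step and is sound.

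The genuine gap is in your treatment of the bagged term $\Exp{\norm{\myExp_w\left(g_w\mid X\right)}_2^2}$. The differentiated self-consistency relation you invoke controls $\trace{\SigmaHat_w^{-2}}$, i.e.\ the product of the resolvent with itself \emph{at the same weight vector}; it does not give you $\Exp{\left(X_i\trsp \myExp_{w_{(i)}}\left(M_{w,(i)}^{-1}\right)v\right)^2}$, which involves the square of the $w$-\emph{averaged} resolvent, a different object (one has $\myExp_w\left(M_w^{-2}\right)\succeq \left(\myExp_w\left(M_w^{-1}\right)\right)^2$). Substituting the first-order equivalent $\myExp_{w_{(i)}}\left(M_{w,(i)}^{-1}\right)\approx (c/p)\id_p$ here yields $p\,\var{v\trsp\bHat}\tendsto\kappa\sigma^2_\eps$ instead of the correct $\kappa\sigma^2_\eps/(1-\kappa)$: the missing $1/(1-\kappa)$ lives entirely in the design fluctuations of the averaged resolvent, and nothing in your toolkit as described produces it. The paper sidesteps this computation altogether with a scalar fixed-point argument: writing $\bHat_p=\frac1p\sum_i \myExp_{w_i}\left(\tfrac{cw_i}{1+cw_i}\right)X_i(p)\left(\eps_i-V_i\trsp\gHat_{(i)}\right)(1+\lo_{L_2}(1))$ and exploiting the coordinate symmetry $\Exp{\norm{\gHat_{(i)}}_2^2}\simeq p\Exp{\bHat_p^2}$, it obtains $\Exp{\norm{\bHat}_2^2}=\kappa\left(\sigma^2_\eps+\Exp{\norm{\bHat}_2^2}\right)(1+\lo(1))$, hence $\Exp{\norm{\bHat}_2^2}\tendsto\frac{\kappa}{1-\kappa}\sigma^2_\eps$, with no analysis of products of averaged resolvents. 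You should replace your second-moment attack on the bagged term with this self-consistency argument, or be prepared to prove a new deterministic-equivalent result for $\myExp_w\left(M_{w}^{-1}\right)\Sigma\,\myExp_w\left(M_{w}^{-1}\right)$ including its fluctuation term.
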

\noindent For a proof of this theorem and a consistent estimator of this limit, see Supplementary Text, Section \ref{supp:BootVarProof}. We note that $\Exp{\frac{1}{(1+cw_i)^2}}\geq \left[\Exp{\frac{1}{1+cw_i}}\right]^2=(1-\kappa)^2$ - where the first inequality comes from Jensen's inequality, and therefore the expression we give for the expected bootstrap variance is non-negative. 

In  \ref{subsubsec:extensionsThmBootVariance} below, we discuss possible extensions of this theorem, such as different design matrices. Before doing so, we first will discuss the implications of this result to pairs bootstrapping.

\paragraph{Multinomial Weights} In the standard pairs bootstrap, the weights are chosen according to a Multinomial$(n,1/n)$ distribution. This violates two conditions in the previous theorem: independence of $w_i$'s and the condition $w_i>0$.  In what follows, we use i.i.d $\textrm{Poisson}(1)$ weights as a proxy for the Multinomial$(n,1/n)$ to develop intuition about this latter case (see \SM Section \ref{supp:MultinomialVsPoisson} for technical details addressing these issues).

\paragraph{Implications for Pairs Bootstrap} \label{review:implicationsPairs} We can use the formula in Theorem \ref{thm:bootVariance} to explain why pairs bootstrap confidence intervals perform poorly in high-dimensions, at least for least squares regression with Gaussian design matrix. 

When $X_i\iid \mathcal{N}(0,\Sigma)$,  it is well known in the least-squares case that the quantity $p\ \var{v\trsp \betaHat}/v\trsp \Sigma^{-1} v$ converges asymptotically to $\kappa/(1-\kappa)\sigma^2_\eps$ (this can be shown through simple Wishart computations \citep{Haff79IdentityWishartDWithApps,mardiakentbibby}). If the variance of $v\trsp \betaHat_w^*$ converged to the variance of $v\trsp \betaHat$, we should be able to equate this latter quantity to the limit given in Theorem \ref{thm:bootVariance}, i.e.,
$$
\left[\kappa\frac{1}{1-\kappa-\Exp{\frac{1}{(1+cw_i)^2}}}-\frac{1}{1-\kappa}\right]=\frac{\kappa}{1-\kappa}\;,
$$
and hence should have
$$
\Exp{\frac{1}{(1+cw_i)^2}}=\frac{1-\kappa}{1+\kappa}\;.
$$

However, this relationship does not hold for most weight distributions.  In particular for weights following a  Poisson(1) distribution (which asymptotically corresponds to the standard pairs bootstrap), numerical calculations show that this relationship does not hold, and thus pairs bootstrap does not correctly estimate the variance of $v\trsp \betaHat$. In Figure \ref{subfig:bootOverEstFact:Theory} we calculate the theoretical predictions of $\Exp{\var{\betaHat_w^*}}$ given by Theorem \ref{thm:bootVariance} (using Poisson(1) weights and $\Sigma=\id_p$), and we compare them to the asymptotic variance of $\betaHat_1$ given by $\kappa/(1-\kappa)\sigma^2_\eps/p$. We see that Theorem \ref{thm:bootVariance} predicts that the pairs bootstrap overestimates the variance of the estimator by a factor that ranges from 1.2 to 3 as $\kappa$ varies between $0.3$ and $0.5$. These theoretical predictions correspond to the level of overestimation of the variance seen in our bootstrap simulations (Figure \ref{subfig:bootOverEstFact:Sim}).

\begin{figure}[t]
\centering
\subfloat[][$L_2$ (Theoretical)]{	\includegraphics[width=.4\textwidth, trim=0 -4.5cm 0 0]{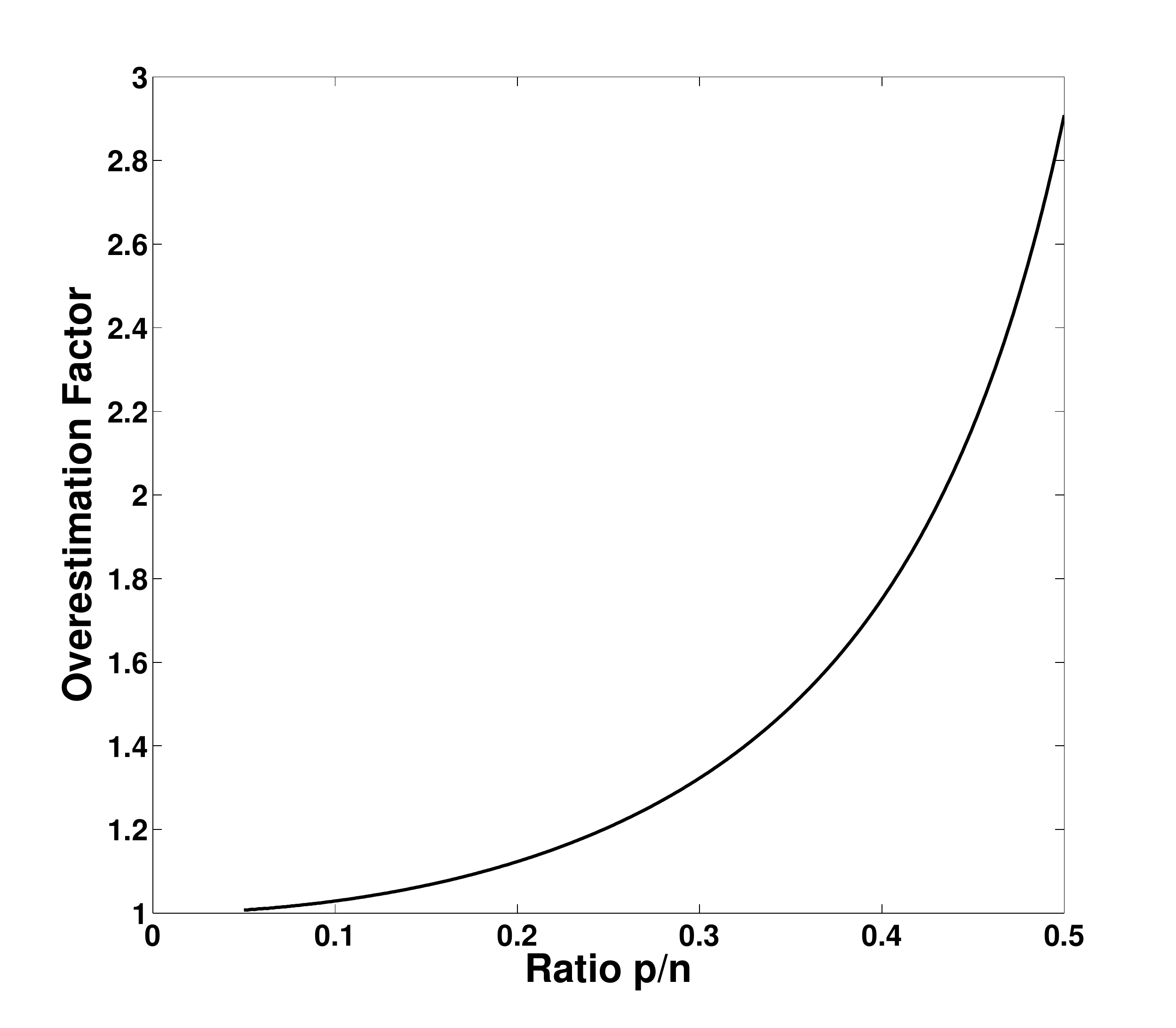}
\label{subfig:bootOverEstFact:Theory}}
\subfloat[][All (Simulated)]{
\includegraphics[width=.65\textwidth]{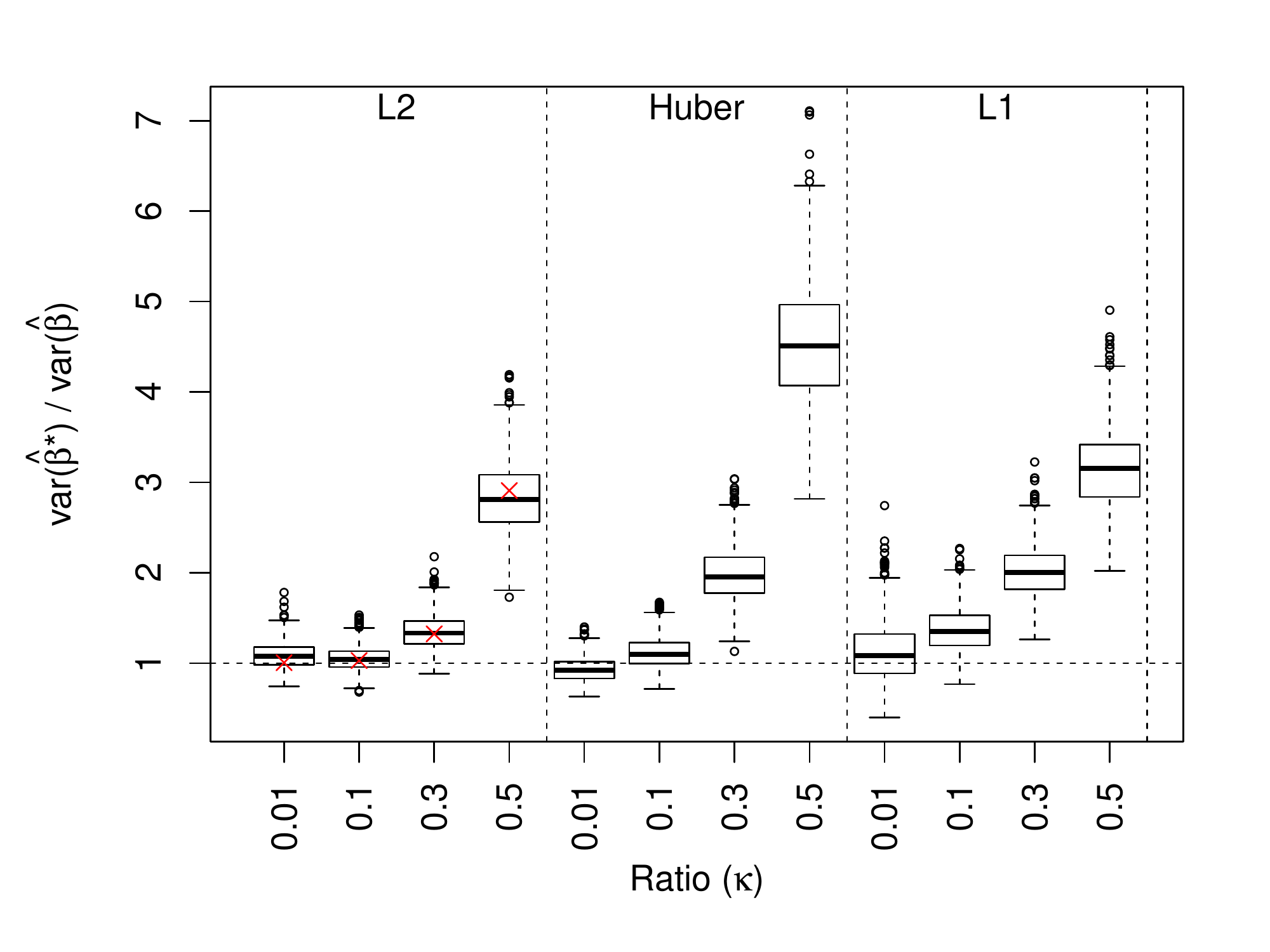}
\label{subfig:bootOverEstFact:Sim}
}
\caption{\textbf{Factor by which standard pairs bootstrap over-estimates the variance:} \protect\subref{subfig:bootOverEstFact:Theory} plotted is the ratio of the value of the expected bootstrap variance computed from Theorem \ref{thm:bootVariance} using Poisson(1) weights to the asymptotic variance $\kappa/(1-\kappa)\sigma^2_\eps$. 
\protect\subref{subfig:bootOverEstFact:Sim} boxplots of the ratio of the bootstrap variance of $\betaHat_1^*$ to the variance $\betaHat_1$, as calculated over 1000 simulations (i.e.  $\var{\betaHat}$ is estimated across simulated design matrices $X$, and not conditional on $X$). The theoretical prediction for the mean of the distribution from Theorem \ref{thm:bootVariance} is marked with a `X' for $L_2$ regression. Simulations were performed with normal design matrix $X$ and normal error $\epsilon_i$ with values of $n=500$.  For the median values of each boxplot, see Supplementary Table \ref{tab:bootOverEstFactor}. 
}\label{fig:bootOverEstFactor}
\end{figure}

\subsubsection{Extensions of Theorem \ref{thm:bootVariance}}\label{subsubsec:extensionsThmBootVariance}

\paragraph{Case of elliptical design} In light of previous work  on model robustness issues in high-dimensional statistics (see e.g \citep{DiaconisFreedmanProjPursuit84,HallMarronNeemanJRSSb05,nekCorrEllipD,nekMarkoRiskPub2010}), it is natural to ask whether the central results of Theorem \ref{thm:bootVariance} still apply when $X_i\equalInLaw \lambda_i Z_i$, with $\lambda_i$ a random variable independent of $Z_i$, and $Z_i\sim{\cal N}(0,\Sigma)$. We require $\Exp{\lambda_i^2}=1$ so that $\scov{X_i}=\Sigma$, as in the assumptions of Theorem \ref{thm:bootVariance}. The short answer is that the formula in Theorem \ref{thm:bootVariance} does not apply directly to this case. However, the proof given in \SM Section \ref{supp:BootVarProof} can be extended to that setting. We refer the interested reader to the Supplementary Text Section \ref{supp:PairsEllipticalDesign}  for more details.

\paragraph{Going beyond the Gaussian design} As explained in several papers 
in random matrix theory, a number of the quantities appearing in our theorems will converge to the same limit when i.i.d Gaussian predictors are replaced by i.i.d predictors with mean 0 and variance 1 and enough moments (an example being bounded random variables). Since our proof relies on random-matrix-theoretic arguments, the results we present here should be fairly robust to changing normality assumptions to i.i.d-ness assumptions for the entries of the design matrix $X$. The technical work necessary for making this rigorous, however, is beyond the scope of this paper.

\subsection{Alternative weight distributions for resampling}\label{subsec:newBootWeights}
The formula given in Theorem \ref{thm:bootVariance} suggests that resampling from a distribution $\hat{F}$ defined using weights other than i.i.d Poisson(1) (or, equivalently for our asymptotics, Multinominal(n,1/n)) should give us better bootstrap estimators than using the standard pairs bootstrap.  In fact, we should require, at least, that the bootstrap expected variance of these estimators match the correct variance $\var{v\trsp \betaHat}=\kappa/(1-\kappa)\sigma^2_\eps/p$ (for the Gaussian design, when $\Sigma=\id_p$).   We focus our discussion on the case $\Sigma=\id_p$; see \SM Section \ref{supp:subsec:acceptableWeightDistribution} for the case $\Sigma\neq \id_p$.

We note that if we use $w_i=1,\  \forall i$, the bootstrap variance will be 0, since with such a resampling scheme the resampled dataset is always the original dataset.  On the other hand, we have seen that with $w_i\sim \text{Poisson}(1)$, the expected bootstrap variance was too large compared to $\kappa/(1-\kappa)\sigma^2_\eps/p$. Hence, we tried to find alternative weights via calculating a parameter $\alpha$ such that if
\begin{equation}\label{eq:wtEq}
w_i\iid 1-\alpha+ \alpha \text{Poisson}(1)\;,
\end{equation}
the expected bootstrap variance would match the theoretical value of $\kappa/(1-\kappa)\sigma^2_\eps/p$.

We solved numerically this problem to find $\alpha(\kappa)$ (see Supplementary Table \ref{tab:GoodWeightProportions} and Supplementary Text, Subsection \ref{supp:subsec:acceptableWeightDistribution} for details of computation). 
We then used these values and performed bootstrap resampling using the weights defined in Equation \eqref{eq:wtEq}. We evaluated bootstrap estimate of $\var{\betaHat_1}$ as well as the confidence interval coverage of the true $\beta_1$. We find that this adjustment of the weights in estimating $\hat{F}$ results in accurate bootstrap  estimates of variance and appropriate levels of confidence interval coverage  (Table \ref{tab:SummaryCorrectedBootSimulations}).

However, small changes in the choice of $\alpha$ can result in fairly large changes in $\Exp{\var{v\trsp \betaHat_w|X,\eps}}$. For instance, for $\kappa=0.5$, using the value of $\alpha=0.95$ which is  close to the correct value of $\alpha(0.5)=0.92$ results in an expected bootstrap variance roughly 30\% larger than it should be.

\begin{table}
\begin{center}
\begin{tabular}{|c|c|c|c|c|c|c|c|c|c|c|}\hline
&\multicolumn{4}{c|}{$\kappa$}\\
 & \multicolumn{1}{c}{.1} & \multicolumn{1}{c}{.2} &   \multicolumn{1}{c}{.3} &  \multicolumn{1}{c|}{.5}\\ \hline
$\alpha$ & .9875& .9688  &  .9426  &  .9203  \\ 
Error Rate of 95\% CIs & 0.051 & 0.06 & 0.061 &  0.057 \\ 
Ratio of Variances & 1.0119 &   1.0236 & 0.9931 & 0.9992 \\ \hline
\end{tabular}
\end{center}
\caption{\textbf{Summary of weight-adjusted bootstrap simulations for $L_2$ : } Given are the results of performing bootstrap resampling for $n=500$ according to the estimate of $\hat{F}$ given by the weights in Equation \eqref{eq:wtEq}.  ``Error Rate of 95\% CIs" denotes the percent of bootstrap confidence intervals that did not containing the correct value of the parameter $\beta_1$. ``Ratio of Variances" gives the ratio of the empirical expected bootstrap variance over our simulations divided by the theoretical value $\sigma^2_\eps\kappa/(1-\kappa)$. Results are based on 1000 simulations, with a Gaussian random design and errors distributed as double exponential.}
\label{tab:SummaryCorrectedBootSimulations}
\end{table}

Moreover, this strategy for finding a good weight distribution requires knowing a great deal about the distribution of the design matrix. Hence the work we just presented on finding new weight distributions for bootstrapping is a proof of principle that alternative weighting schemes could be used for pairs bootstrapping in high-dimension, but important practical details would depend strongly on the statistical model that is assumed. This is in sharp contrast with the low-dimensional situation, where a unique and model-free bootstrap resampling technique works  in a broad variety of situations.

\section{The Jackknife} \label{sec:jackknife}
In the context we are investigating, where we know that the distribution of $\betaHat_1$ is asymptotically normal (see \SM \ref{supp:Reminders}), it is natural to ask whether we could simply use the jackknife to estimate the variance of $\betaHat_1$. The jackknife relies on leave-one-out procedures to estimate $\var{\betaHat_1}$. More specifically,  for a fixed vector $v$,  $\var{v\trsp \betaHat}$: 
\begin{equation}\label{eq:defJackVar}
\widehat{var}_{JACK}(v\trsp\betaHat)=\varJack=\frac{n-1}{n}\sum_{i=1}^n(v\trsp[\betaHat_{(i)}-\tilde{\beta}])^2
\end{equation}
where $\tilde{\beta}=\frac{1}{n}\sum_{i=1}^n \betaHat_{(i)}$. The case of $\betaHat_1$ corresponds to picking $v=e_1$, i.e the first canonical basis vector. The Efron-Stein inequality guarantees in general that the expectation of the jackknife estimate of variance gives an upper-bound on the variance of the statistic under consideration \citep{EfronStein81}.

Given the problems we just documented with the pairs bootstrap, it is natural to ask whether confidence intervals based on the jackknife estimate of variance  perform better than pairs bootstrap intervals in high-dimensions. The jackknife is known to have problems (\cite{EfronBook82} or \cite{KoenkerQuantileRegressionBook05}, p.105), 
 but the reliance of the  jackknife on leave-one-out estimates $\betaHat_{(i)}$ might suggest it could be more  robust to dimensionality issues than other methods. 

\paragraph{Empirical findings} As in the pairs bootstrap case, simulations show that confidence intervals based on the jackknife estimate of variance lead to extremely poor inference for $\beta_1$ (Figure \ref{fig:basicCIError}) and that the jackknife dramatically overestimates the variance of $\betaHat_1$ (Figure \ref{fig:bootOverEstFactorJack} and Supplementary table \ref{tab:bootOverEstFactor}). For $L_2$ and Huber loss, the jackknife estimate of variance is 10-15\% too large for $p/n=0.1$, and for $p/n=0.5$ the jackknife estimate of variance is 2-2.5 times larger than it should be. In the case of $L_1$ loss, the jackknife variance is completely erratic, even in low dimensions; this is not completely surprising given the known problems with the jackknife for the median \citep{KoenkerQuantileRegressionBook05}. Even for $p/n=0.01$, the estimate is not unbiased for $L_1$, with median estimates twice as large as they should be and enormous variance in the estimates of variance. Higher dimensions only worsen the behavior with jackknife estimates being 15 times larger than they should.  

\begin{SCfigure}
\centering
\includegraphics[width=.6\textwidth]{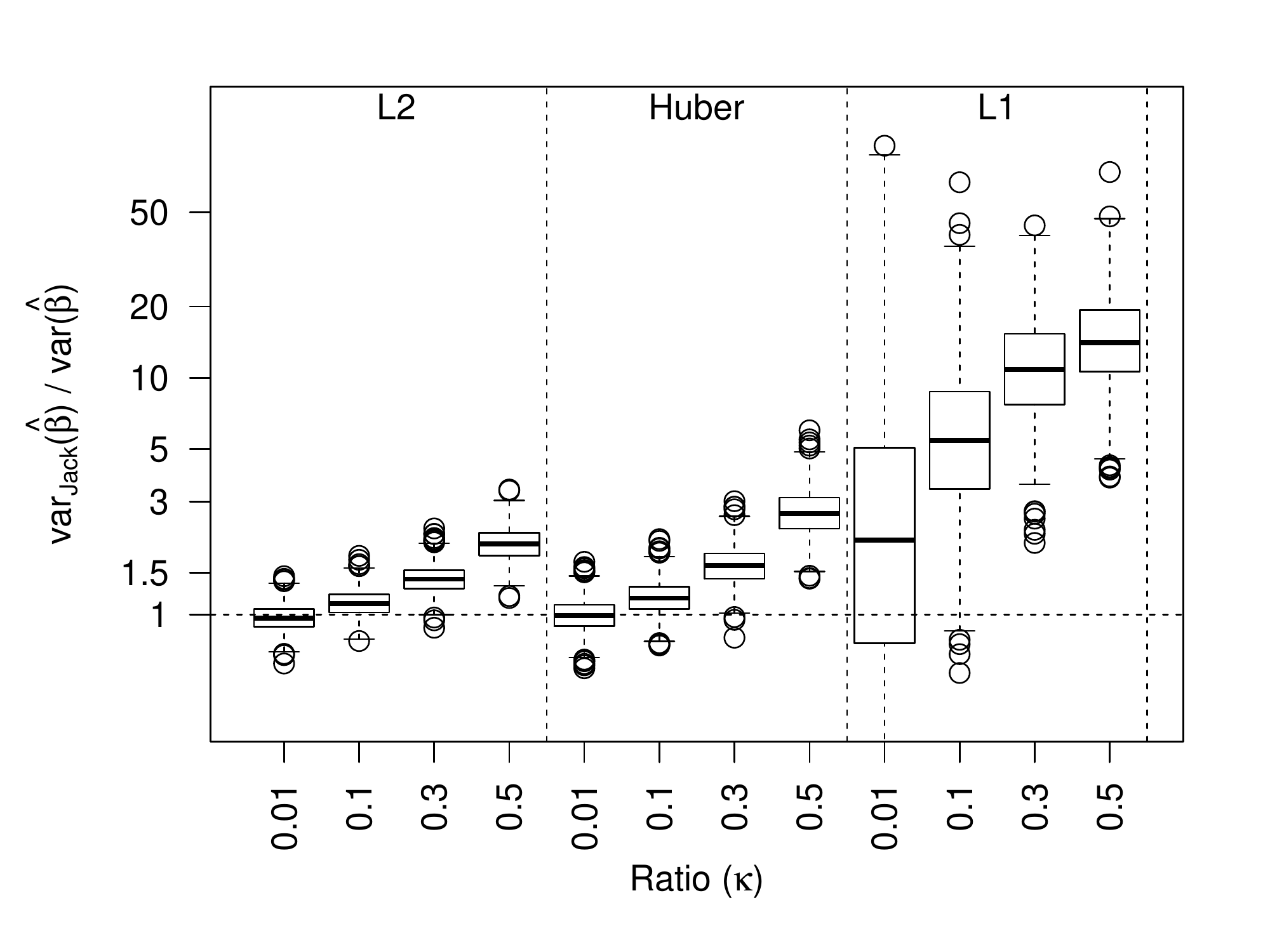}
\caption{\textbf{Factor by which jackknife over-estimates the variance:} boxplots of the ratio of the jackknife estimate of the variance $\betaHat_1$ to the variance of $\betaHat_1$ as calculated over 1000 simulations. Simulations were with normal design matrix $X$ and normal error $\epsilon_i$ with values of $n=500$. Note that because the $L_1$ jackknife estimates so wildly overestimate the variance, in order to put all the methods on the same plot the boxplot of ratio is on log-scale; y-axis labels give the corresponding ratio to which the log values correspond. For the median values of each boxplot, see Supplementary Table \ref{tab:bootOverEstFactor}.
}\label{fig:bootOverEstFactorJack}
\end{SCfigure}

\subsection{Theoretical results} 
Again, in the case of least-squares regression  with a Gaussian design matrix, we can theoretically evaluate the behavior of the jackknife. The proof of the following theorem is given in the \SM Section \ref{supp:JackknifeProof}.

\begin{theorem}\label{thm:Jackknife}
Let us call $\varJack$ the jackknife estimate of variance of $\betaHat_1$, the first coordinate of $\betaHat$. Suppose the design matrix $X$ is such that $X_{i}\iid {\cal N}(0,\Sigma)$.  Suppose $\betaHat$ is computed using least-squares and the errors $\eps$ have a variance. Then  we have, as $n,p\tendsto \infty$ and $p/n\tendsto \kappa<1$,  
$$
\frac{\Exp{\varJack}}{\var{\betaHat_1}}\tendsto \frac{1}{1-\kappa}\;.
$$
The same result is true for the jackknife estimate of variance of $v\trsp \betaHat$, where $v$ is any deterministic vector  with $\norm{v}_2=1$.
\end{theorem}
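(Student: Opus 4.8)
The plan is to compute $\Exp{\varJack}$ directly using the exact algebraic relation between the leave-one-out estimates $\betaHat_{(i)}$ and the full estimate $\betaHat$ in least-squares, and then use Wishart-type moment computations to take the limit. Recall the standard downdating identity: in least-squares, $\betaHat-\betaHat_{(i)} = \frac{1}{1-h_i}(X\trsp X)^{-1}X_i\, \resid_i$, where $h_i$ is the $i$-th leverage and $\resid_i$ the $i$-th residual. Hence $v\trsp(\betaHat-\betaHat_{(i)}) = \frac{\resid_i}{1-h_i}\, v\trsp (X\trsp X)^{-1}X_i$. First I would substitute this into the definition \eqref{eq:defJackVar}. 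Writing $a_i \triangleq v\trsp(X\trsp X)^{-1}X_i$ and $b_i \triangleq \resid_i/(1-h_i)$, we have $v\trsp(\betaHat_{(i)}-\tilde\beta) = -a_i b_i + \frac{1}{n}\sum_j a_j b_j$, so $\varJack = \frac{n-1}{n}\bigl[\sum_i a_i^2 b_i^2 - \frac{1}{n}(\sum_i a_i b_i)^2\bigr]$.

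Next I would argue that $\sum_i a_i b_i = v\trsp(X\trsp X)^{-1}\sum_i X_i \resid_i/(1-h_i)$ is negligible (it vanishes if all $h_i$ were equal, since $\sum_i X_i\resid_i = 0$; with Gaussian design $h_i = p/n + \lo_P(1)$ uniformly, so this cross term is lower order). Thus the leading term is $\Exp{\sum_i a_i^2 b_i^2}$. Since $\resid_i = (1-h_i)\prederror$ exactly, $b_i = \prederror$, and by Equation \eqref{eq:residual} (exact for least-squares, $u_n=0$) $\prederror = \eps_i + \norm{\betaHat_{(i)}-\beta}_2 Z_i$ with $\Exp{\prederror^2} \to \sigma^2_\eps + r^2$ where $r^2 = \kappa/(1-\kappa)\,\sigma^2_\eps$ is the least-squares risk (here $\lambda_i=1$ for Gaussian design); so $\Exp{\prederror^2}\to \sigma^2_\eps/(1-\kappa)$. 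The errors $\eps_i$ (hence $\prederror$) are independent of $(X\trsp X)^{-1}$ and of the $X_i$ entering $a_i$, so one expects the factorization $\Exp{\sum_i a_i^2 b_i^2} \approx \Exp{\prederror^2}\cdot \Exp{\sum_i a_i^2}$. Finally $\sum_i a_i^2 = \sum_i (v\trsp(X\trsp X)^{-1}X_i)^2 = v\trsp(X\trsp X)^{-1}(X\trsp X)(X\trsp X)^{-1}v = v\trsp(X\trsp X)^{-1}v$, whose expectation over a Wishart$(n,\Sigma)$ matrix is $\frac{1}{n-p-1}v\trsp\Sigma^{-1}v$. Combining, $\Exp{\varJack}\to \frac{\sigma^2_\eps}{1-\kappa}\cdot\frac{v\trsp\Sigma^{-1}v}{n(1-\kappa)}$. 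Since $\var{v\trsp\betaHat} = \sigma^2_\eps\, v\trsp(X\trsp X)^{-1}v$ with $\Exp{v\trsp(X\trsp X)^{-1}v} = \frac{v\trsp\Sigma^{-1}v}{n-p-1}\sim \frac{v\trsp\Sigma^{-1}v}{n(1-\kappa)}$, the ratio is $\frac{1}{1-\kappa}$, as claimed. The statement for general $v$ with $\norm{v}_2=1$ is immediate since nothing in the argument used $v=e_1$.

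The main obstacle I anticipate is making the factorization $\Exp{\sum_i a_i^2 \prederror^2}\approx \Exp{\prederror^2}\Exp{\sum_i a_i^2}$ rigorous: although $\eps$ is independent of $X$, the quantities $\prederror$ and $a_i$ are not independent coordinate-by-coordinate because $\prederror$ involves $\betaHat_{(i)}-\beta$ (which depends on $X$) and $a_i$ involves $X_i$ and $(X\trsp X)^{-1}$; one must control the fluctuations of $h_i$ around $\kappa$ and of $\norm{\betaHat_{(i)}-\beta}_2$ around its limit uniformly in $i$, and show the covariance terms are $\lo(n^{-1})$ relative to the main term. This requires the concentration estimates from the random-matrix analysis of least-squares (uniform control of leverages, trace concentration for $(X\trsp X)^{-1}$), together with the assumed finite variance of $\eps$; these are exactly the tools available from the cited work. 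A secondary technical point is confirming the cross term $\frac{1}{n}(\sum_i a_i b_i)^2$ is genuinely lower order, which again follows from $\max_i|h_i - \kappa| = \lo_P(1)$ and $\sum_i X_i\resid_i = 0$.
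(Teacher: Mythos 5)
Your proposal is correct and follows essentially the same route as the paper's proof: both start from the leave-one-out identity $\betaHat-\betaHat_{(i)}=(1-h_i)^{-1}(X\trsp X)^{-1}X_i\resid_i$ (the paper writes it as $\frac{1}{n}\SigmaHat_{(i)}^{-1}X_i\resid_i$), dismiss the centering term as lower order, and reduce the main term to $\Exp{\prederror^2}\cdot\Exp{v\trsp(X\trsp X)^{-1}v}$ via concentration of the leverages and Wishart moments. The "factorization" step you flag as the main obstacle is exactly where the paper does its real work — it expands $\prederror=\eps_i-X_i\trsp(\betaHat_{(i)}-\beta)$ and applies the Gaussian fourth-moment identity $\Exp{(a\trsp X_i)^2(b\trsp X_i)^2}=\norm{a}_2^2\norm{b}_2^2+2(a\trsp b)^2$, showing the cross term $(v\trsp\SigmaHat_{(i)}^{-1}(\betaHat_{(i)}-\beta))^2$ is negligible — so your anticipated difficulty is resolved precisely by the tools you name.
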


\paragraph{Correcting the Jackknife in Least Squares} Theorem \ref{thm:Jackknife} implies that scaling  
the jackknife estimate of variance by multiplying it by $1-p/n$ will result in an estimate of  $\var{\betaHat_1}$ with the correct expectation; simulations shown in Figure \ref{fig:jackCorrect} confirm that confidence intervals based on this corrected estimate of variance yield correct confidence intervals 
for least-squares estimates of $\betaHat$ when the design matrix $X$ is Gaussian. However this scaling factor is not robust to violations of these assumptions. In particular when the $X$ matrix follows an elliptical distribution the correction of $1-p/n$ from Theorem \ref{thm:Jackknife} gives little improvement even when the loss is still $L_2$ (Figure \ref{fig:jackCorrect}).

\newcommand{\weightedCovMatRobReg}{{\cal S}}
\paragraph{Corrections for more general settings} For the more general setting of an elliptical design matrix $X$  and loss function $\rho$, preliminary computations suggest an alternative result. Let $\weightedCovMatRobReg$ be the random matrix defined by 
$$
\weightedCovMatRobReg=\frac{1}{n}\sum_{i=1}^n \psi'(\resid_i) X_i X_i\trsp. 
$$
Then in our asymptotic regime, and when $\Sigma=\id_p$, preliminary heuristic calculations suggest that we can estimate the amount by which $\Exp{\varJack}$ overestimates the variance of $\betaHat_{1}$ by $\Exp{\hat{\gamma}}$, where
\begin{equation}\label{eq:heuristicJack}
\hat{\gamma}\triangleq \frac{\trace{\weightedCovMatRobReg^{-2}}/p}{\left[\trace{\weightedCovMatRobReg^{-1}}/p\right]^2}\;.
\end{equation}
Note that when applied to least-squares regression with $X\sim\mathcal{N}(0,Id_p)$ this conforms to our result in Theorem \ref{thm:Jackknife}. Theoretical considerations suggest that in our asymptotics, for smooth $\rho$, $\gammaHat\simeq \Exp{\gammaHat}$, which suggests a data-driven correction to the jackknife estimate of variance; however that correction depends having information about the distribution of the design matrix.

Equation \eqref{eq:heuristicJack} assumes that the loss function can be twice differentiated, which is not the case for either Huber or $L_1$ loss. In the case of non-differentiable $\rho$ and $\psi$, we can use appropriate regularizations to make sense of those functions. For $\rho=\text{Huber}_k$, i.e a Huber function that transitions from quadratic to linear at $|x|=k$, $\psi'$ should be understood as $\psi'(x)=1_{|x|\leq k}$. For $L_1$ loss, $\psi'$ should be understood as $\psi'(x)=1_{x=0}$. 

In Figure \ref{fig:jackCorrect} we show simulation results for confidence intervals created based on rescaling the jackknife estimate of variance by $\Exp{\hat{\gamma}}$  defined in Equation \eqref{eq:heuristicJack}. 
In the case of least-squares with an elliptical design matrix, this correction -- which directly uses the distribution of the observed $X$ matrix -- leads to a definite improvement in our jackknife confidence intervals. Similarly, for the Huber loss  we see a definite improvement as compared to the standard jackknife estimate, as well as an improvement over the simpler correction of $1-p/n$ that would be appropriate for squared error loss.

\begin{figure}[t]
\centering
\includegraphics[width=.85\textwidth]{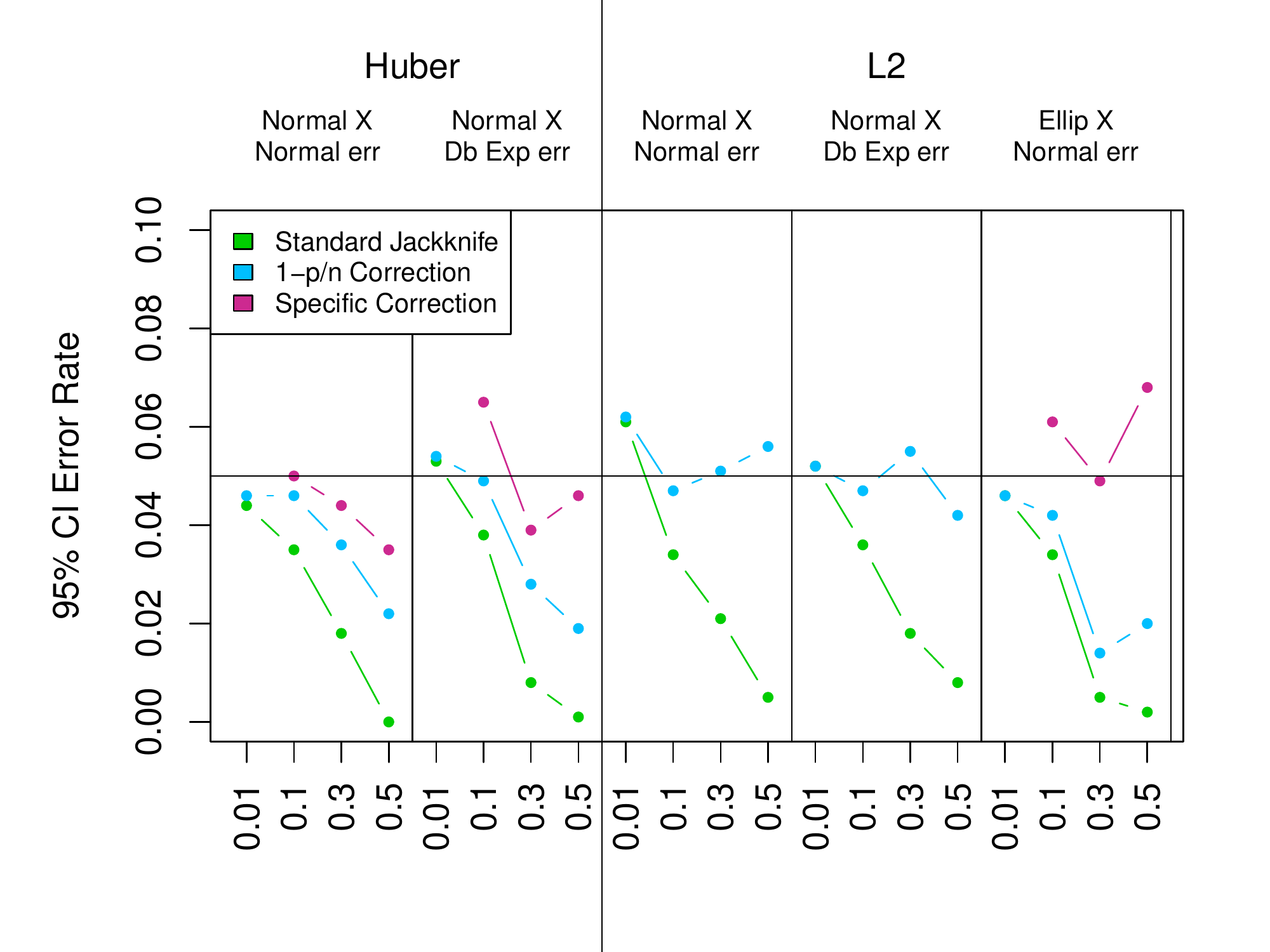}
\caption{\textbf{Rescaling jackknife estimate of variance:} Shown are the error rates for confidence intervals for different re-scalings of the jackknife estimate of variance: the  standard jackknife estimate (green); re-scaling using $1-p/n$ as given in Theorem \ref{thm:Jackknife} for the $L_2$ case with normal design matrix $X$ (blue); and re-scaling based on the heuristic in Equation \eqref{eq:heuristicJack} for those settings not covered by the assumptions of Theorem \ref{thm:Jackknife} (magenta). The Huber loss in this plot is $\text{Huber}_1$ rather than the default $\text{Huber}_{1.345}$;  $\text{Huber}_1$ is further from $L_2$ than $\text{Huber}_{1.345}$ and therefore better shows  the improvement gained by using the heuristic in Equation \eqref{eq:heuristicJack}.
}\label{fig:jackCorrect}
\end{figure}

It should be noted that the quality of this proposed correction seems to depend on how smooth $\psi$ is. In particular, even using the previous interpretations, the correction does not perform well for $L_1$ (at least for $n=1000$ and $\kappa=.1,.3,.5$, data not shown) - though as we mentioned Figure \ref{fig:bootOverEstFactorJack} shows that jackknifing in $L_1$-regression is probably not a good idea; see also \cite{KoenkerQuantileRegressionBook05}, Section 3.9. 
\subsection{Case of $\Sigma\neq \id_p$ and extensions of Theorem \ref{thm:Jackknife}}

The invariance arguments concerning $\scov{X_i}$ we give in the Supplementary Text Section \ref{sec:supp:CovAndInvarianceIssue} apply to all loss functions $\rho$ when $X_i$ has an elliptical distribution. In particular, if $\betaHat_\rho(\Sigma)$ denotes our estimator when $\scov{X_i}=\Sigma$, we have 
$$
\frac{\Exp{\varJack(v\trsp \betaHat_\rho(\Sigma))}}{\var{v\trsp \betaHat_\rho(\Sigma)}}=\frac{\Exp{\varJack(v\trsp \betaHat_\rho(\id_p))}}{\var{v\trsp \betaHat_\rho(\id_p)}}
$$
In Equation \eqref{eq:heuristicJack} we give a heuristically derived way to estimate this quantity. However, the $\gammaHat$ given there depends crucially on knowing that $\scov{X_i}=\id_p$ and cannot be used as-is when $\Sigma\neq \id_p$.

It is also natural to ask if Theorem \ref{thm:Jackknife} is likely to be true and can be extended to $X_{i,j}$'s being i.i.d with  mean 0 and variance 1. Since the proof of Theorem \ref{thm:Jackknife} is based on random matrix techniques, further technical work should allow such an extension, provided $X_{i,j}$'s have sufficiently many moments (see \SM \ref{supp:jackKnife:DesignIssue} for details).  

\section{Conclusion}
In this paper, we have studied various resampling plans in the high-dimensional setting where $p/n$ is not close to zero. One of our main findings is that the two most widely-used and advocated bootstraps will yield either highly conservative or highly anti-conservative confidence intervals. This is in sharp contrast to the low-dimensional setting where $p$ is fixed and $n\tendsto \infty$ or $p/n\tendsto 0$. Under various assumptions underlying our simulations, we explained theoretically the phenomena we were observing in our numerical work. 
We give improvements to these bootstrap methods that give confidence intervals with approximately correct coverage probability. 
However, these corrections were generally based on knowing or assuming certain non-trivial properties of the design matrix  - hence they violate the tenets of the bootstrap which promises a simple and universal numerical method to get accurate solutions to a broad class of problems. A possible exception is our proposal for resampling the standardized predicted errors. This bootstrap routine continued to perform reasonably well without distribution-specific corrections and has the potential to be a general-purpose bootstrap method for high dimensions. 
\color{black}

This work has focused on estimation of the linear model, where we can check the performance of the bootstrap against theoretical benchmarks.  The real practical power of the bootstrap lays in giving the ability to perform inference in complicated settings involving sophisticated statistical procedures for which we do not even begin to have theoretical results for the behavior of our estimators. Yet, even for the simple case of inference in the linear model and for the simplest inferential question, our work shows that the two most common and natural resampling techniques perform very poorly in only moderately high-dimensions. More importantly, these two   equally intuitive methods have completely divergent statistical  behavior with one being incredibly conservative and the other anti-conservative. This casts serious doubts about the reliability, interpretability and accuracy of inferential statements made through resampling methods in high dimensions, which is troubling for more complicated problems where resampling techniques are the only inference tools currently available. Our findings also raise many interesting new theoretical and methodological questions.

\bigskip
\begin{center}
{\large\bf SUPPLEMENTARY MATERIAL}
\end{center}

\begin{description}

\item[Supplementary Text] More detailed description of simulations and proofs of the theorems stated in main text  (see below; see also authors' website for different formatting)

\item[Supplementary Figures] Supplementary Figures referenced in the main text (pdf; see also authors' website for different formatting)

\item[Supplementary Tables] Supplementary Tables referenced in the main text (pdf; see also authors' website for different formatting)

\end{description}

\clearpage	
\renewcommand{\lowerBoundWs}{\eta}
\renewcommand{\bHat}{\widehat{b}}
\renewcommand{\gHat}{\widehat{g}}
\renewcommand{\resid}{e}
\renewcommand{\prederror}{\tilde{\resid}_{i(i)}}
\newcommand{\prederrorOnePredictorOut}{\tilde{\resid}_{i(i),[p]}}
\renewcommand{\prederrorAtj}{\tilde{\resid}_{j(i)}}
\renewcommand{\stresid}{r}
\renewcommand{\stprederror}{\tilde{\stresid}_{i(i)}}
\renewcommand{\stprederrorAtj}{\tilde{\stresid}_{j(i)}}
\renewcommand{\rTilde}{\widetilde{r}}
\renewcommand{\varJack}{\textrm{varJACK}}
\renewcommand{\epsdist}{G}
\renewcommand{\SM}{Supplementary Text, }

\appendix
\begin{center}
\textbf{\textsc{APPENDIX}}
\end{center}
\vspace{1cm}
\renewcommand{\thesection}{S\arabic{section}}
\renewcommand{\thelemma}{\arabic{section}-\arabic{lemma}}
\renewcommand{\thesubsection}{S\arabic{section}.\arabic{subsection}}
\renewcommand{\thesubsubsection}{S\arabic{section}.\arabic{subsection}.\arabic{subsubsection}}
\setcounter{equation}{0}  
\setcounter{lemma}{0}
\setcounter{corollary}{0}
\setcounter{section}{0}

\renewcommand{\thetable}{S\arabic{table}}   
\renewcommand{\thefigure}{S\arabic{figure}}
\setcounter{table}{0}
\setcounter{figure}{0}

\textbf{Notations :} in this appendix, we use $\resid_i$ to denote the $i$-th residual, i.e $\resid_i=y_i-X_i\trsp \betaHat$. We use $\prederror$ to denote the $i$-th prediction error, i.e $\prederror=y_i-X_i\trsp \betaHat_{(i)}$, where $\betaHat_{(i)}$ is the estimate of $\betaHat$ with the $i$-th pair $(y_i,X_i)$ left out. We assume that the linear model holds so that $y_i=X_i\trsp \beta+\eps_i$. We assume that the errors $\eps_i$ are i.i.d with mean 0.

\section{Deconvolution Bootstrap}\label{supp:deconvolution}
In the main text, we considered situations where our predictors $X_i$ are i.i.d with an elliptical distribution and assume for instance that $X_i=\lambda_i \xi_i$, where $\xi_i\sim {\cal N}(0,\Sigma)$ and $\lambda_i$ are i.i.d scalar random variables with $\Exp{\lambda_i^2}=1$. As described in the main text, if $X$ is elliptical, $\prederror$ is a convolution of the   correct $\epsdist$ distribution and a Normal distribution,
$$\prederror \simeq \epsilon_i + \tilde{Z}_i,$$
where $$\tilde{Z}_i\iid \mathcal{N}(0,\lambda_i^2\norm{\betaHat_{\rho(i)}-\beta}_2^2)$$ and are independent of $\epsilon_i$. 

We proposed in Section \ref{subsec:LeaveOut} of the main text an alternative bootstrap method based on using deconvolution techniques to estimate $\epsdist$ (Method 1). Specifically, we proposed the following bootstrap procedure: 
\begin{enumerate}
\item Calculate the predicted errors, $\prederror$ 
\item Estimate $|\lambda_i|\norm{\betaHat_{\rho(i)}-\beta}_2$ (the standard deviation of the $\tilde{Z}_i$)
\item Deconvolve in $\prederror$ the error term $\epsilon_i$ from the $\tilde{Z}_i$ term ; 
\item Use the resulting estimates of $\epsdist$ as the estimate of $\hat{\epsdist}$ in residual bootstrapping.
\end{enumerate} 

\subsection{Estimating $\bm{\norm{\betaHat_\rho-\beta}}$ and the variance of the $Z_i$}
\label{suppsec:estimateDeconVar} 
Deconvolution methods that deconvolve $\eps$ from the $\tilde{Z}_i$ require an estimate of the variance of the $\tilde{Z}_i$. Equation \eqref{eq:residual} gives the variance as 
$\lambda_i^2\norm{\betaHat_{\rho(i)}-\beta}_2^2,$ and we need to estimate this quantity from the data. We use the approximation $$\norm{\betaHat_{\rho(i)}-\beta}_2\simeq \norm{\betaHat_{\rho}-\beta}_2.$$ 
See \SM \ref{supp:Reminders} and references therein for justification of this approximation.

Furthermore, as we note in the main text, in our implementation of this deconvolution in simulations we assume $X\sim \mathcal{N}(0,Id_p)$ so that $\lambda_i=1$ (see Section \ref{supp:sec:estimationEllipParam} below for estimating $\lambda_i$ in the elliptical case). This means we are estimating the variance of $\tilde{Z}_i$ as $\norm{\betaHat-\beta}_2^2$ for all $i$. We estimate this as 
$$\widehat{var}(\tilde{Z}_i)=\widehat{var}(\prederror)-\hat{\sigma}^2_{\eps},$$
where $\widehat{var}(\prederror)$ is the standard estimate of variance
and $\hat{\sigma}^2_{\eps}$ is the estimate of variance from the least squares fit, $\widehat{\sigma}_{\eps,LS}^2$, defined in the main text.

In the case where $\widehat{var}(\prederror)\leq \hat{\sigma}^2_{\eps}$, we do not do a deconvolution, but simply bootstrap from the $\prederror$. This is generally only the case when $p/n$ is quite small. 

\subsection{Estimating $\hat{\epsdist}$}\label{supp:sec:bandwidth}
We used the deconvolution algorithm in the \texttt{decon} package in R \citep{Wang2011Decon} to estimate the distribution of $\epsilon_i$.  Deconvolution algorithms require selection of a bandwidth in the kernels that make up the functional basis of the estimate. The appropriate bandwidth parameter in deconvolution problems is tied intrinsically to the use of the estimate, with optimal bandwidths depending on what functional of the distribution is wanted (e.g. the pdf versus the cdf). Moreover, the optimal bandwidth depends on the distribution of $\tilde{Z}_i$ with which the signal is being convolved. Ultimately, our procedure resamples from the distribution $\hat{\epsdist}$, requiring estimates of $\epsdist^{-1}(y)$, and the distribution of $\tilde{Z}_i$ is Gaussian. There is no specific theory for the optimal bandwidth in this setting (though see the work of \cite{Hall2008} for optimal bandwidth selection for estimations of the quantiles of $\hat{\epsdist}$ if the $\tilde{Z}_i$ are distributed according to a distribution whose characteristic function decays polynomially at infinity - see Assumption (A.11) on p.2133; this is clearly violated in our case where $\tilde{Z}_i$ are normally distributed.)

We used the bandwidth estimation procedure \texttt{bw.dboot2} provided in the package \texttt{decon}. \cite{delaigleProblemDecon2014} outlines problems in the estimation of bandwidth parameter in \texttt{decon}; specifically that the  implementation in \texttt{decon} of existing bandwidth estimation procedures does not match their published descriptions. \texttt{bw.dboot2} was not one of the bandwidth procedures with these discrepancies. However, we also compared our results with a bandwidth selected via the bandwidth selection method of \cite{Delaigle2002,Delaigle2004} and used the R code implementation provided by the authors on \url{http://www.ms.unimelb.edu.au/~aurored/links.html#Code}. The two different choices in bandwidth, however, had little effect on the coverage of the bootstrap confidence intervals (Supplemental Figure \ref{supfig:bandwidth}). The results in Figure \ref{fig:bootErrorLeaveOut} in the main text make use of the bandwidth parameter of \cite{Delaigle2002,Delaigle2004}.

For both bandwidth selections, we  estimated the cdf using the function \texttt{DeconCdf} provided in the \texttt{decon} package and provided the bandwidth parameters described above. We specified the error distribution as `Normal' and set the variance of $\tilde{Z}_i$ as described above in Section \ref{suppsec:estimateDeconVar}. The number of grid points for evaluating the cdf (the `ngrid' argument) was set to be the number needed to get a space of 0.01 across the range of observed $\prederror$, with a lower bound of 512 grid points (the default of `ngrid' given by the \texttt{DeconCdf} function). Other options were set to the default of \texttt{DeconCdf}.

\subsection{Random draws from $\hat{\epsdist}$}

The end result of the \texttt{DeconCdf} function was values of the $\hat{\epsdist}$ evaluated at specific grid points $x$. The resulting $\hat{\epsdist}(x)$ was not always guaranteed to be $\leq 1$ nor monotonically decreasing; this is likely due to the fact that use of higher-order kernels estimates (which is standard practice in deconvolution literature) does not constrain the estimate be a proper density. Furthermore, the tail ends of the cdf are based on little data and unlikely to reliable, as well as having problems either non-monotonicity or extending beyond the boundaries of $(0,1)$. We truncated the left tail of $\hat{\epsdist}(x)$ to be within $0.001$ by finding the largest such $x_0$ such that $\hat{\epsdist}(x_0)\leq 0.001$ and setting $\hat{\epsdist}(x)=0.001$ for $x\leq x_0$; and we similarly trimmed the right tail based on $1-0.001$. We then calculated the differences $d_i=\hat{\epsdist}(x_i)-\hat{\epsdist}(x_{i-1})$ and for $d_i<0$ set $d_i=0$. We then defined a monotone cdf based on the cumulative sum of the $d_i,$
$$C(x_j)=\sum_{i=1}^j d_i.$$
We then renormalized the values $C(x_j)$ so that they extend from $0$ to $1$, giving the final monotone estimate of $\hat{G}(x_j)$ as
$$\hat{\epsdist}(x_j)=\frac{C(x_j)-\min_iC(x_i)}{\max_i C(x_i)-\min_i C(x_i)}$$
To randomly sample from $\hat{\epsdist}$, we needed to be able to evaluate $\hat{\epsdist}$ for all $x$. We did this by linearly interpolating between the $\hat{\epsdist}(x_j)$ values. In what follows, we consider the values $\hat{\epsdist}(x)$ based on this smoothed and monotone version of the original output of the \texttt{DeconCdf} function. 

We create random draws from $\hat{\epsdist}$ by drawing random variables $U_i$ from a $Unif(0,1)$ and calculating $E_i=\hat{\epsdist}(U_i)$. We further centered and standardized the draws $E_j$ from $\hat{\epsdist}$ to get  $$\eps^*_j=\left(E_j-mean_i(E_j)\right)\frac{\widehat{\sigma}_{\eps,LS}}{\sqrt{\var{E_i}}}$$ so that the resulting $\eps^*_j$ have mean zero and variance $\widehat{\sigma}_{\eps,LS}^2.$ This was done because the variance of $\hat{\epsdist}$ was not guaranteed to have the correct variance, dispite the fact we prespecify the variance in the deconvolution call. Ensuring the correct moments of $\eps^*_j$ was a critical component for reasonable coverage of the bootstrap confidence intervals. When we did not standardize the results and just took the draws from $E_j$, the resulting bootstrap confidence intervals became more and more conservative as $p/n$ grew. This again highlights the results of Theorem \ref{thm:asympPerfPNcloseto1} -- the variance of $\hat{\epsdist}$ is the most important feature of the distribution in order to have accurate confidence intervals.

\subsection{Bootstrap estimates $\betaHat^*$ from $\hat{\epsdist}$} \label{subsec:deconvbootmethods}
\label{supp:sec:Deconbootstrapdraws}
We used $\hat{\epsdist}$ to create bootstrap errors, $\{\eps^*_i\}_{i=1}^{n}$ in two ways. For the first method we estimated $\{\eps^*_i\}_{i=1}^{n}$ as a i.i.d draws from $\hat{\epsdist}$, and repeatedly drew such samples from $\hat{\epsdist}$, $B$ times. In the second method, we drew one single estimate $\{\hat{\eps}_i\}_{i=1}^{n}$ as i.i.d draws from $\hat{\epsdist}$ and then created $\{\eps^*_i\}_{i=1}^{n}$ from resampling from the empirical distribution of the $\{\hat{\eps}_i\}_{i=1}^{n}$, and repeated this resampling from the empirical distribution of $\{\hat{\eps}_i\}_{i=1}^{n}$ $B$ times. For both methods, we then calculated $\betaHat^{*}$ from the data $(X_i,y_i^*)$ where $y_i^*=X_i'\betaHat+\eps^{*}_i$, as in the standard residual bootstrap.
The first method seems to do slightly better in simulations, see Supplemental Figure \ref{supfig:bandwidth}.

\subsection{Estimation of $\lambda_i^2$}\label{supp:sec:estimationEllipParam}

To extend the deconvolution bootstrapping method to the elliptical case when $p/n\tendsto \kappa \in (0,1)$, one needs to be able to estimate $\lambda_i$, at least up to sign. In which case, one could estimate individually the variance of $\tilde{Z}_i$ and feed these individual estimates into the deconvolution method described above. 

We recall a simple proposal from the paper \citep{nekMarkoRiskPub2010} to solve this problem. Specifically, the author proposes to use
$$
\widehat{\lambda_i}^2=\frac{\norm{X_i^2}_2/p}{\frac{1}{p}\trace{\SigmaHat}}=\frac{\norm{X_i}_2^2}{\frac{1}{n}\sum_{i=1}^n \norm{X_i}_2^2}\;,
$$
where $\SigmaHat=\frac{1}{n}\sum_{i=1}^n X_i X_i\trsp$. Under mild conditions on $\Sigma$ and $\lambda_i$, it can be shown that when $n\tendsto \infty$ and $p/n\tendsto \kappa \in (0,\infty)$
$$
\sup_{1\leq i\leq n}|\lambda_i^2-\widehat{\lambda_i^2}|\tendsto 0 \text{ in probability}.  
$$
The intuition and proof are as follows. Concentration of measure arguments \citep{ledoux2001} show that $\norm{\xi_i}^2/p\simeq \trace{\Sigma}/p$ and hence $\norm{X_i}^2/p\simeq \lambda_i^2 \trace{\Sigma}/p$.  The law of large numbers and a little bit of further technical work then imply that $\frac{1}{n}\sum_{i=1}^n \norm{X_i}^2/p \simeq \Exp{\lambda_i^2} \trace{\Sigma}/p=\trace{\Sigma}/p$.

\color{black}

\section{Description of Simulations and other Numerics} \label{supp:Simulations}
In the simulations described in the paper, we explored variations in the distribution of the design matrix $X$, the error distribution, the loss function, the sample size ($n$), and the ratio of $\kappa=p/n$, detailed below. 

All results in the paper were based upon $1,000$ replications of our simulation routine for each combination of these values. Each simulation consisted of
\begin{enumerate}
	\item Simulation of data matrix $X$, $\{\epsilon_i\}_{i=1}^n$ and construction of data $y_i=X'\beta+\epsilon_i$. However, for our simulations, $\beta=0$ (without loss of generality for the results, which are shift equivariant), so $y_i=\epsilon_i$. 
	\item Estimate $\hat{\beta}$ using the corresponding loss function. For $L_2$ this was via the \texttt{lm} command in R, for Huber via the \texttt{rlm} command in the \texttt{MASS}  package  with default settings ($k=1.345$) \citep{MASS}, and for $L_1$ via an internal program making use of MOSEK optimization package and accessed in R using the \texttt{Rmosek} package \citep{Rmosek}. The internal $L_1$ program was checked to give the same  results as the \texttt{rq} function that is part of the R package \texttt{quantreg} \citep{quantreg}, but was much faster for simulations. 
	\item Bootstrapping according to the relevant bootstrap procedure (using the \texttt{boot} package) and estimating $\hat{\beta}^*$ for each bootstrap sample. Each bootstrap resampling consisted of $R=1,000$ bootstrap samples, the minimum generally suggested for 95\% confidence intervals \citep{DavisonHinkley97}. For jackknife resampling and for calculating leave-one-out prediction errors $\prederror$, we wrote an internal function that left out each observation in turn and recalculated $\hat{\beta}_{(i)}$. 
	\item Construction of confidence intervals for $\hat{\beta}_1$. For bootstrap resampling, we used the function \texttt{boot.ci} in the \texttt{boot} package to calculate confidence intervals. We calculated ``basic", ``percentile", ``normal", and ``BCA" confidence intervals (see help of \texttt{boot.ci} and \cite{DavisonHinkley97} for details about each of these), but all results shown in the manuscript rely on only the percentile method. The percentile method calculates the boundaries of the confidence intervals as the estimates of 2.5\% and 97.5\% percentiles of $\hat{\beta}^*_1$ (note that the estimate is not exactly the \emph{observed} 2.5\% and 97.5\% of $\hat{\beta}^*_1$, since there is a correction term for estimating the percentile, again see \cite{DavisonHinkley97}). For the jackknife confidence intervals, the confidence interval calculated was a standard normal confidence interval ($\pm 1.96 \sqrt{\widehat{var}_{Jack}(\hat{\beta_1})}$)
\end{enumerate}	

\subsection{Values of parameters}

\paragraph{Design Matrix} For the design matrix $X$, we considered the following designs for the distribution of an element $X_{ij}$ of the matrix $X$
\begin{itemize}
\item Normal: $X_{ij}$ are i.i.d $N(0,1)$
\item Double Exp: $X_{ij}$ are i.i.d. double exponential with variance $1$.
\item Elliptical: $X_{ij}\sim \lambda_i Z_{ij}$ where the $Z_{ij}$ are i.i.d $N(0,1)$ and the $\lambda_i$ are i.i.d according to 
\begin{itemize}
	\item $\lambda_i \sim Exp(\sqrt{2})$ (i.e. mean $1/\sqrt{2}$)
	\item $\lambda_i \sim N(0,1)$
	\item $\lambda_i \sim Unif(0.5,1.5)$
\end{itemize}
\end{itemize}

\paragraph{Error Distribution} We used two different distributions for the i.i.d errors $\epsilon_i$: $N(0,1)$ and standard double exponential (with variance $2$). 

\paragraph{Dimensions} We simulated from $n=100$, $500$, and $1,000$ though we showed only $n=500$ in our results for simplicity. Except where noted, no significant difference in the results was seen for varying sample size. The ratio $\kappa$ was simulated at $0.01, 0.1, 0.3, 0.5$.

\subsection{Correction factors for Jackknife}
We computed these quantities using the formula we mentioned in the text and Matlab. We solve the associated regression problems with $\texttt{cvx}$ \citep{cvx,cvxgb08}, running $\texttt{Mosek}$ \citep{mosek2015} as our optimization engine. We used $n=500$ and $1,000$ simulations to compute the mean of the quantities we were interested in. 

\subsection{Plotting of Figure \ref{subfig:relRisk:NoN}}
This figure was generated with Matlab, using cvx and Mosek, as described above. We picked $n=500$ and did 500 simulations. $p$ was taken in (5, 10, 30, 50, 75, 100, 125, 150, 175, 200, 225, 250, 275, 300, 350, 400, 450).
We used our simulations for the case of the original errors to estimate $\Exp{\norm{\betaHat-\beta}_2}$. We used this estimate in our simulation under the convolved error distribution. The Gaussian error simulations were made with ${\cal N}(0,2)$ to match the variance of the double exponential distribution.

\section{Technical background on results for robust regression} \label{supp:Reminders}
Recall that we consider 
$$
\betaHat_\rho=\argmin_{u\in\mathbb{R}^p} \sum_{i=1}^n \rho(y_i-X_i\trsp u)\;, \text{ where } y_i=\eps_i+X_i\trsp \beta\;.
$$
The $\eps_i$'s are assumed to be i.i.d with mean 0 here. 

\subsection{Classical results and asymptotic normality}

\paragraph{Least-squares} In this case $\rho(x)=x^2/2$ and we have of course 
$$
\betaHat_{LS}-\beta=(X\trsp X)^{-1}X\trsp \eps\;.
$$
Hence, 
$$
\scov{\betaHat_{LS}}=(X\trsp X)^{-1} \var{\eps}\;.
$$

\paragraph{Robust regression} We recall the classic result of Huber \citep{HuberRobustRegressionAsymptoticsETCAoS73} and \citep{HuberRonchettiRobustStatistics09}, Chapter 7: when $p$ is fixed and $n\tendsto \infty$, 
$$
\scov{\betaHat_\rho}=\frac{1}{n}\left(\frac{X\trsp X}{n}\right)^{-1} \frac{\Exp{\psi^2(\eps)}}{[\Exp{\psi'(\eps)}]^2}\;.
$$
See also the papers \cite{PortnoyMestLargishPNConsistencyAoS84,PortnoyMestLargishPNCLTAoS85,PortnoyCLTRobustRegressionJMVA87,MammenRobustRegressionAos89} for the situation where $p\tendsto\infty$ and $p/n\tendsto 0$ at various rates. 

\paragraph{Asymptotic normality questions and impact on confidence intervals: $\bm{p/n\tendsto 0}$} In the case of least-squares, the Lindeberg-Feller theorem \citep{StroockProbaBook} guarantees that under mild conditions on the $p\times n$ matrix $X$, the coordinates of $\betaHat_{LS}$ are asymptotically Normal. Similarly if the $1\times n$ vector $v\trsp (X\trsp X)^{-1}X\trsp $ satisfy the conditions of the the Lindeberg-Feller theorem, then $v\trsp (\betaHat_{LS}-\beta)$ is asymptotically normal. Similarly, under mild conditions on $X$, the classic papers mentioned above guarantee asymptotic normality of the coordinates of $\betaHat_\rho$ when $p/n\tendsto 0$. In these cases, the width of confidence intervals for the coordinates of $\beta$ are hence only dependent asymptotically on the variance of the coordinates of $\betaHat_\rho$.

\subsection{Summary of recent results on high-dimensional robust regression}
We summarize in this section the key results we use from the recent papers \cite{NEKetAlRobustRegressionTechReport11,NEKRobustPaperPNAS2013Published,NEKRobustRegressionRigorous2013}. The third paper is a completely rigorous version of the heuristic arguments of the first two; the first paper is the long-form version of the second one. Those papers are concerned with the asymptotic properties of $\betaHat_\rho$ when $p/n\tendsto\kappa\in (0,1)$. The predictor vectors $X_i$'s are assumed to be random and independent, with $X_i=\lambda_i \Sigma^{1/2}\tilde{X}_i$, where $\tilde{X}_i$ has i.i.d (not necessarily Gaussian) entries with mean 0 and variance 1. $\lambda_i$'s are independent random variables with $\Exp{\lambda_i^2}=1$. (The $n\times p$ design matrix $X$ is full rank with probability 1. $\Sigma$ has only positive eigenvalues.) $X_i$'s are independent of $\eps_i$'s. 

\paragraph{Role of $\scov{X_i}=\Sigma$} It is shown that in these papers, if $\betaHat(\beta;\Sigma)$ is the regression vector corresponding to the situation where $y_i=X_i\trsp \beta+\eps_i$ and $\scov{X_i}=\Sigma$ for all $i$,
$$
\betaHat_\rho(\beta;\Sigma)=\beta+\Sigma^{-1/2} \betaHat(0;\id_p)\;.
$$ 
This follows from a simple change of variable. It also means that to understand the properties of $\betaHat_\rho(\beta;\Sigma)$, it is enough to understand the ``null case'' $\beta=0$ and $\Sigma=\id_p$. 

\paragraph{Consequence for leave-one-out-predicted errors} The result we just mentioned has an important consequence for our leave-one-out predicted error, i.e $\prederror=y_i-X_i\trsp \betaHat_{(i)}$: $\prederror(\beta;\Sigma)=\prederror(0;\id_p)$. In other words, we can assume without loss of generality that $\beta=0$ and $\Sigma=\id_p$ when working with leave-one-out-predicted errors. 

\newcommand{\myGaussian}{\upsilon}
\paragraph{A non-asymptotic and exact stochastic representation in the elliptical case} When $X_i\iid \lambda_i \myGaussian_i$, where $\myGaussian_i\sim{\cal N}(0,\Sigma)$ and $\lambda_i$ is a random variable independent of $\myGaussian_i$, it is shown that 
$$
\betaHat_\rho(\beta;\Sigma)\equalInLaw \beta + \norm{\betaHat_\rho(0;\id_p)}_2 \Sigma^{-1/2} u\;,
$$
where $u$ is uniformly distributed on the unit sphere in $\mathbb{R}^p$ and $\norm{\betaHat_\rho(0;\id_p)}_2$ is independent of $u$. $\norm{\betaHat_\rho(0;\id_p)}_2$ is simply the norm of $\betaHat_\rho$ when $\beta=0$ and $\scov{X_i}=\id_p$. Note that $u$ has the stochastic representation $u\equalInLaw Z_p/\norm{Z_p}_2$, where $Z_p\sim {\cal N}(0,\id_p)$. 

\paragraph{Consequence of the previous representation for large $p$} Since $\norm{Z_p}_2$ has $\chi_p$ distribution, it is clear that as $p\tendsto \infty$, if  $v$ is a deterministic vector, 
$$
\sqrt{p}\frac{v\trsp (\betaHat_\rho(\beta;\Sigma)-\beta)}{\norm{\betaHat_\rho(0;\id_p)}_2} \weakCv {\cal N}(0,v\trsp \Sigma^{-1} v)\;,
$$
where $\weakCv$ denotes weak convergence of distributions. Hence, provided $\norm{\betaHat_\rho(0;\id_p)}_2$ and $v\trsp \Sigma^{-1}v$ remain bounded, $v\trsp \betaHat_\rho(\beta;\Sigma)$ is $\sqrt{p}$-consistent for $v\trsp\beta$. 

\paragraph{Properties of $\norm{\betaHat_\rho(0;\id_p)}_2$} It is shown, under various technical assumptions, that as $p$ and $n$ tend to infinity with $p/n\tendsto \kappa$, the variance of the random variable $\norm{\betaHat_\rho(0;\id_p)}_2$ goes to zero. Hence, for practical matters, $\norm{\betaHat_\rho(0;\id_p)}_2$ can be considered non-random. In particular, that implies that 
$$
\sqrt{p}v\trsp (\betaHat_\rho(\beta;\Sigma)-\beta) \text{ is approximately Normal as }  p/n\tendsto \kappa\;.
$$
Of great importance is the characterization of $\norm{\betaHat_\rho(0;\id_p)}_2$, since it will affect the width of confidence intervals. It can be characterized, in the case where $\lambda_i=1$ (see the papers for the case $\lambda_i\neq 1$) in the following way:
$\norm{\betaHat_\rho(0;\id_p)}_2\tendsto r_\rho(\kappa)$. The non-random scalar $r_\rho(\kappa)$  can be characterized through a system of two non-linear equations, involving another constant, $c$. The pair of positive and deterministic scalars $(c,r_\rho(\kappa))$ satisfy: if $\hat{z}_\eps=\eps+r_\rho(\kappa)Z$, where $Z\sim {\cal N}(0,1)$ is independent of $\eps$, and $\eps$ has the same distribution as $\eps_i$'s: 
$$
\left\{
\begin{array}{rl}
\Exp{(\prox(c\rho))'(\hat{z}_\eps)}&=1-\kappa\;,\\
\kappa r_\rho^2(\kappa)&=\Exp{[\hat{z}_\eps-\prox(c\rho)(\hat{z}_\eps)]^2}\;.
\end{array}
\right.
$$
In this system, $\prox(c\rho)$ refers to Moreau's proximal mapping of the convex function $c\rho$ - see \cite{MoreauProxPaper65} or \cite{HiriartLemarechalConvexAnalysisAbridged2001}. (The system is rigorously shown in \cite{NEKRobustRegressionRigorous2013} under the assumption that the $X_i$'s have i.i.d entries with mean 0 and variance 1, as well as a few other minor requirements; these assumptions are satisfied when $X_{i,j}$ have a Gaussian distribution, or are bounded, or do not have heavy tails, the latter requiring appeal to various truncation arguments. Another proof of the validity of this system, which first appeared in \cite{NEKetAlRobustRegressionTechReport11}, can be found in \cite{DonohoMontanariRobustArxiv13}. That proof is limited to the case of $X_i$'s having i.i.d Gaussian entries.) The assumptions on $\eps_i$'s and $\rho$ are relatively mild. See \cite{NEKRegressionMEstimateRigorousElliptical2015} for the latest, handling the situation where $\eps_i$'s have for instance a Cauchy distribution. We note that some of the results in \cite{NEKRobustRegressionRigorous2013} are stated with $\rho$ strongly convex (and $\eps_i$'s having many moments). While the proof in that paper suggests several ways of removing this assumption, it is also possible to change $\rho$ in to $\rho+\eta x^2/2$ with $\eta$ very small (e.g $\eta=10^{-100}$) to satisfy this technical assumption and change essentially nothing to the statistical problem at hand. 

\paragraph{Consequences for the distribution of $\betaHat_1$ or other contrasts of interest} In our simulation setup, the previous results imply that the distribution of $\betaHat_1$ (or any other coordinates or contrasts $v\trsp \betaHat$ for $v$ deterministic) is asymptotically normal. In the case where $\Sigma=\id_p$, the variance of $\sqrt{p}(\betaHat_1-\beta_1)$ is roughly ${\cal N}(0,r_\rho^2(\kappa))$.  See \cite{NEKOptimalMEstimationPNASPublished2013} and its supplementary material for a longer discussion and questions related to building confidence intervals.

\paragraph{Asymptotic normality questions and impact on confidence intervals: $\bm{p/n\tendsto \kappa \in (0,1)}$} Because we know that, in the Gaussian design case, the coordinates of $\betaHat_\rho$ are asymptotically normal, the width of these intervals is completely determined by the variance of the coordinates of $\betaHat_\rho$. We explain above how these variances depend on the distribution of $\eps$ and the loss function $\rho$: basically through $\norm{\betaHat(\rho;\id)}_2$ and hence $r_\rho(\kappa)$. Therefore, as was the case in the low-dimensional situation, the variance of the coordinates of $\betaHat_\rho$ can be used as a proxy for the width of the confidence interval in the high-dimensional case where $p/n\tendsto \kappa$, $0<\kappa<1$.

In \citep{NEKOptimalMEstimationPNASPublished2013}, these asymptotic normality results are used to create confidence intervals for $v\trsp\beta$ in the Gaussian design case: if $z_{1-\alpha/2}$ is the $(1-\alpha/2)$-quantile of the Gaussian distribution a $100(1-\alpha)\%$ confidence interval for $v\trsp\beta$ is 
$$
v\trsp \betaHat\pm \frac{z_{1-\alpha/2}}{\sqrt{p}} \hat{r} \sqrt{(1-p/n)v\trsp \SigmaHat^{-1} v}\;,
$$ 
where $\hat{r}$ is a consistent estimator of $\norm{\betaHat_\rho(0;\id_p)}_2$. In \citep{NEKOptimalMEstimationPNASPublished2013}, it is said without more precision that leave-one-techniques can be used to come up with $\hat{r}$; we propose in the current paper estimates $\hat{r}$ based on leave-one-out predicted errors that can therefore be used for the purpose of building those confidence intervals. (See Section \ref{subsec:LeaveOut} in the main paper)

\paragraph{Leave-one-out approximations for $\betaHat$} It is shown in the aforementioned papers that 
$$
\betaHat\simeq\betaHat_{(i)}+\frac{1}{n}S_i^{-1}X_i \psi(\resid_i)\;,
$$
where $\simeq$ means that we are neglecting a quantity that is negligible for all our mathematical and statistical purposes  (see the papers for very precise bounds on the quantity we are neglecting). This approximation is the key to the approximations in Equations \eqref{eq:residual} and \eqref{eq:residual2ndPart} which we use in the main paper.  Recall that $S_i=\frac{1}{n}\sum_{j\neq i} \psi'(\prederrorAtj) X_j X_j\trsp$.

\subsection{Consequences for the residual bootstrap} \label{supp:subsec:csqResidualBoot}
We call $\{\eps_i^*\}_{i=1}^n$ the estimated errors used in the residual bootstrap. When doing a residual bootstrap, we are effectively sampling from a model with fixed design $X$, ``true $\beta$'' taken to be equal to $\betaHat_\rho$ and i.i.d errors sampled according to the empirical distribution of the $\{\eps_i^*\}_{i=1}^n$. As a shortcut, we call this distribution $\eps^*$ in what follows. We call $\betaHat_\rho^*$ the bootstrapped version of $\betaHat$.

\paragraph{Case $\bm{p/n\tendsto 0}$} Naturally, the classic results mentioned above imply that the distribution of $v\trsp (\betaHat_\rho^*-\betaHat_\rho)$ is going to be asymptotically normal (under mild conditions on $X$ that are satisfied in our simulations); the  variance of the coordinates of $\betaHat_\rho^*$, on the other hand depends on $\frac{\Exp{\psi^2(\eps^*)}}{[\Exp{\psi'(\eps^*)}]^2}$. Hence, even if the distribution of the estimated errors $\eps^*$ is very different from that of the ``true'' errors, $\eps$, the residual bootstrap may work very well: indeed, if $\eps$ and $\eps^*$ have two very different distribution but 
$$
\frac{\Exp{\psi^2(\eps^*)}}{[\Exp{\psi'(\eps^*)}]^2}=\frac{\Exp{\psi^2(\eps)}}{[\Exp{\psi'(\eps)}]^2}\;,
$$
using a residual bootstrap with ``the wrong error distribution'', $\eps^*$, will give us bootstrap confidence intervals of the right width.  An important question then becomes, when $p/n$ is small: what class of distributions $\eps^*$ is such that $\frac{\Exp{\psi^2(\eps^*)}}{[\Exp{\psi'(\eps^*)}]^2}=\frac{\Exp{\psi^2(\eps)}}{[\Exp{\psi'(\eps)}]^2}$, as this class defines all acceptable error distributions from the point of view of our residual bootstrap. 

\paragraph{Case $\bm{p/n\tendsto \kappa \in (0,1)}$} We note that at this point in the case $p/n\tendsto \kappa \in (0,1)$ we are not aware of central limit theorems for the coordinates of $\betaHat$ that are valid conditional on the design matrix $X$. However, it is expected that such theorems will hold if the design matrix results from a draw of a random design matrix similar to the ones we consider (with very high-probability with respect to the sampling of the design matrix). The discussions above make then clear that the key quantity to describe the width of the residual bootstrap confidence intervals becomes the risk $\norm{\betaHat_\rho(0;\id_p;\eps^*)}_2$, i.e the risk $\norm{\betaHat_\rho(0;\id_p)}_2$ when the error distribution is $\eps^*$. A ``good'' error distribution is therefore one for which $r_\rho(\kappa;\eps^*)\simeq r_\rho(\kappa;\eps)$. (We used the notation $r_\rho(\kappa;\eps)=\lim_{n\tendsto \infty} \norm{\betaHat_\rho(0;\id_p;\eps)}$, when $p/n\tendsto \kappa$.)

\paragraph{The case of least squares} Let us call $\hat{\epsdist}_{n,p}$ the distribution of the errors we use in our residual bootstrap. We assume that $\hat{\epsdist}_{n,p}$ has mean 0. Let us call $w\trsp = v\trsp (X\trsp X)^{-1} X\trsp$ - where we choose to not index $v$ and $w$ by $p$ for the sake of clarity. $v$ is a deterministic sequence of $p$-dimensional vectors.  Assume that $w$ and $\hat{\epsdist}_{n,p}$ satisfy the conditions of the Linderberg-Feller theorem for triangular arrays, and that $\lim_{n\tendsto \infty} \var{\hat{\epsdist}_{n,p}}=\sigma^2_\eps$. Then the Lindeberg-Feller theorem guarantees that  
$$
\frac{v\trsp (\betaHat^*-\betaHat)}{\norm{w}}\weakCv {\cal N}(0,\sigma^2_\eps)\;.
$$
Note that it also guarantees, under the same assumptions on $w$ that 
$$
\frac{v\trsp (\betaHat-\beta)}{\norm{w}}\weakCv {\cal N}(0,\sigma^2_\eps)\;.
$$
These results do not depend on the size of $\kappa$, the limit of the ratio $p/n$.

Informally, what this means is that provided that the entries of $w$ are all relatively small, that $\hat{\epsdist}_{n,p}$ has mean 0 and $\var{\hat{\epsdist}_{n,p}}$ is close to $\sigma^2_\eps$, then bootstrapping from the residuals in least-squares works for approximating the distribution $v\trsp (\betaHat-\beta)$. 

\paragraph{Conclusion for the purposes of the main paper} In our discussions we use $\norm{\betaHat_\rho(0;\id_p;\eps^*)}$ and its closeness to its value under the correct error distribution, $\norm{\betaHat_\rho(0;\id_p;\eps)}$, as a proxy to understand a priori the quality of residual bootstrap confidence intervals when using $\eps^*$ to sample the errors instead of $\eps$. The previous discussion explains why we do so. Our numerical work in Section \ref{subsec:LeaveOut} of the main text shows numerically that this yields valuable insights. This is why our discussion in Section \ref{subsec:PerfMethod2} is focused on understanding $\norm{\betaHat(0;\id_p;\eps)}_2$ for various error distributions. In particular, Theorem \ref{thm:asympPerfPNcloseto1} shows that when $p/n$ is close to 1, if $\eps^*$ has approximately the same two first moments as $\eps$, $\norm{\betaHat(0;\id_p;\eps^*)}/\norm{\betaHat(0;\id_p;\eps)}\simeq 1$. This explains why the scaled $\stprederror$ is probably a good error distribution $\eps^*$ to use in the residual bootstrap when $\kappa$ is close to 0 or 1. We note that when $\kappa$ is close to 1, $\stprederror$ gives an error distribution that is in general very different from the distribution of $\eps$. Our numerical work of Section \ref{subsec:LeaveOut} shows that it is nonetheless a good error distribution from the point of view of the residual bootstraps we consider.

\section{Residual bootstrap ($p/n$ close to 1)}\label{supp:ResidProof}
We analyze the problem when $p/n$ is close to 1 and prove Theorem \ref{thm:asympPerfPNcloseto1}.

\begin{proof}[Proof of Theorem \ref{thm:asympPerfPNcloseto1}]
	
Recall the system describing the asymptotic limit of $\norm{\betaHat_\rho-\beta}$ when $p/n\tendsto \kappa$ and the design matrix has i.i.d mean 0, variance 1 entries, is, under some conditions on $\eps_i$'s and some mild further conditions on the design (see Section \ref{supp:Reminders} above): $\norm{\betaHat_\rho-\beta}\tendsto r_\rho(\kappa)$ and the pair of positive and deterministic scalars $(c,r_\rho(\kappa))$ satisfy: if $\hat{z}_\eps=\eps+r_\rho(\kappa)Z$, where $Z\sim {\cal N}(0,1)$ is independent of $\eps$, and $\eps$ has the same distribution as $\eps_i$'s: 
$$
\left\{
\begin{array}{rl}
\Exp{(\prox(c\rho))'(\hat{z}_\eps)}&=1-\kappa\;,\\
\kappa r_\rho^2(\kappa)&=\Exp{[\hat{z}_\eps-\prox(c\rho)(\hat{z}_\eps)]^2}\;.
\end{array}
\right.
$$
In this system, $\prox(c\rho)$ refers to Moreau's proximal mapping of the convex function $c\rho$ - see \cite{MoreauProxPaper65} or \cite{HiriartLemarechalConvexAnalysisAbridged2001}.
	
We first give an informal argument to ``guess'' the correct values of various quantities of interest, namely $c$ and of course, $r_\rho(\kappa)$. 

Note that when $|x|\ll c$, and when $\psi(x)\sim x$ at 0, $\text{prox}(c\rho)(x)\simeq \frac{x}{1+c}$. Hence, $x-\prox(c\rho)(x)\simeq x c/(1+c)$. (Note that as long as $\psi(x)$ is linear near 0, we can assume that $\psi(x)\sim x$, since the scaling of $\rho$ by a constant does not affect the performance of the estimators.)

We see that $1-\kappa\simeq 1/(1+c)$, so that $c\simeq \kappa/(1-\kappa)$ - assuming for a moment that we can apply the previous approximations in the system . Hence, we have 
$$
\kappa r_\rho(\kappa)^2 \simeq (c/(1+c))^2 [r_\rho(\kappa)^2+\sigma^2_\eps]\simeq \kappa^2 [r_\rho(\kappa)^2+\sigma^2_\eps]\;.
$$
We can therefore conclude (informally at this point) that 
$$
r_\rho(\kappa)^2 \sim \frac{\sigma^2_\eps \kappa}{1-\kappa}\sim \frac{\sigma^2_\eps}{1-\kappa}\;.
$$

Once these values are guessed, it is easy to verify that $r_\rho(\kappa)\ll c$ and hence all the manipulations above are valid if we plug these two expressions in the system driving the performance of robust regression estimators described above. We note that our argument is not circular: we just described a way to guess the correct result. Once this has been done, we have to make a verification argument to show that our guess was correct. 

In this particular case, the verification is done as follows: we can rewrite the expectations as integrals and split the domain of integration into $(-\infty,-s_\kappa)$, $(-s_\kappa,s_\kappa)$, $(s_\kappa,\infty)$, with $s_\kappa=(1-\kappa)^{-3/4}$. Using our candidate values for $c$ and $r_\rho(\kappa)$, we see that the corresponding $\zHat_\eps$ has extremely low probability of falling outside the interval $(-s_\kappa,s_\kappa)$ - recall that $1-\kappa\tendsto 0$. Coarse bounding of the integrands outside this interval shows the corresponding contributions to the expectations are negligible at the scales we consider. On the interval $(-s_\kappa,s_\kappa)$, we can on the other hand make the approximations for $\prox(c\rho)(x)$ we discussed above and integrate them. That gives us the verification argument we need, after somewhat tedious but simple technical arguments.  (Note that the method of propagation of errors in analysis described in \citep{MillerAppliedAsymptoticAnalysis06} works essentially in a similar a-posteriori-verification fashion. Also, $s_\kappa$ could be picked as $(1-\kappa)^{-(1/2+\delta)}$ for any $\delta\in (0,1/2)$ and the arguments would still go through.)
\end{proof}

\section{On the expected Variance of the bootstrap estimator (Proof of Theorem \ref{thm:bootVariance})} \label{supp:BootVarProof}

In this section, we compute the expected variance of the bootstrap estimator.

We recall that for random variables $T,\Gamma$, we have 
$$
\var{T}=\var{\Exp{T|\Gamma}}+\Exp{\var{T|\Gamma}}\;.
$$
In our case, $T=v\trsp \betaHat_w$, the projection of the regression estimator $\betaHat_w$ obtained using the random weights $w$ on the contrast vector $v$. $\Gamma$ represents both the design matrix and the errors. We assume without loss of generality that $\norm{v}_2=1$. 

Hence, 
$$
\var{v\trsp\betaHat_w}=\var{v\trsp\Exp{\betaHat_w|\Gamma}}+\Exp{\var{v\trsp \betaHat_w|\Gamma}}\;.
$$

In plain English, the variance of $v\trsp \betaHat_w$ is equal to the variance of the bagged estimator plus the expectation of the variance of the bootstrap estimator (where we randomly weight observation $(y_i,X_i)$ with weight $w_i$).

As explained in Section \ref{sec:supp:CovAndInvarianceIssue}, we can study without loss of generality the case where $\Sigma=\id_p$ and $\beta=0$. This is what we do in this proof. Further the rotational invariance arguments we give in Section \ref{sec:supp:CovAndInvarianceIssue} mean that we can focus on the case $v=e_p$,the $p$-th canonical basis vector, without loss of generality. 

We consider the case where $X_i\iid {\cal N}(0,\id_p)$. This allows us to work with results in \cite{NEKetAlRobustRegressionTechReport11,NEKRobustPaperPNAS2013Published}, \cite{NEKRobustRegressionRigorous2013}. 

\paragraph{Notational simplification} To make the notation lighter, in what follows in this proof we use the notation $\betaHat$ for $\betaHat_w$. There are no ambiguities that we are always using a weighted version of the estimator and hence this simplification should not create any confusion. 

In particular, we have, using the derivation of Equation (9) in \cite{NEKRobustPaperPNAS2013Published} and noting that in the least-squares case all approximations in that paper are actually exact equalities,
$$
\betaHat_p=\hat{c}\frac{\sum_{i=1}^n w_i X_i(p) \resid_{i,[p]}}{p}\;.
$$
$\resid_{i,[p]}$ here are the residuals based on the first $p-1$ predictors, when $\beta=0$. We note that, under our assumptions on $X_i$'s and $w_i$'s, $\hat{c}= \frac{1}{n}\trace{S_w^{-1}}+\lo_{L_2}(1)$, where $S_w=\frac{1}{n}\sum_{i=1}^n w_i X_i X_i\trsp$. It is known from work in random matrix theory (see e.g \cite{nekCorrEllipD}) that $\frac{1}{n}\trace{S_w^{-1}}$ is asymptotically deterministic in the situation under investigation with our assumptions on $w$ and $X$, i.e $\frac{1}{n}\trace{S_w^{-1}}=c+\lo_{L_2}(1)$, where $c=\Exp{\frac{1}{n}\trace{S_w^{-1}}}$. 

We also recall the residuals representation from \cite{NEKRobustPaperPNAS2013Published}, which are exact in the case of least-squares~: namely here,
$$
\betaHat-\betaHat_{(i)}=\frac{w_i}{n} S_i^{-1}X_i \psi(\resid_i)\;,
$$
which implies that, with $S_i=\frac{1}{n}\sum_{j\neq i}w_j X_j X_j\trsp$,  
$$
\prederror=\resid_i+w_i\frac{X_i\trsp S_i^{-1}X_i}{n}\psi(\resid_i)\;.
$$
In the case of least-squares, $\psi(x)=x$, so that 
$$
\resid_i=\frac{\prederror}{1+w_i c_i}\;,
$$
where 
$$
c_i=\frac{X_i\trsp S_i^{-1}X_i}{n}\;.
$$
These equalities also follow from simple linear algebra since we are in the least-squares case. We note that $c_i=c+\lo_P(1)$, where $c$ is deterministic, as explained in e.g \cite{nekMarkoRiskPub2010}, \cite{NEKRobustRegressionRigorous2013}. Furthermore, here the approximation holds in $L_2$ because of our assumptions on $w$'s and existence of moments for the inverse Wishart distribution - see e.g \cite{Haff79IdentityWishartDWithApps}. As explained in \cite{NEKRobustRegressionRigorous2013}, the same is true for $c_{i,[p]}$ which is the same quantity computed using the first $(p-1)$ coordinates of $X_i$, vectors we denote generically by $V_i$. We can rewrite 
$$
\betaHat_p=\hat{c} \frac{\sum_{i=1}^n w_i X_i(p) \frac{\prederrorOnePredictorOut}{1+w_i c_{i,[p]}}}{p}\;.
$$
Let us call $\bHat$ the bagged estimate. We note that $\prederrorOnePredictorOut$ is independent of $w_i$ and so is $c_{i,[p]}$. We have already seen that $\hat{c}$ is close to a constant, $c$. So taking expectation with respect to the weights,  we have, if $w_{(i)}$ denotes $\{w_j\}_{j\neq i}$, and using independence of the weights,  
$$
\bHat_p=\frac{1}{p}\sum_{i=1}^n \myExp_{w_i}\left(\frac{cw_i}{1+cw_i}\right) X_i(p) \myExp_{w_{(i)}}\left(\prederrorOnePredictorOut\right) [1+\lo_{L_2}(1)]\;.
$$
Now the last term is of course the prediction error for the bagged problem, i.e 
$$
\myExp_{w_{(i)}}\left(\prederrorOnePredictorOut\right)=\eps_i-V_i\trsp(\gHat_{(i)}-\gamma)\;
$$
\renewcommand{\gammaHat}{\widehat{\gamma}}
where $\gHat_{(i)}$ is the bagged estimate of $\gammaHat$ and $\gammaHat$ is the regression vector obtained by regressing $y_i$ on the first $p-1$ coordinates of $X_i$. (Recall that in these theoretical considerations we are assuming that $\beta=0$, without loss of generality.)  

So we have, since we can work in the null case where $\gamma=0$ (without loss of generality),
$$
\bHat_p= \frac{1}{p}\sum_{i=1}^n \myExp_{w_i}\left(\frac{cw_i}{1+cw_i}\right) X_i(p) \left[\eps_i-V_i\trsp\gHat_{(i)}\right](1+\lo_{L_2}(1))\;.
$$
Hence, 
$$
\Exp{p\bHat_p^2}= \frac{1}{p} \sum_{i=1}^n \left[\myExp_{w_i}\left(\frac{cw_i}{1+cw_i}\right)\right]^2 (\sigma^2_\eps+\Exp{\norm{\gHat_{(i)}}_2^2})(1+\lo(1))\;.
$$
Now, in expectation, using e.g \cite{NEKRobustRegressionRigorous2013},  $\Exp{\norm{\gHat_{(i)}}_2^2}(1+\lo(1))= \Exp{\norm{\bHat}^2_2}=p\Exp{\bHat_p^2}$. The last equality comes from the fact that all coordinates play a symmetric role in this problem, so they are all equal in law.

Now, recall that according to e.g \cite{NEKRobustPaperPNAS2013Published}, top-right equation on p. 14562, or \cite{nekMarkoRiskPub2010}
$$
\frac{1}{n}\sum_{i=1}^n \frac{1}{1+c w_i}=1-\frac{p}{n}+\lo_{L_2}(1)\;,
$$
since the previous expression effectively relates $\trace{D_w X (X\trsp D_w X)^{-1}X\trsp}$ to $n-p$, the rank of the corresponding ``hat matrix''.

Since $\frac{cw_i}{1+cw_i}=1-\frac{1}{1+cw_i}$, we see that 
$$
\myExp_{w_i}\left(\frac{cw_i}{1+cw_i}\right)=\frac{p}{n}+\lo(1)\;.
$$

Hence, for the bagged estimate, we have the equation 
$$
\Exp{\norm{\bHat}_2^2}=\frac{p}{n}\left(\sigma^2+\Exp{\norm{\bHat}_2^2}\right)(1+\lo(1))\;.
$$
We conclude that 
$$
\Exp{\norm{\bHat}_2^2}= (1+\lo(1))\frac{\kappa}{1-\kappa}\sigma^2\;.
$$
Note that $\frac{\kappa}{1-\kappa}\sigma^2=\Exp{\norm{\betaHat_{sLS}}_2^2}$, where the latter is the standard (i.e non-weighted) least squares estimator. 

We note that the rotational invariance argument given in \cite{NEKetAlRobustRegressionTechReport11,NEKRobustPaperPNAS2013Published} still apply here, so that we have the
$$
\bHat-\beta\equalInLaw \norm{\bHat-\beta}u\;,
$$
where $u$ is uniform on the sphere and independent of $\norm{\bHat-\beta}$ (recall that this simply comes from the fact that if $X_i$ is changed into $OX_i$, where $O$ is orthogonal, $\bHat$ is changed into $O\bHat$ - and we then apply invariance arguments coming from rotational invariance of the distribution of $X_i$). Therefore, 
$$
\var{v\trsp (\bHat-\beta)}=\frac{\norm{v}^2}{p}\Exp{\norm{\bHat-\beta}_2^2}\;.
$$

So we conclude that 
$$
p\Exp{\var{v\trsp \betaHat_w|\Gamma}}=p\var{v\trsp \betaHat_w}-\frac{\kappa}{1-\kappa}\sigma^2\norm{v}_2^2+\lo(1)\;.
$$

Now, the quantity $\var{v\trsp \betaHat_w}$ is well understood. The rotational invariance arguments we mentioned before give that 
$$
\var{v\trsp \betaHat_w}=\frac{\norm{v}_2^2}{p}\Exp{\norm{\betaHat_w-\beta}_2^2}\;.
$$
In fact, using the notation $D_w$ for the diagonal matrix with $D_w(i,i)=w_i$, since 
$$
\betaHat_w-\beta=(X\trsp D_w X)^{-1}X\trsp D_w \eps\;,
$$
we see that 
$$
\Exp{\norm{\betaHat_w-\beta}_2^2}=\sigma^2_\eps\Exp{\trace{(X\trsp D_w X)^{-2}X\trsp D_{w^2} X}}\;.
$$
(Note that under mild conditions on $\eps$, $X$ and $w$, we also have $\norm{\betaHat_w-\beta}^2_2=\Exp{\norm{\betaHat_w-\beta}_2^2}+\lo_{L_2}(1)$ - owing to concentration results for quadratic forms of vectors with independent entries; see \cite{ledoux2001}.)

We now need to simplify this quantity.

\textbf{Analytical simplification of $\bm{\trace{(X\trsp D_w X)^{-2}X\trsp D_{w^2} X}}$} 
Of course, 
$$
\trace{(X\trsp D_w X)^{-2}X\trsp D_{w^2} X}=\trace{D_{w} X (X\trsp D_w X)^{-2}X\trsp D_{w} }=\sum_{i=1}^n w_i^2 X_i\trsp (X\trsp D_w X)^{-2}X_i\;.
$$
Hence, if $\SigmaHat_w=\frac{1}{n}\sum_{i=1}^n w_i X_i X_i\trsp\triangleq \frac{w_i}{n}X_i X_i\trsp +\SigmaHat_{(i)}$, we have
$$
\trace{(X\trsp D_w X)^{-2}X\trsp D_{w^2} X}=\frac{1}{n}\sum_{i=1}^n w_i^2 \frac{X_i\trsp \SigmaHat^{-2} X_i}{n}\;.
$$

Call $\SigmaHat(z)=\SigmaHat-z\id_p$. Using the identity 
$$
(\SigmaHat-z\id_p)(\SigmaHat-z\id_p)^{-1}=\id_p\;,
$$
we see, after taking traces, that (\cite{silverstein95}) 
$$
\frac{1}{n}\sum_{i=1}^n w_i X_i\trsp (\SigmaHat-z\id_p)^{-1} X_i-z \trace{(\SigmaHat-z\id_p)^{-1}}=p\;.
$$
We call, for $z\in \mathbb{C}$,  $c(z)=\frac{1}{n}\trace{(\SigmaHat-z\id_p)^{-1}}$ and $c_i(z)=X_i\trsp (\SigmaHat_{(i)}-z\id_p)^{-1}X_i$, provided $z$ is not an eigenvalue of $\SigmaHat$. 

Differentiating with respect to $z$ and taking $z=0$ (we know here that $\SigmaHat$ is non-singular with probability 1, so this does not create a problem), we have 
$$
\frac{1}{n}\sum_{i=1}^n w_i X_i\trsp\SigmaHat^{-2} X_i-\trace{\SigmaHat^{-1}}=0\;.
$$
Also, since, by the Sherman-Morrison-Woodbury formula (\cite{hj}), 
$$
X_i\trsp \SigmaHat(z)^{-1}X_i=\frac{X_i\trsp \SigmaHat_{(i)}(z)^{-1}X_i}{1+w_i\frac{1}{n}X_i\trsp \SigmaHat_{(i)}(z)^{-1}X_i}\;,
$$
we have, after differentiating,
$$
\frac{1}{n}X_i\trsp \SigmaHat^{-2}X_i=\frac{c_i'(0)}{[1+w_ic_i(0)]^2}\;,
$$
where of course $c_i'(0)=X_i\trsp \SigmaHat_{(i)}^{-2}X_i$. Hence, 
$$
\frac{1}{n}\sum_{i=1}^n w_i^2\frac{1}{n}X_i\trsp \SigmaHat^{-2}X_i=\frac{1}{n}\sum_{i=1}^n w_i^2 \frac{c_i'(0)}{[1+w_i c_i(0)]^2}= c'(0) \frac{1}{n}\sum_{i=1}^n \frac{w_i^2}{[1+w_i c(0)]^2}\;.
$$
(Note that the arguments given in e.g \cite{nekMarkoRiskPub2010} or \cite{NEKHolgerShrinkage11} for why $c_i(z)=c(z)(1+\lo_P(1))$ extend easily to $c_i'$ and $c'$ given our assumptions on $w$'s and the fact that these functions have simple interpretations in terms of traces of powers of inverses of certain well-behaved - under our assumptions - matrices.) 

Going back to 
$$
\frac{1}{n}\sum_{i=1}^n w_i X_i\trsp (\SigmaHat-z\id_p)^{-1} X_i-z \trace{(\SigmaHat-z\id_p)^{-1}}=p\;,
$$
and using the previously discussed identity 
$$
\frac{w_i}{n}X_i\trsp (\SigmaHat-z\id_p)^{-1} X_i = 1-\frac{1}{1+w_i c_i(z)}\;,
$$
we have
$$
n-\sum_{i=1}^n \frac{1}{1+w_i c_i(z)}-zn c(z)=p\;.
$$
In other words, 
$$
1-\kappa=\frac{1}{n}\sum_{i=1}^n \frac{1}{1+w_i c_i(z)}+z c(z)\;.
$$
Now, 
\begin{align*}
c(z)\frac{1}{n}\sum_{i=1}^n \frac{w_i}{1+w_i c(z)}&=\frac{1}{n}\sum_{i=1}^n (1-\frac{1}{1+w_i c(z)})\\
&=\frac{1}{n}\sum_{i=1}^n (1-\frac{1}{1+w_i c_i(z)})+\eta(z)\\
&=\kappa+zc(z)+\eta(z)\;,
\end{align*}
where $\eta(z)$ is such that $\eta(z)=\lo_P(1)$ and $\eta'(z)=\lo_P(1)$ ($\eta$ has an explicit expression which allows us to verify these claims). Therefore, by differentiation, and after simplifications, 
$$
\frac{1}{n}\sum \left[\frac{w_i}{1+w_i c(0)}\right]^2 c'(0)=\kappa \frac{c'(0)}{[c(0)]^2}-1+\lo_P(1)\;.
$$

Hence, 
$$
\trace{(X\trsp D_w X)^{-2}X\trsp D_{w^2} X}=\left[\kappa \frac{\trace{\SigmaHat_w^{-2}}/n}{[\trace{\SigmaHat_w^{-1}}/n]^2}-1\right]+\lo_P(1)\;.
$$
The fact that we can take expectations on both sides of this equation and that $\lo_P(1)$ is in fact $\lo_{L_2}(1)$ come from our assumptions about $w_i$'s - especially the fact that they are independent and bounded away from 0 - and properties of the inverse Wishart distribution. 

\textbf{Conclusion} We can now conclude that a consistent estimator of the expected variance of the bootstrap estimator is 
$$
\frac{\norm{v}_2^2}{p}\sigma^2_\eps \left[\kappa \frac{\trace{\SigmaHat_w^{-2}}/n}{[\trace{\SigmaHat_w^{-1}}/n]^2}-\frac{1}{1-\kappa}\right]\;.
$$

Using the fact that 
$$
1-\kappa=\frac{1}{n}\sum_{i=1}^n \frac{1}{1+w_i c(z)}+zc(z)\;,
$$
we see that, since $\frac{1}{n}\trace{\SigmaHat_w^{-2}}=c'(0)$,
$$
\frac{1}{n}\trace{\SigmaHat_w^{-2}}=\frac{c(0)}{\frac{1}{n}\sum_{i=1}^{n}w_i/(1+w_ic(0))^2}\;.
$$

We further note that asymptotically, when $w_i$ are i.i.d and satisfy our assumptions, $c(0)\tendsto c$, which solves:
$$
\myExp_{w_i}\left[\frac{1}{1+w_ic}\right]=1-\kappa\;.
$$
Hence, asymptotically, when $w_i$'s are i.i.d and satisfy our assumptions, we have
$$
\frac{\trace{\SigmaHat_w^{-2}}/n}{[\trace{\SigmaHat_w^{-1}}/n]^2}\tendsto \frac{1}{c\myExp_{w_i}[w_i/(1+w_i c)^2]}\;.
$$
Since $cw_i/(1+cw_i^2)=1/(1+cw_i)-1/(1+cw_i)^2$, we finally see that 
\begin{align*}
c\myExp_{w_i}\left[\frac{w_i}{(1+w_i c)^2}\right]&=
\myExp_{w_i}\left[\frac{1}{1+cw_i}\right]-\myExp_{w_i}\left[\frac{1}{(1+cw_i)^2}\right]\;,\\
&=1-\kappa-\myExp_{w_i}\left[\frac{1}{(1+cw_i)^2}\right]\;.
\end{align*}

So asymptotically, the expected bootstrap variance is equivalent to, when $\norm{v}_2=1$, 
$$
\frac{\sigma^2_\eps}{p} \left[\kappa\frac{1}{1-\kappa-\Exp{\frac{1}{(1+cw_i)^2}}}-\frac{1}{1-\kappa}\right]\;,
$$
where $\Exp{\frac{1}{1+cw_i}}=1-\kappa$.

In particular, when $w_i=1$, we see, unsurprisingly that the above quantity is 0, as it should, given that the bootstrapped estimate does not change when resampling.

We finally make note of a technical point, that is addressed in papers such as \cite{nekMarkoRiskPub2010,NEKRobustRegressionRigorous2013} and on which we rely here by using those papers. Essentially, theoretical considerations regarding quantities such as $\frac{1}{p}\trace{\SigmaHat_w^{-k}}$ are easier to handle by working rather with  $\frac{1}{p}\trace{(\SigmaHat_w+\tau\id_p)^{-k}}$, for some $\tau>0$. In the present context, it is easy to show (and done in those papers) that this approximation allows us to take the limit - even in expectation - for $\tau\tendsto 0$ in all the expressions we get for $\tau>0$ and that that limit is indeed $\Exp{\frac{1}{p}\trace{\SigmaHat_w^{-k}}}$. Technical details rely on using the first resolvent identity \citep{KatoPerturbTheory}, using moment properties of inverse Wishart distributions and using the fact that $w_i$'s are bounded below. 

\subsection{On acceptable weight distributions}\label{supp:subsec:acceptableWeightDistribution}

An acceptable weight distribution is such that the variance of the resampled estimator is equal to the variance of the sampling distribution of the original estimator, i.e the least-squares one in the case we are considering. Here, this variance is asymptotically $\kappa/(1-\kappa)\sigma^2_\eps/p$, in the case where $\Sigma=\id_p$. 

Recall that in the main text, we proposed to use 
$$
w_i\iid 1-\alpha+\alpha \textrm{Poisson}(1)
$$

To determine $\alpha$ numerically so that 
$$
\left[\kappa\frac{1}{1-\kappa-\myExp_{w_i}\left[\frac{1}{(1+cw_i)^2}\right]}-\frac{1}{1-\kappa}\right]=\frac{\kappa}{1-\kappa}\;,
$$
we performed a simple dichotomous search for $\alpha$ over the interval $[0,1]$. Our initial $\alpha$ was .95. We specified a tolerance of $10^{-2}$ for the results reported in the paper in Table \ref{tab:GoodWeightProportions}. This means that we stopped the algorithm when the ratio of the two terms in the previous display was within 1\% of 1. We used a sample size of $10^6$ to estimate all the expectations. 

\paragraph{Case $\Sigma\neq \id_p$} In the case where $\Sigma\neq \id$, both $\Exp{\var{v\trsp \betaHat^*}}$ and $\var{v\trsp \betaHat}$ 
depend on $v\trsp \Sigma^{-1} v$. It is therefore natural to ask how we could estimate this quantity. If we are able to do so, it is clear that we could follow the same strategy as above to find $\alpha$ from the data. Standard Wishart results (\cite{mardiakentbibby}, Theorem 3.4.7) give that 
$$
\frac{v\trsp \Sigma^{-1} v}{v\trsp \SigmaHat^{-1}v}\sim \frac{\chi^2_{n-p}}{n-1}\tendsto (1-\kappa) \text{in probability}.
$$
This of course suggests using $(1-p/n)v\trsp \SigmaHat^{-1}v$ as an estimator of $v\trsp \Sigma^{-1} v$ and solves the question we were discussing above. 

However, we note that since 
$$
\frac{\Exp{\var{v\trsp \betaHat^*}}}{\var{v\trsp \betaHat}}
$$
does not depend on $\Sigma$ when the design is Gaussian or Elliptical, the same $\alpha$ should work regardless of $\Sigma$, provided it is positive definite. In particular, an acceptable weight distribution for resampling as defined above could be computed by assuming $\Sigma=\id_p$ and would work for any positive definite $\Sigma$.

\subsection{Numerics for Figure \ref{subfig:bootOverEstFact:Theory}}
This figure, related to the current discussion was generated by assuming Poisson(1) weights and computing deterministically the expectations of interest. This was easy since if $W\sim \text{Poisson}(1)$, $P(W=k)=\frac{\exp(-1)}{k!}$. 

We truncated the expansion of the expectation at $K=100$, so we neglected terms of order $1/100!$ or lower only. The constant $c$ was found by dichotomous search, with tolerance $10^{-6}$ for matching the equation $\Exp{1/(1+Wc)}=1-p/n$. Once $c$ was found, we approximated the expectation in Theorem \ref{thm:bootVariance} in the same fashion as we just described. 

Once we had computed the quantity appearing in Theorem \ref{thm:bootVariance}, we divided it by $\kappa/(1-\kappa)$. We repeated these computations for $\kappa=.05$ to $\kappa=.5$ by increments of $10^{-3}$ to produce our figure.

\subsection{Extensions of Theorem \ref{thm:bootVariance}}\label{supp:subsec:ExtensionsBootVarThm}
\subsubsection{Elliptical Design}\label{supp:PairsEllipticalDesign}
In this case, we have $\tilde{X}_i=\lambda_i X_i$, where $X_i\sim{\cal N}(0,\id_p)$ and $y_i=\eps_i+\tilde{X}_i\trsp \beta$. We assume $\lambda_i\neq 0$ for all $i$, $\Exp{\lambda_i^2}=1$, $\lambda_i$'s are i.i.d and bounded away from 0.

We can go through the proof of Theorem \ref{thm:bootVariance} and make necessary adjustments. 

Of course, we have 
$$
\Exp{\norm{\betaHat_w-\beta}_2^2}=\sigma^2_\eps \Exp{\trace{(\tilde{X}\trsp D_w \tilde{X})^{-2}\tilde{X}\trsp D_{w^2}\tilde{X}}}\;.
$$
If we reformulate this expression in terms of $X$ we get 
$$
\Exp{\norm{\betaHat_w-\beta}_2^2}=\sigma^2_\eps \Exp{\trace{(X\trsp D_{\lambda^2 w} X)^{-2}X\trsp D_{\lambda^2w^2}X}}\;.
$$
So this quantity is affected by the distribution of $\lambda_i$'s; hence the risk of $\betaHat_w$ is different in the Gaussian and elliptical design case. 

The other important part of the proof is the computation of the risk of the bagged estimator. In this case, earlier work in random matrix theory (e.g \cite{nekCorrEllipD,NEKHolgerShrinkage11}) shows that we can use the approximations
$$
c_i\simeq \lambda_i^2 c \;,
$$
where $c=\lim_{n,p\tendsto \infty}\Exp{\frac{1}{n}\trace{S^{-1}}}$, where $S=\frac{1}{n}\sum_{i=1}^n \lambda_i^2 w_i X_i X_i\trsp$, i.e $S=\frac{1}{n}X\trsp D_{\lambda^2 w} X$.

If we call 
$$
g(\lambda_i^2)=\myExp_{w_i}\left(\frac{c\lambda_i^2}{1+c\lambda_i^2 w_i}\right)\;,
$$
we see by keeping track of changes in the earlier proof that we have asymptotically
$$
\Exp{\norm{\hat{b}}_2^2}=\frac{\myExp_{\lambda_i}(\lambda_i^2 g^2(\lambda_i^2))}{\kappa-\myExp_{\lambda_i}(\lambda_i^4 g^2(\lambda_i^2))} \sigma^2_\eps\;.
$$
The same arguments we used before give that 
$$
\Exp{g(\lambda_i^2)}=\kappa\;.
$$
Based on this information, we can compute $\Exp{\var{v\trsp \betaHat^*_w}}$ as we had in the proof of Theorem \ref{thm:bootVariance} and compare it to $\var{v\trsp \betaHat}$. The expressions do not seem to simplify much further however in this case, by contrast to the Gaussian design case where $\lambda_i=1$ for all $i$. (For instance, when $\lambda_i$=1 for all $i$'s, $g(\lambda_i)=g(1)=\kappa$ and we recover the results of Theorem \ref{thm:bootVariance}.) 

Importantly, the characteristics of the distribution of $\lambda_i$ that affect $\Exp{\var{v\trsp \betaHat^*_w}}$ go beyond $\Exp{\lambda_i^2}$. And hence the expression we gave in Theorem \ref{thm:bootVariance} won't apply directly to the elliptical case. 

\subsubsection{Multinomial($n,1/n$) weights}\label{supp:MultinomialVsPoisson}
A natural question is whether the computations we have made can be extended to $w_i$'s that are i.i.d $Poisson(1)$ and/or Multinomial($n,1/n$), as in the standard bootstrap. 

In both cases, technical issues arise because with asymptotically negligible but non-zero probability, the matrix $X\trsp D_w X$ may be of rank less than $p$. This can handled in several ways. A simple one is to replace the weights $w_i$ by $w_i(\tau)=\tau+(1-\tau) w_i$ and study the problem when $\tau\tendsto 0$. 

Beyond that technicality, an important question is whether one can handle the fact that the weights are dependent in the multinomial case. For quantities of the type $\frac{1}{n}\trace{(X\trsp D_w X)^{-1}}$, it was argued in \cite{nekMarkoRiskPub2010} that one could ignore the dependency issue and treat the problem as if the weights where i.i.d $\Poisson(1)$. This type of arguments would be easy to extend where we need them here, for instance in quantities that arise in the computation of $\Exp{\norm{\betaHat_w-\beta}_2^2}$ or to show that we can write $c_i=c+\lo_P(1)$, where $c$ is deterministic.  

The remaining question is therefore the characterization of the risk of the bagged estimator. We have, with a slight modification with respect to the case of independent weights,
$$
\hat{b}_p=\frac{1}{p}\sum_{i=1}^n X_i(p)\left[\myExp_{w_i}\left(\frac{cw_i}{1+cw_i}\right) \eps_i -V_i\trsp \myExp_{w}\left(\hat{\gamma}_{(i)}\frac{cw_i}{1+cw_i}\right)\right] (1+\lo_P(1))\;.
$$
As before, $\myExp_{w_i}\left(\frac{cw_i}{1+cw_i}\right)=p/n+\lo(1)$. The problem is the dependence between $\hat{\gamma}_{(i)}$ and $w_i$. The rotational invariance arguments we invoked before still hold, so that $\hat{\gamma}_i=\norm{\hat{\gamma}_{(i)}}_2 u$, where $u$ is uniform on the unit sphere and independent of $\norm{\hat{\gamma}_i}_2$. It is also independent of $V_i$, since $\hat{\gamma}_{(i)}$ is the leave-one-out estimate of $\gamma$. The same rotational invariance arguments hold for the bagged estimate $\myExp_w{\hat{\gamma}_{(i)} \frac{cw_i}{1+cw_i}}$. Hence, after a little bit of work we see that 
$$
\Exp{[V_i\trsp \myExp_{w}\left(\hat{\gamma}_{(i)}\frac{cw_i}{1+cw_i}\right)]^2}=\Exp{\norm{\myExp_{w}\left(\gammaHat_{(i)}\frac{cw_i}{1+cw_i}\right)}_2^2}\;.
$$
Using the fact that $w_{(i)}|w_i\sim \text{Multinomial}(n-w_i,1/(n-1))$, the only real technical hurdle is to show that $\myExp_{w_i}{\hat{\gamma}_{(i)}}$ is asymptotically deterministic and independent of $w_i$. A strategy for this is to create a coupling: one can compare $\hat{\gamma}_{(i)}$ to $\hat{\mathfrak{g}}_{(i)}$, where $\hat{\mathfrak{g}}_{(i)}$ is computed using a $n-1$ dimensional vector of weights with distribution $\text{Multinomial}(n-1,1/(n-1))$ - i.e running $w_i-1$ multinomial trials after having obtained $w_{(i)}$ (the case $w_i=0$ is easy to handle separately). Clearly, the distribution of $\hat{\mathfrak{g}}_{(i)}$ is independent of $w_i$, by construction. On the other hand, a bit of work on top of the leave-one-observation-out expansions show that $\norm{\hat{\mathfrak{g}}_{(i)}-\hat{\gamma}_{(i)}}_2^2$ is roughly of size at most $w_i^2/n\tendsto 0$. Furthermore, $\norm{\myExp_w(\hat{\mathfrak{g}}_{(i)})-\myExp_w(\gammaHat_{(i)})}_2\tendsto 0$ for the same reason. This suggests that further technical work along those lines will give that 
$$
\myExp_{w}\left(\gammaHat_{(i)}\frac{cw_i}{1+cw_i}\right)\simeq \myExp_{w}\left(\gammaHat_{(i)}\right)\myExp_{w}\left(\frac{cw_i}{1+cw_i}\right)\;,
$$
where $\simeq$ means that the approximation is valid in Euclidean norm. The same coupling arguments will give that
$$
\norm{\myExp_{w}\left(\gammaHat_{(i)}\right)}\simeq \norm{\hat{b}}\;,
$$
where $\hat{b}$ is the bagged estimator. This will yield the same results as in the i.i.d $\Poisson(1)$ case. 

\paragraph{Numerical results}
We verified that our theoretical results (i.e Theorem \ref{thm:bootVariance} hold for Poisson(1) weights in limited simulations (note that in this case $w_i=0$ is possible).  For Gaussian design matrix, double exponential errors, and ratios $\kappa=.1, .3, .5$ we found that the ratio of the observed bootstrap expected variance of $\betaHat_1^*$ to our theoretical prediction using Poisson(1) weights was 1.0027, 1.0148, and 1.0252, respectively (here $n=500$, and there were $R=1000$ bootstrap resamples for each of $1000$ simulations).

\section{Jackknife Variance (Proof of Theorem \ref{thm:Jackknife})}\label{supp:JackknifeProof}
As explained in Section \ref{sec:supp:CovAndInvarianceIssue}, we can study without loss of generality the case where $\Sigma=\id_p$ and $\beta=0$. This is what we do in this proof.

We study it in details in the least-squares case, and postpone a detailed analysis of the robust regression case to future studies. 

According to the approximations in \cite{NEKRobustPaperPNAS2013Published}, which are exact for least squares, or classic results \cite{WeisbergLinearRegressionBook14} we have:
$$
\betaHat-\betaHat_{(i)}=\frac{1}{n}\SigmaHat_{(i)}^{-1}X_i \resid_i\;.
$$ 
Recall also that 
$$
\resid_i=\frac{\prederror}{1+\frac{1}{n}X_i\trsp \SigmaHat_{(i)}^{-1} X_i}\;.
$$
Hence,
$$
v\trsp(\betaHat-\betaHat_{(i)})=\frac{1}{n}v\trsp\SigmaHat_{(i)}^{-1}X_i \frac{\prederror}{1+\frac{1}{n}X_i\trsp \SigmaHat_{(i)}^{-1} X_i}\;.
$$

Hence, 
$$
n\sum_{i=1}^n [v\trsp(\betaHat-\betaHat_{(i)})]^2=\frac{1}{n}\sum_{i=1}^n \frac{[v\trsp\SigmaHat_{(i)}^{-1}X_i \prederror]^2}{[1+\frac{1}{n}X_i\trsp \SigmaHat_{(i)}^{-1} X_i]^2}\;.
$$

Note that at the denominator, we have 
\begin{align*}
1+\frac{1}{n}X_i\trsp \SigmaHat_{(i)}^{-1} X_i&= 1+\frac{1}{n}\trace{\SigmaHat^{-1}}+\lo_P(1)\;,\\ &=1+\frac{p}{n}\frac{1}{1-p/n}+\lo_P(1)=\frac{1}{1-p/n}+\lo_P(1)\;.
\end{align*}
by appealing to standard results about concentration of high-dimensional Gaussian random variables, and standard results in random matrix theory and classical multivariate statistics (see \cite{mardiakentbibby,Haff79IdentityWishartDWithApps}). By the same arguments, this approximation works not only for each $i$ but for all $1\leq i \leq n$ at once. The approximation is also valid in expectation, using results concerning Wishart matrices found for instance in \cite{mardiakentbibby}. 

For the numerator, we see that 
$$
T_i=v\trsp\SigmaHat_{(i)}^{-1}X_i \prederror=v\trsp\SigmaHat_{(i)}^{-1}X_i (\eps_i-X_i\trsp(\betaHat_{(i)}-\beta))\;.
$$
Since $\eps_i$ is independent of $X_i$ and $\SigmaHat_{(i)}$, we see that
$$
\Exp{T_i^2}=\Exp{\eps_i^2}\Exp{(v\trsp \SigmaHat_{(i)}^{-1}X_i)^2}+\Exp{[X_i\trsp (\betaHat_{(i)}-\beta)]^2 [v\trsp \SigmaHat_{(i)}^{-1}X_i]^2}\;.
$$
If $\alpha$ and $\beta$ are fixed vectors, $\alpha\trsp X_i$ and $\beta\trsp X_i$ are Gaussian random variables with covariance $\alpha\trsp \beta$, since we are working under the assumption that $X_i\sim {\cal N}(0,\id_p)$. It is easy to check that if $Z_1$ and $Z_2$ are two Gaussian random variables with covariance $\gamma$ and respective variances $\sigma_1^2$ and $\sigma_2^2$, we have
$$
\Exp{(Z_1Z_2)^2}=\sigma_1^2 \sigma_2^2+2\gamma^2\;.
$$

We conclude that 
$$
\Exp{(a\trsp X_i)^2(b\trsp X_i)^2}=\norm{a}_2^2 \norm{b}_2^2+2 (a\trsp b)^2\;.
$$

We note that 
$$
\Exp{[v\trsp \SigmaHat_{(i)}^{-1}X_i]^2}=\Exp{v\trsp \SigmaHat_{(i)}^{-2}v}\;.
$$
Classic Wishart computations give (\cite{Haff79IdentityWishartDWithApps}, p.536 (iii)) that as $n,p\tendsto \infty$, 
$$
\Exp{\SigmaHat_{(i)}^{-2}}= (\frac{1}{(1-p/n)^3}+\lo(1)) \id_p\;.
$$
Hence, in our asymptotics,
$$
\Exp{(v\trsp \SigmaHat_{(i)}^{-1}X_i)^2}\tendsto \frac{1}{(1-p/n)^3} \norm{v}_2^2\;.
$$
We also note that 
$$
\myExp_{\eps}\left[(v\trsp \SigmaHat_{(i)}^{-1}\betaHat_{(i)})^2\right]=\frac{1}{n} v\trsp \SigmaHat_{(i)}^{-3} v\;. 
$$
Hence, 
$$
\Exp{(v\trsp \SigmaHat_{(i)}^{-1}\betaHat_{(i)})^2}=\lo(1) \text{in our asymptotics}\;.
$$
Therefore, 
$$
\Exp{T_1^2}=\frac{1}{(1-p/n)^3} \norm{v}_2^2 \sigma^2_\eps(1+\frac{p/n}{1-p/n})+\lo(1)\;
$$
since $\Exp{\norm{\betaHat_{(i)}-\beta}_2^2}=\sigma^2_\eps\frac{p/n}{1-p/n}+\lo(1)$.

When $v=e_1$, we therefore have 
$$
\Exp{T_1^2}=\sigma^2_\eps \frac{1}{(1-p/n)^4}+\lo(1)\;.
$$

Therefore, in that situation,
$$
\Exp{n\sum_{i=1}^n (v\trsp (\betaHat_{(i)}-\betaHat)^2)}=\sigma^2_\eps \frac{1}{(1-p/n)^2}+\lo(1)\;.
$$
In other words,
$$
\Exp{\sum_{i=1}^n (v\trsp (\betaHat_{(i)}-\betaHat)^2)}=\left[\frac{1}{1-p/n}+\lo(1)\right] \var{\betaHat_1}
$$
\subsection{Dealing with the centering issue}
Let us call $\betaHat_{(\cdot)}=\frac{1}{n}\sum_{i=1}^n \betaHat_{(i)}$. 
We have previously studied the properties of $\sum_{i=1}^n([v\trsp (\betaHat-\betaHat_{(i)})]^2)$ and now need to show that the same results apply to 
$\sum_{i=1}^n([v\trsp (\betaHat_{(\cdot)}-\betaHat_{(i)})]^2)$.

To show that replacing $\betaHat$ by $\betaHat_{(\cdot)}$ does not affect the result, we consider the quantity
$$
n^2 [v\trsp (\betaHat-\betaHat_{(\cdot)})]^2\;.
$$
Since $\betaHat-\betaHat_{(i)}=\frac{1}{n}\SigmaHat_{(i)}^{-1}X_i \resid_i$, we have 
$$
\betaHat-\betaHat_{(\cdot)}=\frac{1}{n^2}\sum_{i=1}^n \SigmaHat_{(i)}^{-1}X_i \resid_i\;.
$$
Hence, 
$$
n^2[v\trsp (\betaHat-\betaHat_{(\cdot)})]^2=\left[\frac{1}{n}\sum_{i=1}^n v\trsp \SigmaHat_{(i)}^{-1}X_i (\eps_i-X_i\trsp (\betaHat-\beta))\right]^2\;.
$$
A simple variance computation gives that $\frac{1}{n}\sum_{i=1}^n v\trsp \SigmaHat_{(i)}^{-1}X_i \eps_i\tendsto 0$ in $L^2$, since each term has mean 0 and the variance of the sum goes to 0.

Recall now that 
$$
\SigmaHat^{-1}X_i=\frac{\SigmaHat_{(i)}^{-1}X_i}{1+c_i}\;,
$$
where all $c_i$'s are equal to $p/n/(1-p/n)+\lo_P(1)$. Let us call $\mathsf{c}=p/n/(1-p/n).$

We conclude that 
$$
\frac{1}{n}\sum_{i=1}^nv\trsp \SigmaHat_{(i)}^{-1}X_i X_i\trsp (\betaHat-\beta) =v\trsp(\betaHat-\beta)(1+\mathsf{c}+\lo(1))\;.
$$
When $v$ is given, we clearly have $v\trsp (\betaHat-\beta)=\lo_P(p^{-1/2})$, given the distribution of $\betaHat-\beta$ under our assumptions on $X_i$'s and $\eps_i$'s. So we conclude that 
$$
n^2 [v\trsp (\betaHat-\betaHat_{(\cdot)})]^2\tendsto 0 \text{ in probability}\;.
$$
Because we have enough moments, the previous result is also true in expectation.
\subsection{Putting everything together}
The jackknife estimate of variance of $v\trsp \betaHat$ is up to a factor going to 1
\begin{align*}
\frac{n}{n-1}\text{JACK}(\var{v\trsp \betaHat})&=\sum_{i=1}^n [(v\trsp \betaHat_{(i)}-\betaHat_{(\cdot)})]^2\\
&=\sum_{i=1}^n [(v\trsp \betaHat_{(i)}-\betaHat)]^2+n[v\trsp (\betaHat-\betaHat_{(\cdot)})]^2\;.
\end{align*}
Our previous analyses therefore imply (using $v=e_1$) that 
$$
\frac{n}{n-1}\Exp{\text{JACK}(\var{\betaHat_1})}=\left[\frac{1}{1-p/n}+\lo(1)\right] \var{\betaHat_1}\;.
$$

This completes the proof of Theorem \ref{thm:Jackknife}
\subsection{Extension to more involved design and different loss functions}\label{supp:jackKnife:DesignIssue}
Our approach could be used to analyze similar problems in the case of elliptical designs. However, in that case, it seems that the factor that will appear in quantifying the amount by which the variance is misestimated will depend in general on the ellipticity parameters. We refer to \cite{NEKRealizedRiskMarkoPublished2013} for computations of quantities such as $v\trsp \SigmaHat^{-2} v$ in that case, which are of course essential to measuring mis-estimation.

We obtained the possible correction we mentioned in the paper for these more general settings following the ideas used in the rigorous proof we just gave, as well as approximation arguments given in \cite{NEKRobustPaperPNAS2013Published} and justified rigorously in \cite{NEKRobustRegressionRigorous2013}. Checking fully rigorously all the approximations we made in this Jackknife computation would require a very large amount of technical work, and since this is tangential to our main interests in this paper, we postpone that to a future work of a more technical nature. 

It is also clear, since all these results and the proof we just gave rely on random matrix techniques, that a similar analysis could be carried out in the case where $X_{i,j}$ are i.i.d with a non-Gaussian distribution, provided that distribution has enough moments (see e.g \cite{PajorPasturPub09} or \cite{NEKHolgerShrinkage11} for examples of such techniques, actually going beyond the case of i.i.d entries for the design matrix). The main issues in carrying out this program seem to be technical and not conceptual at this point, so we leave this problem to possible future work.

\section{More details on going from $\Sigma=\id_p$ to $\Sigma\neq \id_p$}\label{sec:supp:CovAndInvarianceIssue}
As discussed in Section \ref{supp:Reminders}, we have 
$$
\betaHat_\rho(y_i;X_i;\eps_i)-\beta=\Sigma^{-1/2} \betaHat_\rho(\eps_i;\Sigma^{-1/2} X_i;\eps_i)\;,
$$
In other words, $\betaHat_\rho(\tilde{y}_i;\Sigma^{-1/2} X_i;\eps_i)$ is the robust regression estimator in the null case where $\beta=0$ and $X_i$ is replaced by $\tilde{X}_i=\Sigma^{-1/2}X_i$. Of course, if $\scov{X_i}=\Sigma$, $\scov{\tilde{X}_i}=\id_p$.

\subsection{Consequences for the Jackknife}
Naturally the same equality applies to leave-one-out estimators. So, with the notations of Equation \eqref{eq:defJackVar} in the main text, we have, when $\textrm{span}(\{X_i\}_{i=1}^n)=\mathbb{R}^p$ and $\Sigma$ is positive definite, 
$$
(v\trsp[\betaHat_{(i)}-\tilde{\beta}])^2=(v\trsp \Sigma^{-1/2}[\betaHat_{(i)}(\eps_i;\Sigma^{-1/2} X_i;\eps_i)-\tilde{\beta}(\eps_i;\Sigma^{-1/2} X_i;\eps_i)])^2\;.
$$
Let us call $\betaHat_\rho(\Sigma;\beta)$ our robust regression estimator when $\scov{X_i}=\Sigma$ and $\Exp{y_i|X_i}=X_i\trsp \beta$. 
It is  clear from the previous display that the properties of $\varJack(v\trsp\betaHat_\rho(\Sigma;\beta))$ are the same as those of $\varJack(v\trsp \Sigma^{-1/2} \betaHat_\rho(\id_p;0))$. So understanding the null case is enough to understand the general case, which is why we focus on the null case in our computations. 

Furthermore, by the same arguments, we have 
$$
\var{v\trsp \betaHat_\rho(y_i;X_i;\eps_i)} = \var{v\trsp \Sigma^{-1/2} \betaHat_\rho(\id_p;0)}. 
$$

So we have 
$$
\frac{\varJack(v\trsp\betaHat_\rho(\Sigma;\beta))}{\var{v\trsp \betaHat_\rho(\Sigma;\beta)}}=\frac{\varJack(v\trsp\Sigma^{-1/2}\betaHat_\rho(\id_p;0))}{\var{v\trsp \Sigma^{-1/2} \betaHat_\rho(\id_p;0)}}\;.
$$
Calling $u_1=\Sigma^{-1/2} v/\norm{\Sigma^{-1/2}v}$, we see that $u_1$ is a unit vector. And we finally see that 
$$
\frac{\varJack(v\trsp\betaHat_\rho(\Sigma;\beta))}{\var{v\trsp \betaHat_\rho(\Sigma;\beta)}}=\frac{\varJack(u_1\trsp \betaHat_\rho(\id_p;0))}{\var{u_1\trsp \betaHat_\rho(\id_p;0)}}\;.
$$

Hence, characterizing $\frac{\varJack(v\trsp \betaHat_\rho(\id_p;0))}{\var{v\trsp \betaHat_\rho(\id_p;0)}}$ for all fixed unit vectors $v$ characterizes 
$$
\frac{\varJack(v\trsp\betaHat_\rho(\Sigma;\beta))}{\var{v\trsp \betaHat_\rho(\Sigma;\beta)}}
$$
for all $\beta$ and invertible $\Sigma$. This is why our proof is focused on the null case $\Sigma=\id_p$ and $\beta=0$.

\subsection{Consequences for the pairs bootstrap}
Let us call $D_w$ the diagonal matrix with $(i,i)$-entry $D(i,i)=w_i$. 
We consider only the case where $w_i>0$, so we do not have to consider the case where fewer than $p$ $X_i$'s are assigned positive weights - which would result in $\betaHat_\rho$ being ill-defined (since infinitely many solutions would then be feasible). 

In particular, for least squares, we have in our setting
$$
\betaHat_w=(X\trsp D_w X)^{-1}X\trsp D_w Y=\beta+(X\trsp D_w X)^{-1}X\trsp D_w \eps\;.
$$

More generally, by a simple change of variables, since $w_i>0$ and $\textrm{span}(\{X_i\}_{i=1}^n)=\mathbb{R}^p$, when $\Sigma$ is invertible, 
$$
\betaHat_{w,\rho}(y_i;\{X_i\}_{i=1}^n;\eps_i)-\beta=\Sigma^{-1/2} \betaHat_{w,\rho}(\eps_i;\Sigma^{-1/2} X_i;\eps_i)\;.
$$
If $b_\rho$ is the corresponding bagged estimate, obtained by averaging $\betaHat_{w,\rho}$ over $w$'s, we also have 
$$
b_\rho(y_i;\{X_i\}_{i=1}^n;\eps_i)-\beta=\Sigma^{-1/2 }b_{\rho}(\eps_i;\Sigma^{-1/2} X_i;\eps_i)\;.
$$
Hence, we also have 
$$
\betaHat_{w,\rho}(y_i;\{X_i\}_{i=1}^n;\eps_i)-b_\rho(y_i;\{X_i\}_{i=1}^n;\eps_i)=\Sigma^{-1/2} \left[\betaHat_{w,\rho}(\eps_i;\Sigma^{-1/2} X_i;\eps_i)-b_{\rho}(\eps_i;\Sigma^{-1/2} X_i;\eps_i)\right]
$$
We further note that since $y_i=\eps_i+X_i\trsp \beta$, $y_i=\eps_i+(\Sigma^{-1/2}X_i)\trsp \Sigma^{1/2}\beta$ and hence 
$$
\betaHat_{w,\rho}(y_i;\Sigma^{-1/2} X_i;\eps_i)=\Sigma^{1/2} \beta+\betaHat_{w,\rho}(\eps_i;\Sigma^{-1/2} X_i;\eps_i)\;.
$$

The previous equation clearly implies that, if $v$ is a fixed vector and $u_1=\Sigma^{-1/2} v$
\begin{gather*}
v\trsp (\betaHat^*_{\rho}(y_i;\{X_i\}_{i=1}^n;\beta)-b_{\rho}(y_i;\{X_i\}_{i=1}^n;\beta))\\
=u_1 \trsp \left[\betaHat^*_{\rho}(y_i;\Sigma^{-1/2} X_i;\eps_i)-b_{\rho}(y_i;\Sigma^{-1/2} X_i;\eps_i)\right]\;, \\
=u_1 \trsp \left[\betaHat^*_{\rho}(\eps_i;\Sigma^{-1/2} X_i;\eps_i)-b_{\rho}(\eps_i;\Sigma^{-1/2} X_i;\eps_i)\right]\;. 
\end{gather*}

We note that if $\scov{X_i}=\Sigma$, the last line in the previous display corresponds to the bootstrap distribution of our estimator in the null case where $\beta=0$ and $\Sigma=\id_p$, but $v$ has been replaced by $u_1=\Sigma^{-1/2} v$. 
This shows that understanding the bootstrap properties of $v\trsp (\betaHat^*_\rho-b_\rho)$ in the null case $\scov{X_i}=\Sigma$ and $\beta=0$ gives the result we seek in the general case of $\Sigma\neq \id_p$ and $\beta\neq 0$. (Here we centered our estimator around the bagged estimator, because it is natural when computing bootstrap variances. The arguments above show that many other centering choices are possible, however.)

The last small issue that one needs to handle is the fact that our computations are done for $v$ with unit norm and $u_1$ may not have unit norm. This is easily handled by simply scaling by the deterministic $\norm{u_1}$. In particular, it is easy to see through simple scaling arguments that 
$$
\frac{\Exp{\var{v\trsp \betaHat^*_\rho(\Sigma;\beta)}}}{\var{v\trsp \betaHat_\rho(\Sigma;\beta)}}=\frac{\Exp{\var{\tilde{u_1}\trsp \betaHat^*_\rho(\id_p;0)}}}{\var{\tilde{u}_1\trsp \betaHat_\rho(\id_p;0)}}\;,
$$
where $\tilde{u}_1=u_1/\norm{u_1}$ has unit norm.

\subsection{Rotational invariance arguments and consequences}
Motivated by the arguments in the previous two subsections, we now consider the null case where $\beta=0$ and $\scov{X_i}=\id_p$. Note that then $y_i=\eps_i$. 
Also, if $X_i$ is replaced by $OX_i$, where $O$ is an orthogonal matrix, and $\betaHat$ is replaced by $O\betaHat$. In other words, 
$$
\betaHat_\rho(\eps_i;\{OX_i\}_{i=1}^n;\eps_i)=O \betaHat_\rho(\eps_i;\{X_i\}_{i=1}^n;\eps_i)\;,.
$$
Note that when the design matrix is such that $OX_i\equalInLaw X_i$ for all $i$ (i.e the distribution of $X_i$'s is invariant by rotation), 
$$
\betaHat_\rho(\eps_i;\{OX_i\}_{i=1}^n;\eps_i)\equalInLaw \betaHat_\rho(\eps_i;\{X_i\}_{i=1}^n;\eps_i)\;.
$$

When $w_i>0$ for all $i$, we see that exactly the same arguments apply to $\betaHat_{w,\rho}(\eps_i;\{X_i\}_{i=1}^n;\eps_i)$ and hence $\betaHat^*_\rho(\eps_i;\{X_i\}_{i=1}^n;\eps_i)$. 
In particular, for any orthogonal matrix $O$, since $X_i\equalInLaw OX_i$,
\begin{align*}
\Exp{\var{v\trsp \betaHat^*_\rho(\eps_i;\{X_i\}_{i=1}^n;\eps_i)}}&=\Exp{\var{v\trsp \betaHat^*_\rho(\eps_i;\{OX_i\}_{i=1}^n;\eps_i)}}\\
&=\Exp{\var{v\trsp O\betaHat^*_\rho(\eps_i;\{X_i\}_{i=1}^n;\eps_i)}}\;.
\end{align*}
This implies that for any unit vector $v$, we have, if $e_1$ is the first canonical basis vector,  
$$
\Exp{\var{v\trsp \betaHat^*_\rho(\eps_i;X_i;\eps_i)}}=\Exp{\var{e_1\trsp \betaHat^*_\rho(\eps_i;X_i;\eps_i)}}\;.
$$
Indeed, we just need to take $O$ to be such that $O\trsp v=e_1$ to prove the above result. 

In the case where $X_i$'s are i.i.d ${\cal N}(0,\id_p)$, we do have $X_i\equalInLaw OX_i$, so the arguments above apply. Therefore, to understand $\Exp{\var{v\trsp \betaHat^*_\rho}}$ in this case it is sufficient to understand $\Exp{\var{e_1\trsp \betaHat^*_\rho}}$. This latter case is the case tackled in the proof of Theorem \ref{thm:bootVariance}. (These rotational invariance arguments are closely related to those in \cite{NEKRobustPaperPNAS2013Published}.)

\begin{center}
\textbf{\textsc{Supplementary Figures}}
\end{center}

\begin{figure}[h]
	\centering
	\subfloat[][$L_1$ loss]{\includegraphics[width=.3\textwidth]{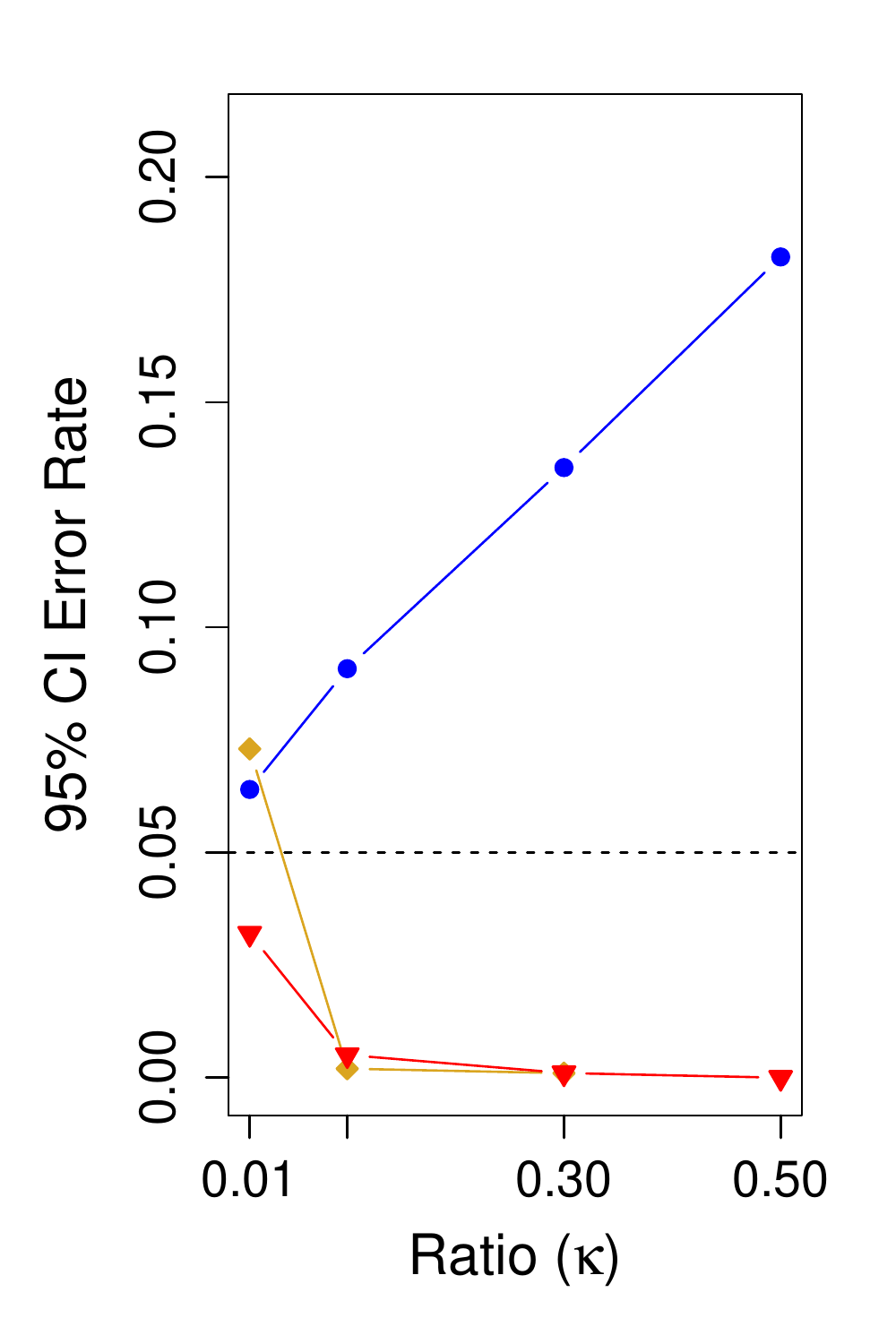} \label{subfig:basicCIErrorLap:L1} }
	\subfloat[][Huber loss]{\includegraphics[width=.3\textwidth]{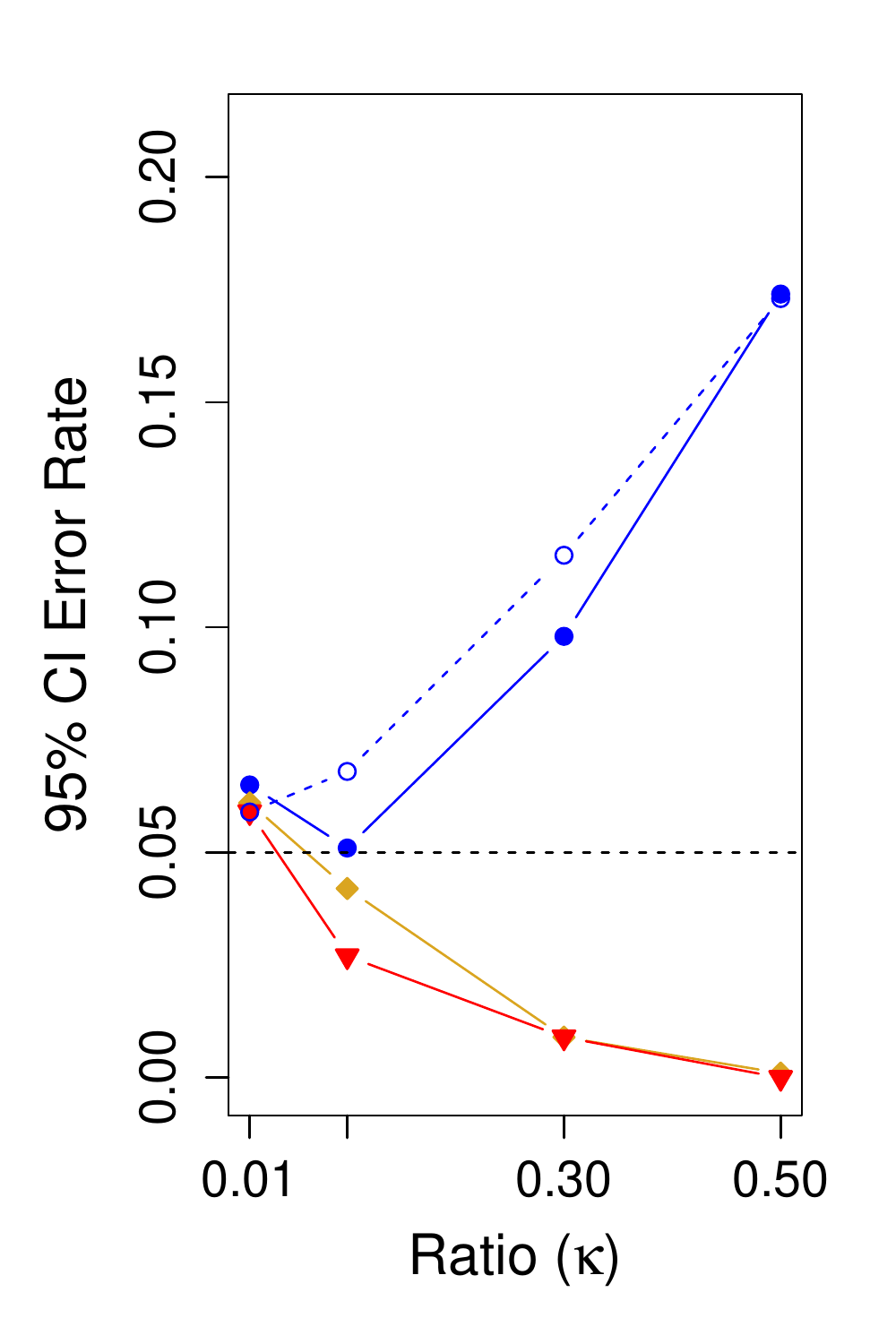} \label{subfig:basicCIErrorLap:Huber} }\\
	\subfloat[][$L_2$ loss]{\includegraphics[width=.3\textwidth]{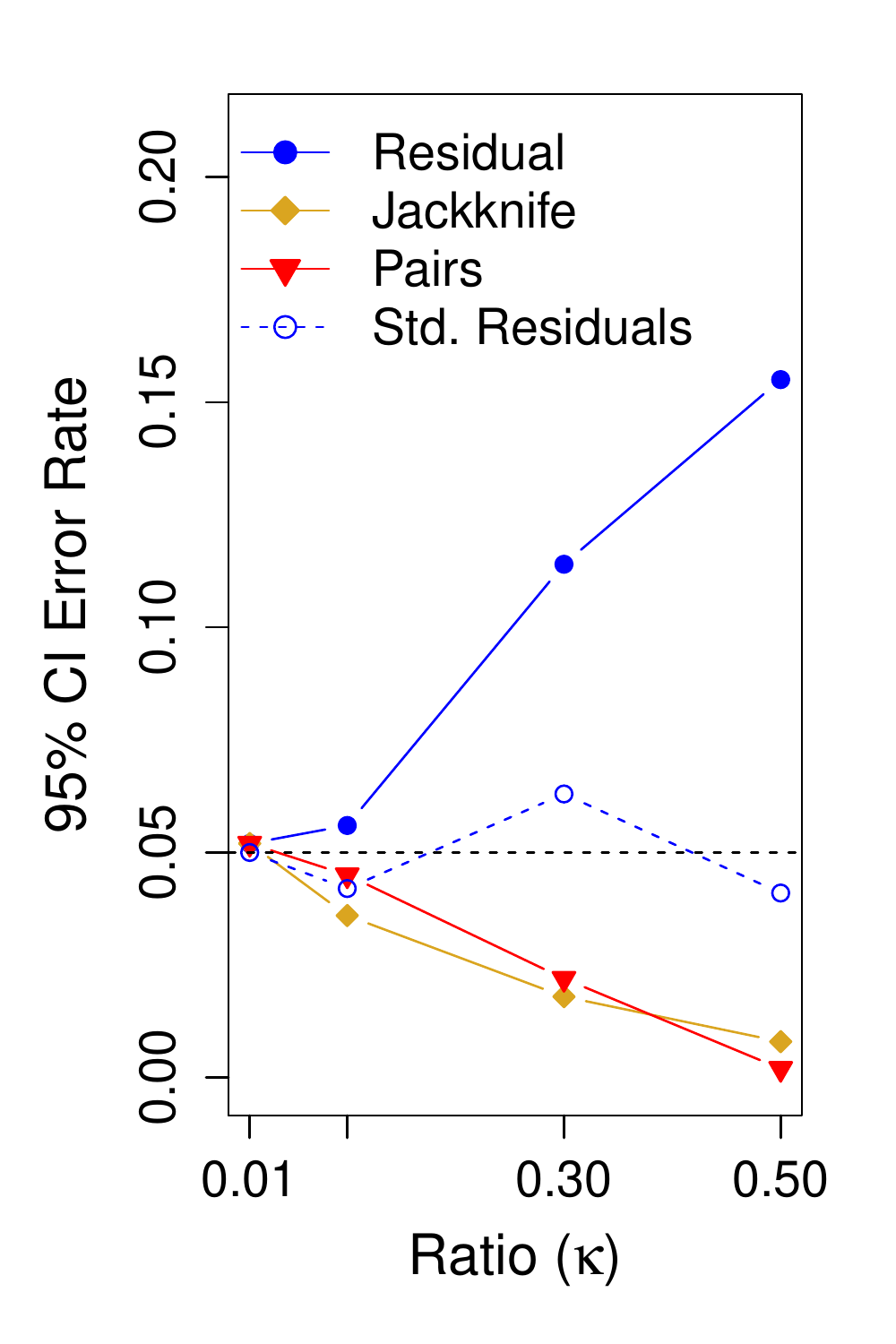}
	 \label{subfig:basicCIErrorLap:L2} }
\caption{
 \textbf{Performance of 95\% confidence intervals of $\beta_1$ (double exponential error): }  Here we show the coverage error rates for 95\% confidence intervals for $n=500$ with the error distribution being double exponential (with $\sigma^2=2$) and i.i.d. normal entries of $X$. See the caption of Figure \ref{fig:basicCIError} for more details. 
}\label{fig:basicCIErrorLap}
\end{figure}

\begin{figure}[h]
	\centering
	\subfloat[][Normal $X$]{\includegraphics[width=.3\textwidth]{errorBasicL2} \label{subfig:basicCIErrorDesign:N} }
	\subfloat[][Ellip. $X$, Unif]{\includegraphics[width=.3\textwidth]{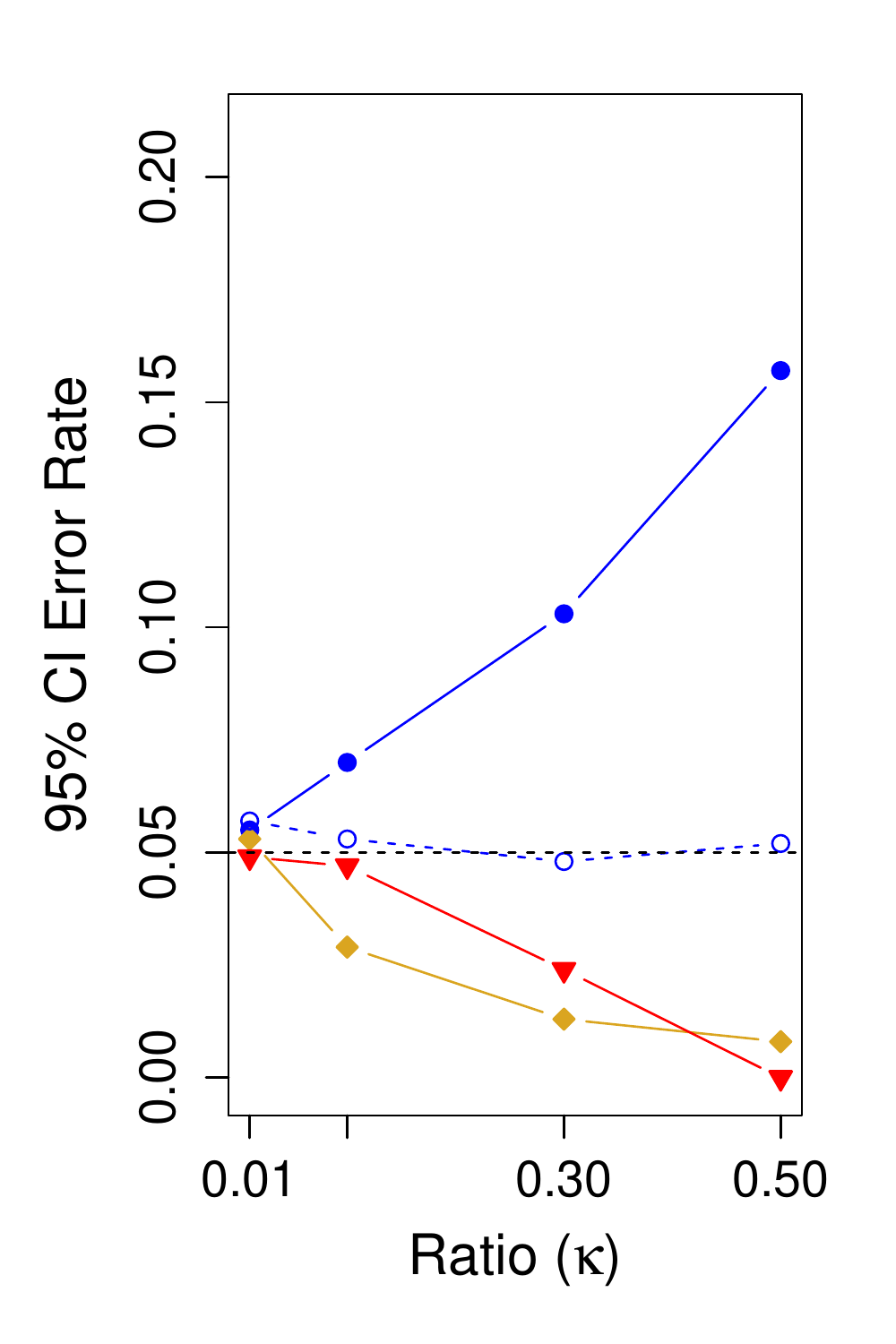} \label{subfig:basicCIErrorDesign:EU} }\\
	\subfloat[][Ellip. $X$, $N(0,1)$]{\includegraphics[width=.3\textwidth]{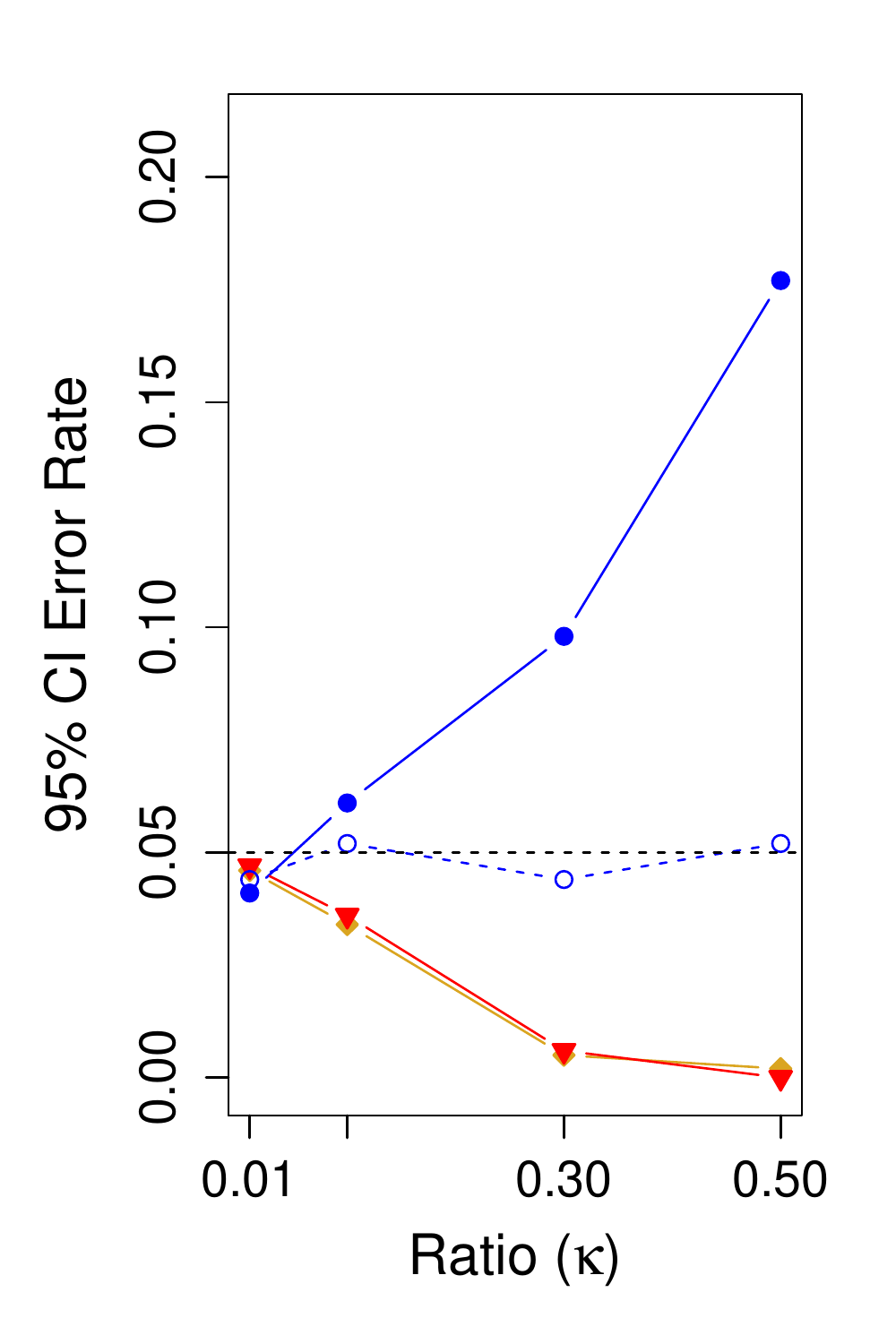} \label{subfig:basicCIErrorDesign:EN} }
	\subfloat[][Ellip. $X$, Exp]{\includegraphics[width=.3\textwidth]{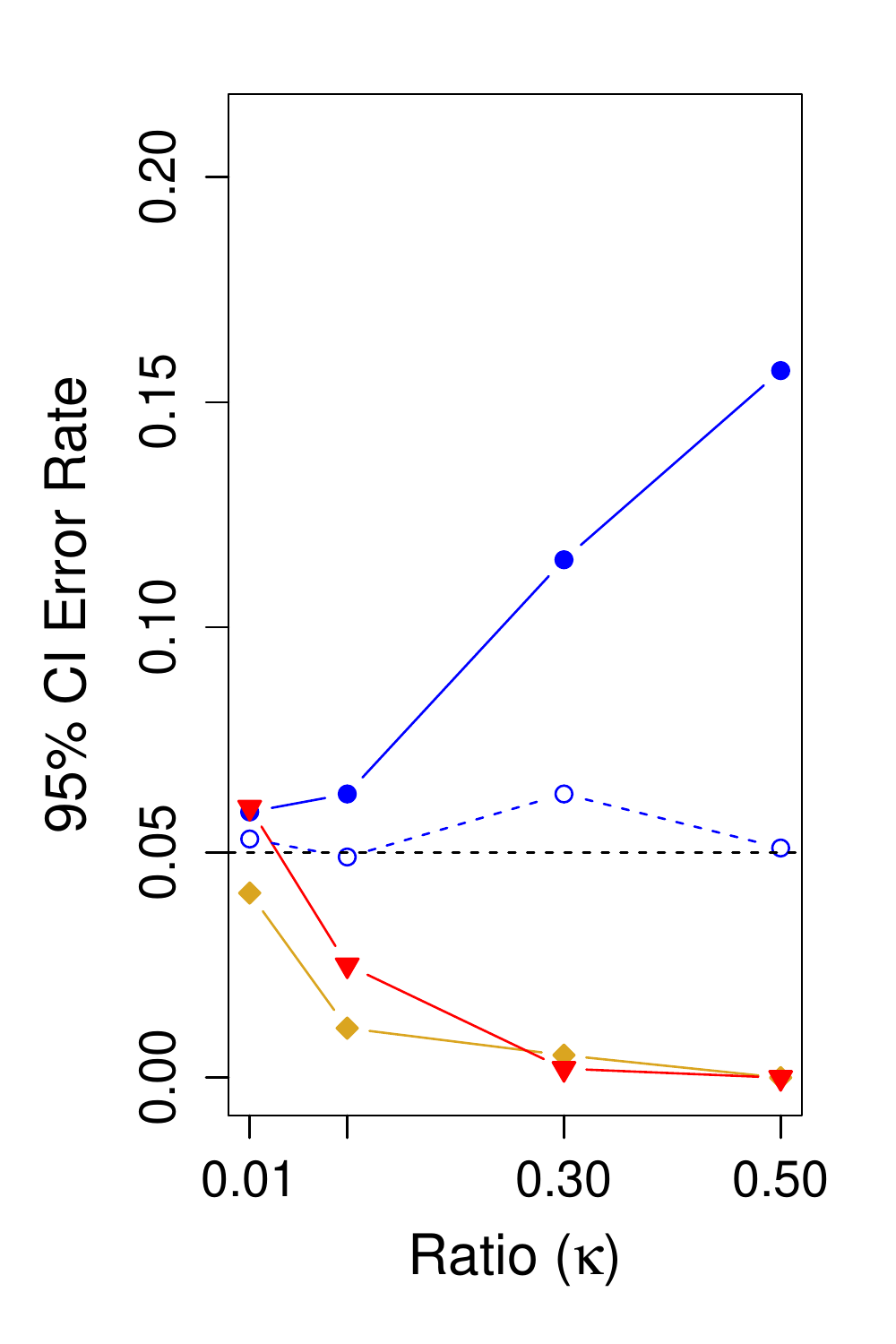} \label{subfig:basicCIErrorDesign:EE} }
\caption{
 \textbf{Performance of 95\% confidence intervals of $\beta_1$ for $L_2$  loss (elliptical design $X$): }  Here we show the coverage error rates for 95\% confidence intervals for $n=500$ with different distributions of the design matrix $X$ using ordinary least squares regression: 
\protect\subref{subfig:basicCIErrorDesign:N} $N(0,1)$, 
\protect\subref{subfig:basicCIErrorDesign:EU} elliptical with $\lambda_i\sim U(.5,1.5)$,  \protect\subref{subfig:basicCIErrorDesign:EN} elliptical with $\lambda_i\sim N(0,1)$, and \protect\subref{subfig:basicCIErrorDesign:EE} elliptical with $Exp(\sqrt{2})$. In all of these plots, the error is distributed $N(0,1)$ and the loss is $L_2$. See the caption of Figure \ref{fig:basicCIError} for additional details. 
}\label{fig:basicCIErrorDesign}
\end{figure}

\begin{figure}[t]
\centering
\centering
\subfloat[][$L_1$ loss]{\includegraphics[width=.3\textwidth]{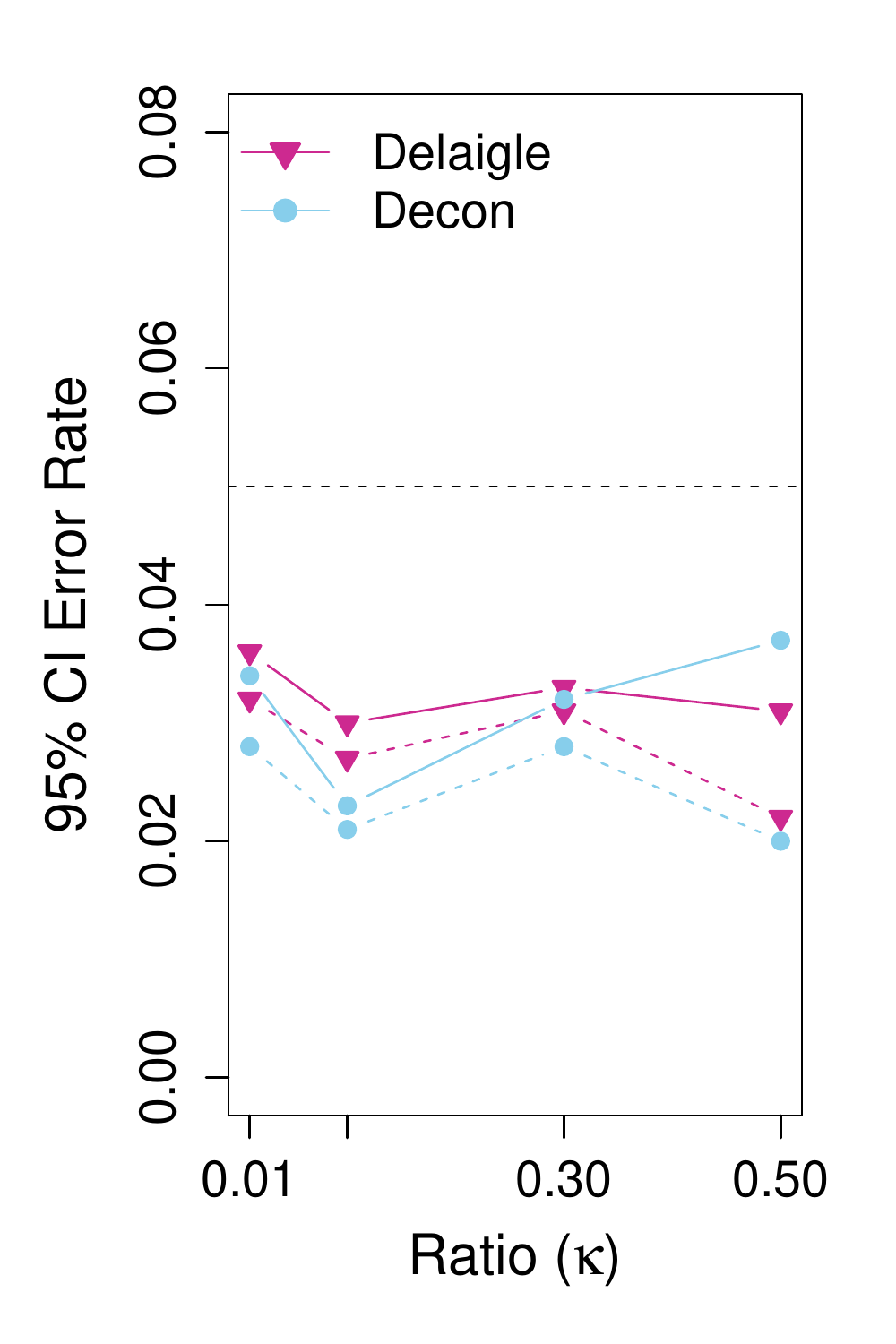}\label{subfig:bandwidth:L1}}
\subfloat[][Huber loss]{\includegraphics[width=.3\textwidth]{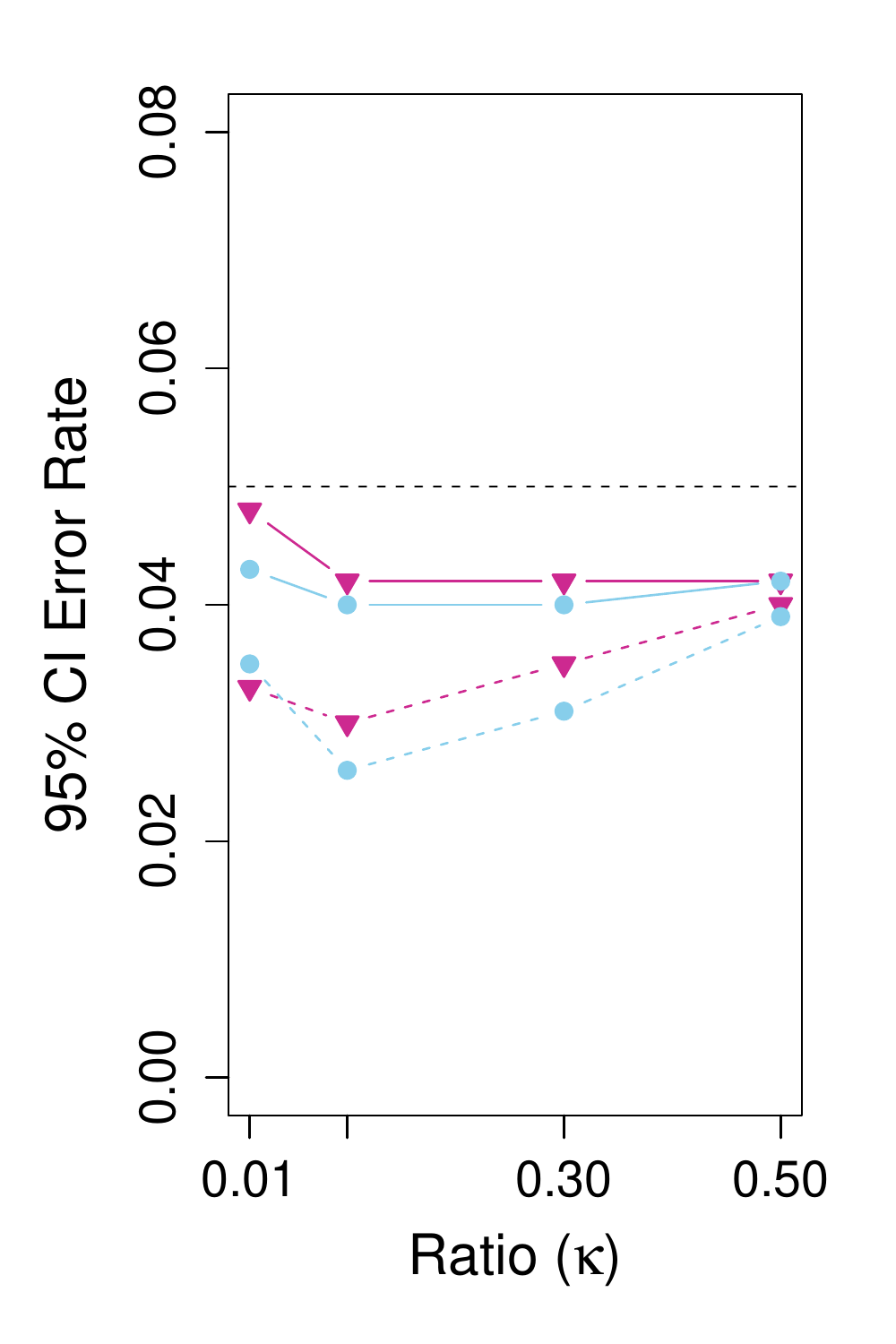} \label{subfig:bandwidth:Huber}}
\caption{\textbf{Different bandwidths for Method 1:} We plotted the error rate of 95\% confidence intervals for the deconvolution bootstrap (Method 1) using two different choices of bandwith: the bw.dboot2 in \texttt{decon} (light blue) or that of \citet{Delaigle2002,Delaigle2004} (maroon). The solid lines refer to bootstrapping by drawing $\{\eps^*_i\}_{i=1}^{n}$ as a i.i.d draws from $\hat{\epsdist}$; the dashed lines refer to $\{\eps^*_i\}_{i=1}^{n}$ drawn from repeated resampling of a single draw ($\{\hat{\eps}_i\}_{i=1}^{n}$) from $\hat{\epsdist}$. See section \ref{subsec:deconvbootmethods} below. Note that the y-axis for these plots is different than that shown in the main text. 
\label{supfig:bandwidth}
}
\end{figure}

\clearpage
\begin{center}
\textbf{\textsc{Supplementary Tables}}
\end{center}
\begin{table}
	\centering
	\subfloat[][$L_1$ loss]{
\begin{tabular}{rrrr}
  \hline
 & Residual & Jackknife & Pairs \\ 
  \hline
r=0.01 & 0.063 & 0.089 & 0.035 \\ 
  r=0.1 & 0.113 & 0.005 & 0.013 \\ 
  r=0.3 & 0.137 & 0.000 & 0.003 \\ 
  r=0.5 & 0.210 & 0.000 & 0.000 \\ 
   \hline
\end{tabular}

	 }\\
	\subfloat[][Huber loss]{
\begin{tabular}{rrrr}
  \hline
 & Residual & Jackknife & Pairs \\ 
  \hline
r=0.01 & 0.057 & 0.054 & 0.054 \\ 
  r=0.1 & 0.068 & 0.037 & 0.041 \\ 
  r=0.3 & 0.090 & 0.015 & 0.004 \\ 
  r=0.5 & 0.198 & 0.002 & 0.000 \\ 
   \hline
\end{tabular}

	 }\\
	\subfloat[][$L_2$ loss]{
\begin{tabular}{rrrr}
  \hline
 & Residual & Jackknife & Pairs \\ 
  \hline
r=0.01 & 0.040 & 0.061 & 0.040 \\ 
  r=0.1 & 0.060 & 0.034 & 0.052 \\ 
  r=0.3 & 0.098 & 0.021 & 0.033 \\ 
  r=0.5 & 0.188 & 0.005 & 0.000 \\ 
   \hline
\end{tabular}

	 }
\caption{\textbf{Error rate of 95\% confidence intervals of $\beta_1$ for $n=500$} This tables give the exact error rates plotted in Figure \ref{fig:basicCIError}. See figure caption for more details. }
\label{tab:basicCIError}
\end{table}

\begin{table}
	\centering
\begin{tabular}{rrrr}
  \hline
 & Normal & Ellip. Normal & Ellip. Exp \\ 
  \hline
r=0.01 & 1.001 & 1.001 & 1.017 \\ 
  r=0.1 & 1.016 & 1.090 & 1.156 \\ 
  r=0.3 & 1.153 & 1.502 & 1.655 \\ 
  r=0.5 & 1.737 & 3.123 & 3.635 \\ 
   \hline
\end{tabular}

\caption{\textbf{Ratio of CI Width of Pairs compared to Standard.} This tables give the ratio of the average width of the confidence intervals from pairs bootstrapping to the average for the standard interval given by theoretical results, i.e. using $var(\hat{\beta})=\sigma^2 (X'X)^{-1}$ and creating standard confidence interval. These values were used for Figure \ref{fig:CIWidth} in the text. }
\label{tab:CIWidth}
\end{table}

\begin{table}
	\centering
	\subfloat[][$L_1$ loss]{
\begin{tabular}{rrrr}
  \hline
 & Residual & Std. Pred Error & Deconv \\ 
  \hline
r=0.01 & 0.064 & 0.042 & 0.031 \\ 
  r=0.1 & 0.091 & 0.028 & 0.018 \\ 
  r=0.3 & 0.135 & 0.026 & 0.022 \\ 
  r=0.5 & 0.182 & 0.030 & 0.035 \\ 
   \hline
\end{tabular}

	 }\\
	\subfloat[][Huber loss]{
\begin{tabular}{rrrr}
  \hline
 & Residual & Std. Pred Error & Deconv \\ 
  \hline
r=0.01 & 0.065 & 0.048 & 0.036 \\ 
  r=0.1 & 0.051 & 0.054 & 0.039 \\ 
  r=0.3 & 0.098 & 0.035 & 0.037 \\ 
  r=0.5 & 0.174 & 0.034 & 0.036 \\ 
   \hline
\end{tabular}

	 }\\
\caption{\textbf{Error rate of 95\% confidence intervals using predicted errors.} This tables give the exact error rates plotted in Figure \ref{fig:bootErrorLeaveOut}. See figure caption for more details. }
\label{tab:bootErrorLeaveOut}
\end{table}

\begin{table}
	\centering
	\subfloat[][$L_1$ loss]{
\begin{tabular}{rrrr}
  \hline
 & Residual & Jackknife & Pairs \\ 
  \hline
r=0.01 & 0.064 & 0.073 & 0.032 \\ 
  r=0.1 & 0.091 & 0.002 & 0.005 \\ 
  r=0.3 & 0.135 & 0.001 & 0.001 \\ 
  r=0.5 & 0.182 &  & 0.000 \\ 
   \hline
\end{tabular}

	 }\\
	\subfloat[][Huber loss]{
\begin{tabular}{rrrr}
  \hline
 & Residual & Jackknife & Pairs \\ 
  \hline
r=0.01 & 0.065 & 0.061 & 0.059 \\ 
  r=0.1 & 0.051 & 0.042 & 0.027 \\ 
  r=0.3 & 0.098 & 0.009 & 0.009 \\ 
  r=0.5 & 0.174 & 0.001 & 0.000 \\ 
   \hline
\end{tabular}

	 }\\
	\subfloat[][$L_2$ loss]{
\begin{tabular}{rrrr}
  \hline
 & Residual & Jackknife & Pairs \\ 
  \hline
r=0.01 & 0.052 & 0.052 & 0.052 \\ 
  r=0.1 & 0.056 & 0.036 & 0.045 \\ 
  r=0.3 & 0.114 & 0.018 & 0.022 \\ 
  r=0.5 & 0.155 & 0.008 & 0.002 \\ 
   \hline
\end{tabular}

	 }
\caption{\textbf{Error rate of 95\% confidence intervals of $\beta_1$ for double exponential error} This tables give the exact error rates plotted in Figure \ref{fig:basicCIErrorLap}. See figure caption for more details. }
\label{tab:basicCIErrorLap}
\end{table}

\begin{table}
	\centering
	\subfloat[][Ellipical, Unif]{
\begin{tabular}{rrrr}
  \hline
 & Residual & Jackknife & Pairs \\ 
  \hline
r=0.01 & 0.055 & 0.053 & 0.049 \\ 
  r=0.1 & 0.070 & 0.029 & 0.047 \\ 
  r=0.3 & 0.103 & 0.013 & 0.024 \\ 
  r=0.5 & 0.157 & 0.008 & 0.000 \\ 
   \hline
\end{tabular}

	 }\\
	\subfloat[][Elliptical, Normal]{
\begin{tabular}{rrrr}
  \hline
 & Residual & Jackknife & Pairs \\ 
  \hline
r=0.01 & 0.041 & 0.046 & 0.047 \\ 
  r=0.1 & 0.061 & 0.034 & 0.036 \\ 
  r=0.3 & 0.098 & 0.005 & 0.006 \\ 
  r=0.5 & 0.177 & 0.002 & 0.000 \\ 
   \hline
\end{tabular}

	 }\\
	\subfloat[][Elliptical, Exp]{
\begin{tabular}{rrrr}
  \hline
 & Residual & Jackknife & Pairs \\ 
  \hline
r=0.01 & 0.059 & 0.041 & 0.060 \\ 
  r=0.1 & 0.063 & 0.011 & 0.025 \\ 
  r=0.3 & 0.115 & 0.005 & 0.002 \\ 
  r=0.5 & 0.157 & 0.000 & 0.000 \\ 
   \hline
\end{tabular}

	 }
\caption{\textbf{Error rate of 95\% confidence intervals of $\beta_1$ for elliptical design $X$} This tables give the exact error rates plotted in Figure \ref{fig:basicCIErrorDesign}. See figure caption for more details. }
\label{tab:basicCIErrorDesign}
\end{table}

\begin{table} 
\begin{center}
\begin{tabular}{|*{11}{c}|}\hline 
$\kappa$ & 0.05 &   0.10  &  0.15 &   0.20 &   0.25 &   0.30 &   0.35 &   0.40 &   0.45 &   0.50 \\ \hline
$\alpha(\kappa)$ & 0.9938  &  0.9875  &  0.9812  &  0.9688  &  0.9562  &  0.9426  &  0.9352  &  0.9277  &  0.9222  &  0.9203 \\ \hline
\end{tabular}
\end{center}
\caption{Values of $\alpha(\kappa)$ to use to fix the variance estimation issue in high-dimensional pairs-bootstrap}
\label{tab:GoodWeightProportions}
\end{table}

\begin{table}
	\centering
	\subfloat[][Jackknife]{
\begin{tabular}{rrrr}
  \hline
 & L2 & Huber & L1 \\ 
  \hline
r=0.01 & 0.964 & 0.991 & 2.060 \\ 
  r=0.1 & 1.115 & 1.173 & 5.432 \\ 
  r=0.3 & 1.411 & 1.613 & 10.862 \\ 
  r=0.5 & 1.986 & 2.671 & 14.045 \\ 
   \hline
\end{tabular}

	 }\\
	\subfloat[][Pairs Bootstrap]{
\begin{tabular}{rrrr}
  \hline
 & L2 & Huber & L1 \\ 
  \hline
r=0.01 & 1.078 & 0.923 & 1.081 \\ 
  r=0.1 & 1.041 & 1.098 & 1.351 \\ 
  r=0.3 & 1.333 & 1.954 & 2.001 \\ 
  r=0.5 & 2.808 & 4.507 & 3.156 \\ 
   \hline
\end{tabular}

	 }\\
\caption{\textbf{Over estimation of variance for Pairs bootstrap and Jackknife} This tables give the median values of the boxplots plotted in Figures \ref{fig:bootOverEstFactor} and \ref{fig:bootOverEstFactorJack}. See relevant figure captions for more details. }
\label{tab:bootOverEstFactor}
\end{table}

\clearpage
\begin{center}
\textbf{\textsc{Bibliography}}
\end{center}

\bibliographystyle{plain}
\bibliography{research,Bioresearch,addBoot}

\end{document}